\newtheorem{assumption}{Assumption}[section]
\newtheorem{definition}{Definition}[section]
\def\header{\vspace{0.8mm} \noindent}
\def\tblcapup{\vspace{0mm}}
\def\tblcapdown{\vspace{1mm}}
\def\tbldown{\vspace{-0mm}}
\newcommand{\pushright}[1]{\ifmeasuring@#1\else\omit\hfill$\displaystyle#1$\fi\ignorespaces}
\newcommand{\pushleft}[1]{\ifmeasuring@#1\else\omit$\displaystyle#1$\hfill\fi\ignorespaces}
\def\e{\varepsilon}
\def\epi{\bm{\hat{\pi}}}
\def\vpi{\bm{\pi}}
\def\z{\bm{z}}
\def\q{\bm{q}}
\def\eq{\bm{\hat{q}}}
\def\r{\bm{r}}
\def\er{\bm{\hat{r}}}
\def\A{\mathbf{A}}
\def\D{\mathbf{D}}
\def\X{\mathbf{X}}
\def\Z{\mathbf{Z}}
\def\E{\mathrm{E}}
\def\Var{\mathrm{Var}}
\renewcommand{\vec}{\bm}
  \providecommand\BibTeX{{%
    \normalfont B\kern-0.5em{\scshape i\kern-0.25em b}\kern-0.8em\TeX}}}
\begin{document}
\fancyhead{}
\title{Approximate Graph Propagation}
\subtitle{[Technical Report]}

\author{Hanzhi Wang}
\email{hanzhi_wang@ruc.edu.cn}
\affiliation{%
  \institution{Renmin University of China}
  \city{Beijing}
  \country{China}
}

\author{Mingguo He}
\email{mingguo@ruc.edu.cn}
 \affiliation{%
   \institution{Renmin University of China}
   \city{Beijing}
   \country{China}
}

\author{Zhewei Wei}
\email{zhewei@ruc.edu.cn}
\authornote{Zhewei Wei is the corresponding author. Work partially done at Gaoling School of Artificial Intelligence, Beijing Key Laboratory of Big Data Management and Analysis Methods, MOE Key Lab DEKE, Renmin University of China, and Pazhou Lab, Guangzhou, 510330, China.}
\affiliation{%
  \institution{Renmin University of China}
  \city{Beijing}
  \country{China}
}

\author{Sibo Wang}
\email{swang@se.cuhk.edu.hk}
\affiliation{%
 \institution{The Chinese University of Hong Kong}
 \city{Hong Kong}
 \country{China}
}

\author{Ye Yuan}
\email{yuan-ye@bit.edu.cn}
\affiliation{%
  \institution{Beijing Institute of Technology}
  \city{Beijing}
  \country{China}}

\author{Xiaoyong Du}
\author{Ji-Rong Wen}
\email{duyong, jrwen@ruc.edu.cn}
\affiliation{%
  \institution{Renmin University of China}
  \city{Beijing}
  \country{China}
}





\begin{abstract}
Efficient computation of node proximity queries such as transition probabilities, Personalized PageRank, and Katz are of fundamental importance in various graph mining and learning tasks. In particular, several recent works leverage fast node proximity computation to improve the scalability of Graph Neural Networks (GNN). However, prior studies on proximity computation and GNN feature propagation are on a case-by-case basis, with each paper focusing on a particular proximity measure. 

In this paper, we propose {Approximate Graph Propagation (AGP)}, a unified randomized algorithm that computes various proximity queries and GNN feature propagation, including transition probabilities, Personalized PageRank, heat kernel PageRank, Katz, SGC, GDC, and APPNP. Our algorithm provides a theoretical bounded error guarantee and runs in almost optimal time complexity. We conduct an extensive experimental study to demonstrate AGP's effectiveness in two concrete applications: local clustering with heat kernel PageRank and node classification with GNNs. Most notably, we present an empirical study on a  billion-edge graph Papers100M, the largest publicly available  GNN dataset so far. The results show that AGP can significantly improve various existing GNN models' scalability without sacrificing prediction accuracy. 


\end{abstract}


\begin{CCSXML}
<ccs2012>
   <concept>
       <concept_id>10002950.10003624.10003633.10010917</concept_id>
       <concept_desc>Mathematics of computing~Graph algorithms</concept_desc>
       <concept_significance>500</concept_significance>
       </concept>
   <concept>
       <concept_id>10002951.10003227.10003351</concept_id>
       <concept_desc>Information systems~Data mining</concept_desc>
       <concept_significance>500</concept_significance>
       </concept>
 </ccs2012>
\end{CCSXML}

\ccsdesc[500]{Mathematics of computing~Graph algorithms}
\ccsdesc[500]{Information systems~Data mining}

\keywords{node proximity queries, local clustering, Graph Neural Networks}


\pagestyle{plain}

\maketitle

\section{Introduction} \label{sec:intro}
Recently, significant research effort has been devoted to compute {\em node proximty queries} such as Personalized PageRank~\cite{page1999pagerank,jung2017bepi,Wang2017FORA,wei2018topppr}, heat kernel PageRank~\cite{chung2007HKPR,yang2019TEA} and the Katz score~\cite{katz1953Katz}. 
Given a node $s$ in an undirected graph $G=(V,E)$ with $|V|=n$ nodes and $|E|=m$ edges, a node proximity query  returns an $n$-dimensional vector $\vec{\pi}$ such that $\vec{\pi}(v)$ represents the importance of node $v$ with respect to $s$.  
For example, a widely used proximity measure is the $ L $-th transition probability vector. It captures the $ L $-hop neighbors' information by computing the probability that a $ L $-step random walk from a given source node $s$ reaches each node in the graph. 
The vector form is given by  $	\vec{\pi}= \left(\mathbf{A} \mathbf{D}^{-1} \right)^L \cdot \vec{e}_s,$
where $\A $ is the adjacency matrix, $\D$ is the diagonal degree matrix with $\D(i,i) = \sum_{j \in V} \A(i,j)$, and $\vec{e}_s$ is the one-hot vector with $\vec{e}_s(s)=1$ and $\vec{e}_s(v)=0, v\neq s$. 
Node proximity queries find numerous applications in the area of graph mining, such as link prediction in social networks~\cite{backstrom2011supervised}, personalized graph search techniques~\cite{jeh2003scaling}, fraud detection~\cite{andersen2008robust}, and collaborative filtering in recommender networks~\cite{gupta2013wtf}. 

In particular, a recent trend in Graph Neural Network (GNN)  researches~\cite{wu2019SGC,klicpera2019GDC,Klicpera2018APPNP} is to employ node proximity queries to build scalable GNN models.
A typical example is SGC~\cite{wu2019SGC}, which simplifies the original Graph Convolutional Network (GCN)~\cite{kipf2016GCN} with a linear propagation process. More precisely, given a self-looped graph and an  $n \times d$ feature matrix $\mathbf{X}$, 
SGC takes the multiplication of the $L$-th normalized transition probability matrix $\left(\mathbf{D}^{-\frac{1}{2}}  \mathbf{A}  \mathbf{D}^{-\frac{1}{2}} \right)^L$ and the feature matrix $\mathbf{X}$
to form the representation matrix 
\begin{equation}
\begin{aligned}\label{eqn:SGC}
\mathbf{Z}=\left(\mathbf{D}^{-\frac{1}{2}} \cdot \mathbf{A} \cdot \mathbf{D}^{-\frac{1}{2}} \right)^L \cdot \vec{X}.
\end{aligned}	
\end{equation}
If we treat each column of the feature matrix $\X$ as a graph signal vector $\bm{x}$, then the representation matrix $\mathbf{Z}$ can be derived by the augment of $d$ vectors $\bm{\pi}\hspace{-0.5mm}= \hspace{-0.5mm}\left(\mathbf{D}^{-\frac{1}{2}} \mathbf{A}  \mathbf{D}^{-\frac{1}{2}} \right)^L \hspace{-1mm}\cdot \bm{x}$. SGC feeds $\mathbf{Z}$ into a logistic regression or a standard neural network for downstream machine learning tasks such as node classification and link prediction. 
The $L$-th normalized transition probability matrix $\left(\mathbf{D}^{-\frac{1}{2}}  \mathbf{A}  \mathbf{D}^{-\frac{1}{2}} \right)^L$ can be easily generalized to other node proximity models, such as PPR used in APPNP~\cite{Klicpera2018APPNP}, PPRGo~\cite{bojchevski2020scaling} and GBP~\cite{chen2020GBP}, and HKPR used in~\cite{klicpera2019GDC}. 
Compared to the original GCN~\cite{kipf2016GCN} which uses a full-batch training process and stores the representation of each node in the GPU memory, these proximity-based GNNs decouple prediction and propagation and thus allows mini-batch training to improve the scalability of the models. 

\header{\bf Graph Propagation.} 
  To model various proximity measures and GNN propagation formulas, we consider the following unified {\em graph propagation equation}:   
\vspace{-2mm}
\begin{equation}\label{eqn:pi_gen}
\vspace{-1mm}
	\begin{aligned}
		\vec{\pi}=\sum_{i=0}^\infty w_i \cdot \left(\mathbf{D}^{-a} \mathbf{A}  \mathbf{D}^{-b} \right)^i \cdot \vec{x}, 
	\end{aligned}
\end{equation}
where $\mathbf{A}$  denotes the adjacency matrix, $\mathbf{D}$ denotes the diagonal degree matrix, $a$ and $b$ are the Laplacian parameters that take values in $[0,1]$, the sequence of $w_i$ for $i=0,1,2,...$ is the weight sequence and $\bm{x}$ is an $n$ dimensional vector. Following the convention of Graph Convolution Networks~\cite{kipf2016GCN}, we refer to $\mathbf{D}^{-a}\mathbf{A}\mathbf{D}^{-b}$ as the {\em propagation matrix}, and $\vec{x}$ as the {\em graph signal vector}.

A key feature of the graph propagation equation~\eqref{eqn:pi_gen} is that we can manipulate parameters $a$, $b$, $w_i$ and $\vec{x}$ to obtain different proximity measures. For example, if we set  $a=0, b=1, w_L=1,   w_i=0$ for $i=0,\ldots, L-1$, and $\vec{x}=\vec{e}_s$, then $\vec{\pi}$ becomes the $L$-th transition probability vector from node $s$. Table~\ref{tbl:propagation} summarizes the proximity measures and GNN models that can be expressed by Equation~\eqref{eqn:pi_gen}.

\header{\bf Approximate Graph Propagation (AGP). } 
In general, it is computational infeasible to compute Equation~\eqref{eqn:pi_gen}
exactly as the summation goes to infinity. Following~\cite{bressan2018sublinear,wang2020RBS},   we will consider an approximate  version of the graph propagation equation~\eqref{eqn:pi_gen}:
\begin{definition}[Approximate propagation with relative error]\label{def:pro-relative}
	Let $\vec{\pi}$ be the graph propagation vector defined in Equation~\eqref{eqn:pi_gen}.
	Given an error threshold $\delta$, an approximate propagation algorithm has to return an estimation vector $\hat{\vec{\pi}}$, such that  for any $v \in V$, with  $|\vec{\pi}(v)|>\delta$, we have 
	\vspace{-2mm}
	\begin{align}\nonumber
	\vspace{-2mm}
	\left|\vec{\pi}(v)-\hat{\vec{\pi}}(v)\right| \leq \frac{1}{10} \cdot \vec{\pi}(v)
	\end{align}
	 with at least a constant probability (e.g. $99\%$).
\end{definition}
We note that some previous works~\cite{yang2019TEA,Wang2017FORA} consider the guarantee $\left|\vec{\pi}(v)-\hat{\vec{\pi}}(v)\right| \leq \varepsilon_r \cdot \vec{\pi}(v)$ with probability at least $1-p_f$, where $\varepsilon_r$ is the relative error parameter and $p_f$ is the fail probability.  However, $\varepsilon_r$ and $p_f$ are set to be constant in these works. For sake of simplicity and readability, we set $\varepsilon_r = 1/10$ and $p_f = 1\%$ and introduce only one error parameter $\delta$, following the setting in~\cite{bressan2018sublinear,wang2020RBS}.



\header{\bf Motivations.} 
Existing works on proximity computation and GNN feature propagation are on a case-by-case basis, with each paper focusing on a particular proximity measure. For example, despite the similarity between Personalized PageRank and heat kernel PageRank, the two proximity measures admit two completely different sets of algorithms (see~\cite{wei2018topppr,Wang2016HubPPR,jung2017bepi,Shin2015BEAR,coskun2016efficient} for Personalized PageRank and~\cite{yang2019TEA,chung2007HKPR,chung2018computing,kloster2014heat} for heat kernel PageRank). 
Therefore, a natural question is 
\begin{quote}
	{\em Is there a universal algorithm that computes the approximate graph propagation with near optimal cost? }
\end{quote}

\begin{table*}[t]
	\centering
	\renewcommand{\arraystretch}{1.5}
	\tblcapup
	\caption{Typical graph propagation equations.}
	\vspace{-4mm}
	\tblcapdown
	\begin{small}
		\begin{tabular}{|c|c|c|c|c|c|c|} \hline

					~&{\bf Algorithm} & {\bf $\boldsymbol{a}$}& {\bf $\boldsymbol{b}$} & {\bf $\boldsymbol{w_i}$} & {\bf $\boldsymbol{\vec{x}}$} & {\bf Propagation equation} \\ \cline{1-7}
			\multirow{6}{*}{Proximity} &	{$L$-hop transition probability} & $0$ & $1$ & $w_i=0 (i \neq L), w_L=1$ & one-hot vector $\vec{e}_s$ & $\vec{\pi}=\left(\mathbf{A} \mathbf{D}^{-1} \right)^L \cdot \vec{e}_s $\\ \cline{2-7}
			~&{PageRank}~\cite{page1999pagerank} & $0$ & $1$& $\alpha \left( 1-\alpha \right)^i $ & $\frac{1}{n}\cdot \vec{1}$ & $\vec{\pi}=\sum_{i=0}^\infty \alpha \left( 1-\alpha \right)^i \hspace{-1mm}\cdot\hspace{-0.5mm} \left(\mathbf{A} \mathbf{D}^{-1} \right)^i \hspace{-1mm}\cdot \hspace{-0.5mm} \frac{\vec{1}}{n}$ \\ \cline{2-7}
			~&{Personalized PageRank}~\cite{page1999pagerank} &$0$ & $1$ & $\alpha \left( 1-\alpha \right)^i $ & teleport probability distribution $x$ & $\vec{\pi}=\sum_{i=0}^\infty \alpha \left( 1-\alpha \right)^i \cdot \left(\mathbf{A} \mathbf{D}^{-1} \right)^i \cdot \bm{x} $ \\ \cline{2-7}
			~&{single-target PPR}~\cite{lofgren2013personalized} & $1$ & $0$ & $\alpha \left( 1-\alpha \right)^i $ & one-hot vector $\vec{e}_v$ & $\vec{\pi}=\sum_{i=0}^\infty \alpha \left( 1-\alpha \right)^i \cdot \left(\mathbf{D}^{-1} \mathbf{A} \right)^i \cdot \vec{e}_v $\\ \cline{2-7}
			~&{heat kernel PageRank}~\cite{chung2007HKPR}  &$0$ & $1$ & $e^{-t} \cdot \frac{t^i}{i!} $ & one-hot vector $\vec{e}_s$ & $\vec{\pi}=\sum_{i=0}^\infty e^{-t} \cdot \frac{t^i}{i!}\cdot \left(\mathbf{A} \mathbf{D}^{-1} \right)^i \cdot \vec{e}_s $\\ \cline{2-7}
			~&{Katz}~\cite{katz1953Katz} & $0$ & $0$ & $\beta^i$ & one-hot vector $\vec{e}_s$ & $\vec{\pi}=\sum_{i=0}^\infty \beta^i  \mathbf{A}^i \cdot \vec{e}_s$\\ \hline
			
			\multirow{3}{*}{GNN} & SGC~\cite{wu2019SGC} & $\frac{1}{2}$ & $\frac{1}{2}$ & $w_i=0 (i \neq L), w_L=1$ & the graph signal $\bm{x}$ & $\vec{\pi}=\left(\mathbf{D}^{-\frac{1}{2}} \mathbf{A} \mathbf{D}^{-\frac{1}{2}} \right)^L \cdot \vec{x} $\\ \cline{2-7}
			~& APPNP~\cite{Klicpera2018APPNP}  &  $\frac{1}{2}$ & $\frac{1}{2}$ & $\alpha \left( 1-\alpha \right)^i$ & the graph signal $\bm{x}$ & $\vec{\pi}=\sum_{i=0}^L \alpha \left( 1-\alpha \right)^i \hspace{-1mm}\cdot \hspace{-0.5mm} \left(\mathbf{D}^{-\frac{1}{2}}\mathbf{A} \mathbf{D}^{-\frac{1}{2}} \right)^i \hspace{-1.5mm}\cdot \hspace{-0.5mm} \vec{x} $ \\ \cline{2-7}
			~& GDC~\cite{klicpera2019GDC}  &  $\frac{1}{2}$ & $\frac{1}{2}$ & $e^{-t} \cdot \frac{t^i}{i!} $  & the graph signal $\bm{x}$ & $\vec{\pi}=\sum_{i=0}^L e^{-t} \cdot \frac{t^i}{i!}\hspace{-1mm}\cdot \hspace{-0.5mm} \left(\mathbf{D}^{-\frac{1}{2}}\mathbf{A}\mathbf{D}^{-\frac{1}{2}} \right)^i \hspace{-1.5mm}\cdot \hspace{-0.5mm} \vec{x} $ \\ \hline
		\end{tabular}
	\end{small}
	\label{tbl:propagation}
\end{table*}

\header{\bf Contributions.} In this paper, we present AGP, a UNIFIED randomized algorithm that computes Equation~\eqref{eqn:pi_gen} with almost optimal computation time and theoretical error guarantee. AGP naturally generalizes to  various proximity measures, including transition probabilities, PageRank and Personalized PageRank, heat kernel PageRank, and Katz. We conduct an extensive experimental study to demonstrate the effectiveness of AGP on real-world graphs. We show that AGP outperforms the state-of-the-art methods for local graph clustering with heat kernel PageRank.  We  also show that AGP can scale various GNN models (including SGC~\cite{wu2019SGC}, APPNP~\cite{Klicpera2018APPNP}, and GDC~\cite{klicpera2019GDC}) up on the billion-edge graph Papers100M, which is the largest publicly available  GNN dataset.

\section{Preliminary and Related Work} \label{sec:pre}

\begin{table} [t]
	\centering
	\renewcommand{\arraystretch}{1.3}
	\begin{small}
		\tblcapup
		\caption{Table of notations.}\label{tbl:def-notation}
		\vspace{-4mm}
		\begin{tabular} {|l|p{2.3in}|} \hline
			{\bf Notation} &  {\bf Description}  \\ \hline
			$G=(V,E)$ & undirected graph with vertex and   edge sets $V$ and $E$ \\ \hline
			$n, m$      &   the numbers of nodes and edges in $G$                            \\ \hline
			$\mathbf{A}$, $\mathbf{D}$  & the adjacency matrix and degree matrix of $G$\\ \hline
			$N_u$, $d_u$	& 	the neighbor set and the degree of node $u$\\ \hline
			$a,b$ & the Laplacian parameters\\ \hline
			$\vec{x}$ & the graph signal vector in $\mathcal{R}^n$, $\left\| \vec{x} \right\|_{2}=1$ \\ \hline
			$w_i, Y_i$ & the $i$-th weight  and partial sum $Y_i=\sum_{k=i}^\infty w_k$\\ \hline
			$\vec{\pi},\hat{\vec{\pi}}$ & the true and estimated propagation vectors in $\mathcal{R}^n$\\ \hline
			$\vec{r}^{(i)},\hat{\vec{r}}^{(i)}$ & the true and estimated $i$-hop residue vectors in $\mathcal{R}^n$\\ \hline
			$\vec{q}^{(i)},\hat{\vec{q}}^{(i)}$ & the true and estimated $i$-hop reserve vectors in $\mathcal{R}^n$\\ \hline
			$\delta$ & the relative error threshold \\ \hline
			$\tilde{O}$ & the Big-Oh natation ignoring the log factors \\ \hline
		\end{tabular}
	\end{small}
\end{table}

In this section, we provide a detailed discussion on how the graph propagation equation~\eqref{eqn:pi_gen} models various node proximity measures. 
Table~\ref{tbl:def-notation} summarizes the notations used in this paper.

\header{\bf  Personalized PageRank (PPR) }~\cite{page1999pagerank} is  developed by Google to rank web pages on the world wide web, with the intuition that "a page is important if it is referenced by many pages, or important pages". Given an undirected graph $G=(V,E)$ with $n$ nodes and $m$ edges and a {\em teleporting probability distribution} $\vec{x}$ over the $n$ nodes, the PPR  vector $\vec{\pi}$ is the solution to the following equation:
\vspace{-1mm}
\begin{equation}\label{eqn:PageRank_definition}
\vspace{-2mm}
	\begin{aligned}
	\vspace{-2mm}
		\vec{\pi}=\left( 1-\alpha \right) \cdot \mathbf{A}\mathbf{D}^{-1} \cdot \vec{\pi}+ \alpha\vec{x}.
	\end{aligned}
	\hspace{-2mm}
\end{equation}
The unique solution to Equation~\eqref{eqn:PageRank_definition} is given by $\vec{\pi}=\sum_{i=0}^\infty \alpha \left( 1-\alpha \right)^i \cdot \left(\mathbf{A} \mathbf{D}^{-1} \right)^i \cdot \vec{x}.$

 
The efficient computation of the PPR vector has been extensively studied for the past decades. A simple algorithm to estimate PPR is the Monte-Carlo sampling~\cite{Fogaras2005MC}, which stimulates adequate random walks from the source node $s$ generated following $\bm{x}$ and uses the percentage of the walks that terminate at node $v$ as the estimation of $\vec{\pi}(v)$.  Forward Search~\cite{FOCS06_FS} conducts deterministic local pushes from the source node $s$ to find nodes with large PPR scores. FORA~\cite{Wang2017FORA} combines Forward Search with the Monte Carlo method to improve the computation efficiency. TopPPR~\cite{wei2018topppr} combines Forward Search, Monte Carlo, and Backward Search~\cite{lofgren2013personalized} to obtain a better error guarantee for  top-$k$ PPR estimation. ResAcc~\cite{lin2020index} refines FORA by accumulating probability masses before each push.  However, these methods only work for graph propagation with transition matrix $\mathbf{A} \mathbf{D}^{-1}$. For example, the Monte Carlo method simulates random walks to obtain the estimation, which is not possible with the general propagation matrix $\mathbf{D}^{-a}\mathbf{A}\mathbf{D}^{-b}$. 
 
Another line of research~\cite{lofgren2013personalized,wang2020RBS} studies the {\em single-target PPR}, which asks for the PPR value of every node to a given target node $v$ on the graph.  The single-target PPR vector for a given node $v$ is defined by the slightly different formula: 
\vspace{-2mm}
\begin{equation}\label{eqn:st_PPR_definition}
\vspace{-2mm}
 	\begin{aligned}
 		\vec{\pi} =\left( 1-\alpha \right) \cdot \left(\mathbf{D}^{-1} \mathbf{A} \right)\cdot \vec{\pi}+\alpha \vec{e}_v.  
	\end{aligned}
 \end{equation}
Unlike the single-source PPR vector, the single-target PPR vector $\vec{\pi}$ is not a probability distribution, which means $\sum_{s\in V}\vec{\pi}(s)$ may not equal to $1$. 
We can also derive the unique solution to Equation~\eqref{eqn:st_PPR_definition} by $\vec{\pi} =\sum_{i=0}^\infty \alpha \left( 1-\alpha \right)^i \cdot \left(\mathbf{D}^{-1}\mathbf{A} \right)^i \cdot \vec{e}_v$.

\header{\bf Heat kernal PageRank} is proposed by~\cite{chung2007HKPR} for high quality community detection. For each node $v\in V$ and the seed node $s$, the heat kernel PageRank (HKPR) $\vpi(v)$ equals to the probability that a heat kernal random walk starting from node $s$ ends at node $v$. The length  $L$ of the random walks follows the Poisson distribution with parameter $t$, i.e. $\Pr[L = i] = \frac{e^{-t}t^i}{i!}, i=0,\ldots, \infty$. Consequently, the HKPR vector of a given node $s$ is defined as $\vec{\pi}=\sum_{i=0}^\infty \frac{e^{-t}t^i}{i!}\cdot \left(\mathbf{A} \mathbf{D}^{-1}\right)^i\cdot \vec{e}_s$, 
where $\vec{e}_s$ is the one-hot vector with $\vec{e}_s(s)=1$.  
This equation fits in the framework of our generalized propagation equation~\eqref{eqn:pi_gen} if we set $a=0,b=1, w_i=\frac{e^{-t}t^i}{i!}$, and $\vec{x}=\vec{e}_s$. Similar to PPR, HKPR can be estimated by the Monte Carlo method~\cite{chung2018computing,yang2019TEA} that simulates random walks of Possion distributed length. HK-Relax~\cite{kloster2014heat} utilizes Forward Search to approximate the HKPR vector. TEA~\cite{yang2019TEA} combines Forward Search with  Monte Carlo for a more accurate estimator.

\header{\bf Katz index}~\cite{katz1953Katz} is another popular proximity measurement to evaluate relative importance of nodes on the graph. Given two node $s$ and $v$, the Katz score between $s$ and $v$ is characterized by the number of reachable paths from $s$ to $v$. Thus, the Katz vector for a given source node $s$ can be expressed as $\vec{\pi}=\sum_{i=0}^\infty \mathbf{A}^{i} \cdot \vec{e}_s,$ where $\mathbf{A} $ is the adjacency matrix and $\vec{e}_s$ is the one-hot vector with $\vec{e}_s(s)\hspace{-1mm} =\hspace{-1mm} 1$. However, this summation may not converge due to the large spectral span of $\mathbf{A}$. A commonly used fix up is to apply a penalty of $\beta$ to each step of the path, leading to the following definition: $\vec{\pi}=\sum_{i=0}^\infty \beta^i \cdot \mathbf{A}^{i} \cdot \vec{e}_s$. 
To guarantee convergence, $\beta$ is a constant that set to be smaller than $\frac{1}{\lambda_1}$, where $\lambda_1$ is  the largest eigenvalue of the adjacency matrix $\mathbf{A}$. 
Similar to PPR, the Katz vector can be computed by iterative multiplying $\vec{e}_s$ with $\mathbf{A}$, which runs in $\Tilde{O}(m+n)$ time~\cite{foster2001faster}.  Katz has been widely used in graph analytic and learning tasks such as link prediction~\cite{Liben2003link} and graph embedding~\cite{ou2016asymmetric}. However, the $\Tilde{O}(m+n)$ computation time limits its scalability on large graphs.

\header{\bf Proximity-based Graph Neural Networks.}
Consider an undirected graph $G=(V,E)$, where $V$ and $E$ represent the set of vertices and edges. Each node $v$ is associated with a numeric feature vector of dimension $d$. The $n$ feature vectors form an $n \times d$ matrix $\mathbf{X}$. Following the convention of graph neural networks~\cite{kipf2016GCN,hamilton2017graphSAGE}, we assume each node in $G$ is also attached with a self-loop.  The goal of graph neural network is to obtain an $n\times d'$ representation matrix $\mathbf{Z}$, which encodes both the graph structural information and the feature matrix $\mathbf{X}$.
Kipf and Welling~\cite{kipf2016GCN} propose the vanilla Graph Convolutional Network (GCN), of which the $\ell$-th representation $\mathbf{H}^{(\ell)}$ is defined as  $\mathbf{H}^{(\ell)}=\sigma\left(\mathbf{{D}^{-\frac{1}{2}}} \mathbf{{A}}\mathbf{{D}^{-\frac{1}{2}}}\mathbf{H}^{(\ell-1)} \mathbf{W}^{(\ell)}\right)$,
where $\mathbf{A}$ and $\mathbf{D}$ are the adjacency matrix and the diagonal degree matrix of $G$, $\mathbf{W}^{(\ell)}$ is the learnable weight matrix, and $\sigma(.)$ is a non-linear activation function (a common choice is the Relu function). Let $L$ denote the number of layers in the GCN model. The $0$-th representation $\mathbf{H}^{(0)}$ is set to the feature matrix $\mathbf{X}$, and the final representation matrix $\mathbf{Z}$ is the $L$-th representation $\mathbf{H}^{(L)}$. Intuitively, GCN aggregates the neighbors' representation vectors from the $(\ell-1)$-th layer to form the representation of the $\ell $-th layer. Such a simple paradigm is proved to be effective in various graph learning tasks~\cite{kipf2016GCN,hamilton2017graphSAGE}.


A major drawback of the vanilla GCN is the lack of ability to scale on graphs with millions of nodes. Such limitation is caused by the fact that the vanilla GCN uses a full-batch training process and stores each node's representation in the GPU memory. 
To extend GNN to large graphs, a line of research focuses on decoupling prediction and propagation, which removes the non-linear activation function $\sigma(.)$ for better scalability. These methods first apply a proximity matrix to the feature matrix $\mathbf{X}$ to obtain the representation matrix $\mathbf{Z}$, and then feed $\mathbf{Z}$ into logistic regression or standard neural network for predictions. 
Among them, SGC~\cite{wu2019SGC} simplifies the vanilla GCN by taking the multiplication of the $L$-th normalized transition probability matrix $\left(\mathbf{D}^{-\frac{1}{2}}\mathbf{A} \mathbf{D}^{-\frac{1}{2}} \right)^L$ and the feature matrix $\mathbf{X}$ to form the final presentation $\mathbf{Z}=\left(\mathbf{D}^{-\frac{1}{2}}\mathbf{A} \mathbf{D}^{-\frac{1}{2}} \right)^L \cdot \X$.  
The proximity matrix can be generalized to PPR used in APPNP~\cite{Klicpera2018APPNP} and HKPR used in GDC~\cite{klicpera2019GDC}. 
Note that even though the ideas to employ PPR and HKPR models in the feature propagation process are borrowed from APPNP, PPRGo, GBP and GDC, the original papers of APPNP, PPRGo and GDC use extra complex structures to propagate node features. For example, the original APPNP~\cite{Klicpera2018APPNP} first applies an one-layer neural network to $\mathbf{X}$ before the propagation that $\mathbf{Z}^{(0)}=f_\theta(\mathbf{X})$. Then APPNP propagates the feature matrix $\mathbf{X}$ with a truncated Personalized PageRank matrix $\mathbf{Z}^{(L)} =\sum_{i=0}^L \alpha \left( 1-\alpha \right)^i \cdot \left(\mathbf{D}^{-\frac{1}{2}}\mathbf{A} \mathbf{D}^{-\frac{1}{2}} \right)^i \cdot f_\theta(\mathbf{X})$, where $L$ is the number of layers and $\alpha$ is a constant in $(0,1)$. The original GDC~\cite{klicpera2019GDC} follows the structure of GCN and employs the heat kernel as the diffusion kernel. 
For the sake of scalability, we use APPNP to denote the propagation process $\mathbf{Z} =\sum_{i=0}^L \alpha \left( 1-\alpha \right)^i \cdot \left(\mathbf{D}^{-\frac{1}{2}}  \mathbf{A}  \mathbf{D}^{-\frac{1}{2}} \right)^{i} \cdot \X$, and GDC to denote the propagation process $\mathbf{Z} =\sum_{i=0}^L \frac{e^{-t} t^i}{i!} \cdot \left(\mathbf{D}^{-\frac{1}{2}}  \mathbf{A}  \mathbf{D}^{-\frac{1}{2}} \right)^{i} \cdot \X$.

A recent work PPRGo~\cite{bojchevski2020scaling} improves the scalability of APPNP by employing the Forward Search algorithm~\cite{FOCS06_FS} to perform the propagation. However, PPRGo only works for APPNP, which, as we shall see in our experiment, may not always achieve the best performance among the three models.  Finally,
a recent work GBP~\cite{chen2020GBP} proposes to use deterministic local push and the Monte Carlo method to approximate GNN propagation of the form $\Z=\sum_{i=0}^L w_i \cdot \left(\mathbf{D}^{-(1-r)}\mathbf{A} \mathbf{D}^{-r} \right)^i \cdot \X.$ However, GBP suffers from two drawbacks: 1) it requires $a+b = 1$ in Equation~\eqref{eqn:pi_gen} to utilize the Monte-Carlo method, and 
2) it requires a large memory space to store the random walk matrix generated by the Monte-Carlo method.

\vspace{-3mm}
\section{Basic Propagation}
\vspace{-1mm}
\label{sec:BPA}
In the next two sections, we present two algorithms to compute the graph propagation equation~\eqref{eqn:pi_gen} with the theoretical relative error guarantee in Definition~\ref{def:pro-relative}.   

\vspace{-0.5mm}
\header{\bf Assumption on graph signal $\vec{x}$.} 
For the sake of simplicity, we assume the graph signal $\vec{x}$ is non-negative. We can deal with the negative entries in $\vec{x}$ by decomposing it into $\vec{x} \hspace{-0.5mm}=\hspace{-0.5mm} \vec{x}^{+} \hspace{-0.5mm}+\hspace{-0.5mm} \vec{x}^-$, where $\vec{x}^+$ only contains the non-negative entries of $\vec{x}$ and $\vec{x}^-$ only contains the negative entries of  $\vec{x}$. After we compute $\vec{\pi}^+\hspace{-0.5mm}=\hspace{-0.5mm}\sum_{i=0}^\infty w_i \cdot \left(\mathbf{D}^{-a}\mathbf{A} \mathbf{D}^{-b} \right)^i \hspace{-1mm} \cdot \hspace{-1mm}\vec{x}^+$ and $\vec{\pi}^{-}\hspace{-1mm}=\hspace{-1mm}\sum_{i=0}^\infty w_i \cdot \left(\mathbf{D}^{-a}\mathbf{A} \mathbf{D}^{-b} \right)^i\hspace{-0.5mm} \cdot\hspace{-0.5mm} \vec{x}^-$, we can combine  $\vec{\pi}^+$ and $\vec{\pi}^-$ to form $\vec{\pi}\hspace{-0.5mm} = \hspace{-0.5mm} \vec{\pi}^+ \hspace{-0.5mm}+\hspace{-0.5mm} \vec{\pi}^-$. We also assume $\vec{x}$ is normalized, that is $\|\vec{x}\|_1 \hspace{-1mm}=\hspace{-0.5mm} 1$.

\header{\bf Assumptions on $w_i$.} To make the computation of Equation~\eqref{eqn:pi_gen} feasible, we first introduce several assumptions on the weight sequence $w_i$ for $i\in \{0,1,2, ...\}$. We assume $\sum_{i=0}^\infty w_i =1$. If not, we can perform propagation with $w_i' = w_i/\sum_{i=0}^\infty w_i$ and rescale the result by  $\sum_{i=0}^\infty w_i$. We also note that to ensure the convergence of Equation~\ref{eqn:pi_gen}, the weight sequence $w_i$ has to satisfy $\sum_{i=0}^\infty w_i \lambda_{max}^i < \infty$, where $\lambda_{max}$ is the maximum singular value of the propagation matrix $\mathbf{D}^{-a} \mathbf{A}\mathbf{D}^{-b}$. Therefore, we assume that for sufficiently large $i$, $w_i \lambda_{max}^i$ is upper bounded by a geometric distribution: 

\begin{assumption}\label{asm:L}
There exists a constant $L_0$ and $\lambda < 1$, such that for any $i \ge L_0$, $w_i \cdot \lambda_{max}^i  \le \lambda^i$.
\end{assumption}
According to Assumption~\ref{asm:L}, to achieve the relative error in Definition~\ref{def:pro-relative}, we only need to compute the prefix sum $\vec{\pi}=\sum_{i=0}^L w_i \cdot \left(\mathbf{D}^{-a}\mathbf{A}\mathbf{D}^{-b} \right)^i \cdot \vec{x}$, where $L$ equals to $ \log_{\lambda} {\delta} = O\left( \log {1 \over \delta}\right)$.
This property is possessed by all proximity measures discussed in this paper. For example, PageRank and Personalized PageRank set $w_i=\alpha \left(1-\alpha\right)^i$, where $\alpha$ is a constant. Since the maximum eigenvalue of $ \mathbf{A}\mathbf{D}^{-1}$ is $1$, we have $\|\sum_{i=L+1}^\infty w_i \left(\mathbf{A}\mathbf{D}^{-1}\right)^i \hspace{-1mm} \cdot \vec{x}\|_2 \le \|\sum_{i=L+1}^\infty w_i \cdot \vec{x}\|_2 =\sum_{i=L+1}^\infty w_i \cdot \|\vec{x}\|_2 \le \sum_{i=L+1}^\infty w_i \cdot \|\vec{x}\|_1= \sum_{i=L+1}^\infty \alpha \cdot \left(1-\alpha\right)^i =(1-\alpha)^{L+1}$. In the second inequality, we use the fact that $\|\vec{x}\|_2\le \|\vec{x}\|_1$ and the assumption on $\vec{x}$ that $\|\vec{x}\|_1=1$. If we set $L=\log_{1-\alpha} \delta=O\left(\log{\frac{1}{\delta}} \right)$, the remaining sum $\|\sum_{i=L+1}^\infty w_i\cdot \left(\mathbf{A}\mathbf{D}^{-1}\right)^i \cdot \vec{x}\|_2$ is bounded by $\delta$. By the assumption that $\vec{x}$ is non-negative, we can terminate the propagation at the $L$-th level without obvious error increment. We can prove similar bounds for HKPR, Katz, and transition probability as well. Detailed explanations are deferred to the appendix.

\header{\bf Basic Propagation.} 
As a baseline solution, we can compute the graph propagation equation~\eqref{eqn:pi_gen} by iteratively updating the propagation vector $\vec{\pi}$ via matrix-vector multiplications. Similar approaches have been used for computing PageRank, PPR, HKPR and Katz, under the name of Power Iteration or Power Method.

In general, we employ matrix-vector multiplications to compute the summation of the first $L =O\left(\log {1\over \delta}\right)$ hops of Equation~\eqref{eqn:pi_gen}: $\vec{\pi}=\sum_{i=0}^L w_i \cdot \left(\mathbf{D}^{-a}\mathbf{A}\mathbf{D}^{-b} \right)^i \cdot \vec{x}$.  To avoid the $O(nL)$ space of storing vectors $\left(\mathbf{D}^{-a}\mathbf{A} \mathbf{D}^{-b} \right)^i \cdot \vec{x}, i=0,\ldots, L$, we only use two vectors: the {\em residue and reserve vectors}, which are defined as follows.

\begin{definition}\label{def:RQ-relation} [{\bf residue and reserve}]
Let $Y_i$ denote the partial sum of the weight sequence that $Y_i=\sum_{k=i}^\infty w_k, i=0,\ldots, \infty$. Note that $Y_0 = \sum_{k=0}^\infty w_k = 1$ under the assumption: $\sum_{i=0}^\infty w_i=1$.  At level $i$, the residue vector is defined as $\vec{r}^{(i)}=Y_i\cdot \left(\D^{-a}\A\D^{-b} \right)^i \cdot \vec{x}$;  The reserve vector is defined as $\vec{q}^{(i)}=\frac{w_i}{Y_i}\cdot \vec{r}^{(i)}=w_i \cdot \left(\D^{-a}\A\D^{-b} \right)^i \cdot \vec{x}$. 
\end{definition}


Intuitively, for each node $u\in V$ and level $i \ge 0$, the residue $\vec{r}^{(i)}(u)$ denotes the probability mass to be distributed to node $u$ at level $i$, and the reserve $\vec{q}^{(i)}(u)$ denotes the probability mass that will stay at node $u$ in level $i$ permanently. 
By Definition~\ref{def:RQ-relation}, the graph propagation equation~\eqref{eqn:pi_gen} can be expressed as $\vec{\pi}=\sum_{i=0}^\infty \vec{q}^{(i)}.$
Furthermore, the residue vector $\vec{r}^{(i)}$ satisfies the following recursive formula:
\begin{align}\label{eqn:iteration}
\vspace{-8mm}
	\vec{r}^{(i+1)}=\frac{Y_{i+1}}{Y_i}\cdot\left(\D^{-a}\A \D^{-b} \right)\cdot \vec{r}^{(i)}. 
\end{align} 
We also observe that the reserve vector  $\vec{q}^{(i)}$ can be derived from the residue vector $\vec{r}^{(i)}$ by $\vec{q}^{(i)}=\frac{w_i}{Y_i}\cdot \vec{r}^{(i)}$.  Consequently, given a predetermined level number $L$, we can compute the graph propagation equation~\eqref{eqn:pi_gen} by iteratively computing the residue vector $\vec{r}^{(i)}$ and reserve vector $\vec{q}^{(i)}$ for $i=0,1,...,L$.

\begin{algorithm}[t]
\begin{small}
    \caption{Basic Propagation Algorithm\label{alg:AGP-deter}}
	\KwIn{Undirected graph $G=(V,E)$, graph signal vector $\vec{x}$, weights $w_i$,  number of levels $L$\\}
	\KwOut{the estimated propagation vector $\hat{\vec{\pi}}$\\}
	$\vec{r}^{(0)} \gets \vec{x}$\;
	\For{$i=0$ to $L-1$}{

		\For{each $u \in V$ with nonzero $\vec{r}^{(i)}(u)$}{
			\For{each $v\in N_u$ }{
				$\vec{r}^{(i+1)}(v) \gets \vec{r}^{(i+1)}(v) + \left(\frac{Y_{i+1}}{Y_i} \right) \cdot \frac{\vec{r}^{(i)}(u)}{d_v^{a} \cdot d_u^b}$;
			}
			$\vec{q}^{(i)}(u) \gets \vec{q}^{(i)}(u)+\frac{w_i}{Y_i}\cdot \vec{r}^{(i)}(u)$\;
		}
		$\hat{\vec{\pi}} \gets \hat{\vec{\pi}} +\vec{q}^{(i)}$ and empty $\vec{r}^{(i)},\vec{q}^{(i)}$\;
	}	
	 $\vec{q}^{(L)}=\frac{w_L}{Y_L}\cdot \vec{r}^{(L)}$ and $\hat{\vec{\pi}} \gets \hat{\vec{\pi}} +\vec{q}^{(L)}$\;
	\Return $\hat{\vec{\pi}}$\;
\end{small}
\end{algorithm}

Algorithm~\ref{alg:AGP-deter} illustrates the pseudo-code of the basic iterative propagation algorithm.  We first set $\vec{r}^{(0)}=\vec{x}$ (line 1). 
For $i$ from $0$ to $L-1$, we compute $\vec{r}^{(i+1)}=\frac{Y_{i+1}}{Y_i}\cdot\left(\D^{-a}\A\D^{-b} \right)\cdot \vec{r}^{(i)}$ by pushing the probability mass  $\left(\frac{Y_{i+1}}{Y_i} \right) \cdot \frac{\vec{r}^{(i)}(u)}{d_v^{a} \cdot d_u^b}$ to each neighbor $v$ of each node $u$ (lines 2-5). 
Then, we set $\vec{q}^{(i)}=\frac{w_i}{Y_i}\cdot \vec{r}^{(i)}$ (line 6), and aggregate $\vec{q}^{(i)}$ to $\hat{\vec{\pi}}$ (line 7). We also empty $\vec{r}^{(i)},\vec{q}^{(i)}$ to save memory. After all $L$ levels are processed, we transform the residue of level $L$ to the reserve vector by updating $\hat{\vec{\pi}}$ accordingly (line 8). We return $\hat{\vec{\pi}}$  as an estimator for the graph propagation vector $\vec{\pi}$ (line 9). 

Intuitively, each iteration of Algorithm~\ref{alg:AGP-deter}  computes the matrix-vector multiplication $\vec{r}^{(i+1)}=\frac{Y_{i+1}}{Y_i}\cdot\left(\D^{-a}\A\D^{-b}\right)\cdot \vec{r}^{(i)}$, where $\D^{-a}\A\D^{-b}$ is an  $n\times n$ sparse matrix with $m$ non-zero entries. Therefore, the cost of each iteration of Algorithm~\ref{alg:AGP-deter}  is $O(m)$. To achieve the relative error guarantee in Definition~\ref{def:pro-relative}, we need to set $L=O\left(\log{\frac{1}{\delta}} \right)$, and thus the total cost becomes $O\left(m\cdot \log{\frac{1}{ \delta}}\right)$.  Due to the logarithmic dependence on $\delta$, we use Algorithm~\ref{alg:AGP-deter} to compute high-precision proximity vectors as the ground truths in our experiments. 
However, the linear dependence on the number of edges $m$ limits the scalability of Algorithm~\ref{alg:AGP-deter} on large graphs. In particular, in the setting of Graph Neural Network, we treat each column of the feature matrix $\X \in \mathcal{R}^{n \times d}$ as the graph signal $\bm{x}$ to do the propagation. 
Therefore, Algorithm~\ref{alg:AGP-deter} costs $O\left(md \log  \frac{1}{\delta}\right)$ to compute the representation matrix $\mathbf{Z}$. Such high complexity limits the scalability of the existing GNN models.

\section{Randomized PROPAGATION}
\label{sec:RPA}


\header{\bf A failed attempt: pruned propagation.}
The $O\left(m \log \frac{1}{\delta}\right)$ running time is undesirable in many applications. To improve the scalability of the basic propagation algorithm, a simple idea is to prune the nodes with small residues in each iteration. This approach has been widely adopted in local clustering methods such as Nibble and PageRank-Nibble~\cite{FOCS06_FS}. In general, there are two schemes to prune the nodes: 1) we can ignore a node $u$ if its  residue $\hat{\vec{r}}^{(i)}(u)$ is smaller than some threshold $\varepsilon$ in line 3 of Algorithm~\ref{alg:AGP-deter}, or 2) in line 4 of Algorithm~\ref{alg:AGP-deter}, we can somehow ignore an edge $(u,v)$ if $\left(\frac{Y_{i+1}}{Y_i} \right) \cdot \frac{\vec{r}^{(i)}(u)}{d_v^{a} \cdot d_u^b}$, the  residue to be propagated from  
$u$ to $v$, is smaller than some  threshold $\varepsilon'$. Intuitively, both pruning schemes can reduce the number of operations in each iteration.

However, as it turns out, the two approaches suffer from either unbounded error or large time cost. More specifically, consider the toy graph shown in Figure~\ref{fig:special-case}, on which the goal is to estimate $\vec{\pi}=\left(\mathbf{A}\mathbf{D}^{-1} \right)^2 \cdot \vec{e}_s $, the transition probability vector of a 2-step random walk from node $s$. It is easy to see that $\vec{\pi}(v)=1/2, \vec{\pi}(s) = 1/2$, and $\vec{\pi}(u_i)=0, i=1,\ldots, n$. 
By setting the relative error threshold $\delta$ as a constant (e.g. $\delta = 1/4$), the approximate propagation algorithm has to return a constant approximation of $\vec{\pi}(v)$. 
\begin{figure}[t]
	\begin{small}
		\centering
		\begin{tabular}{c}
			\includegraphics[height=40mm]{./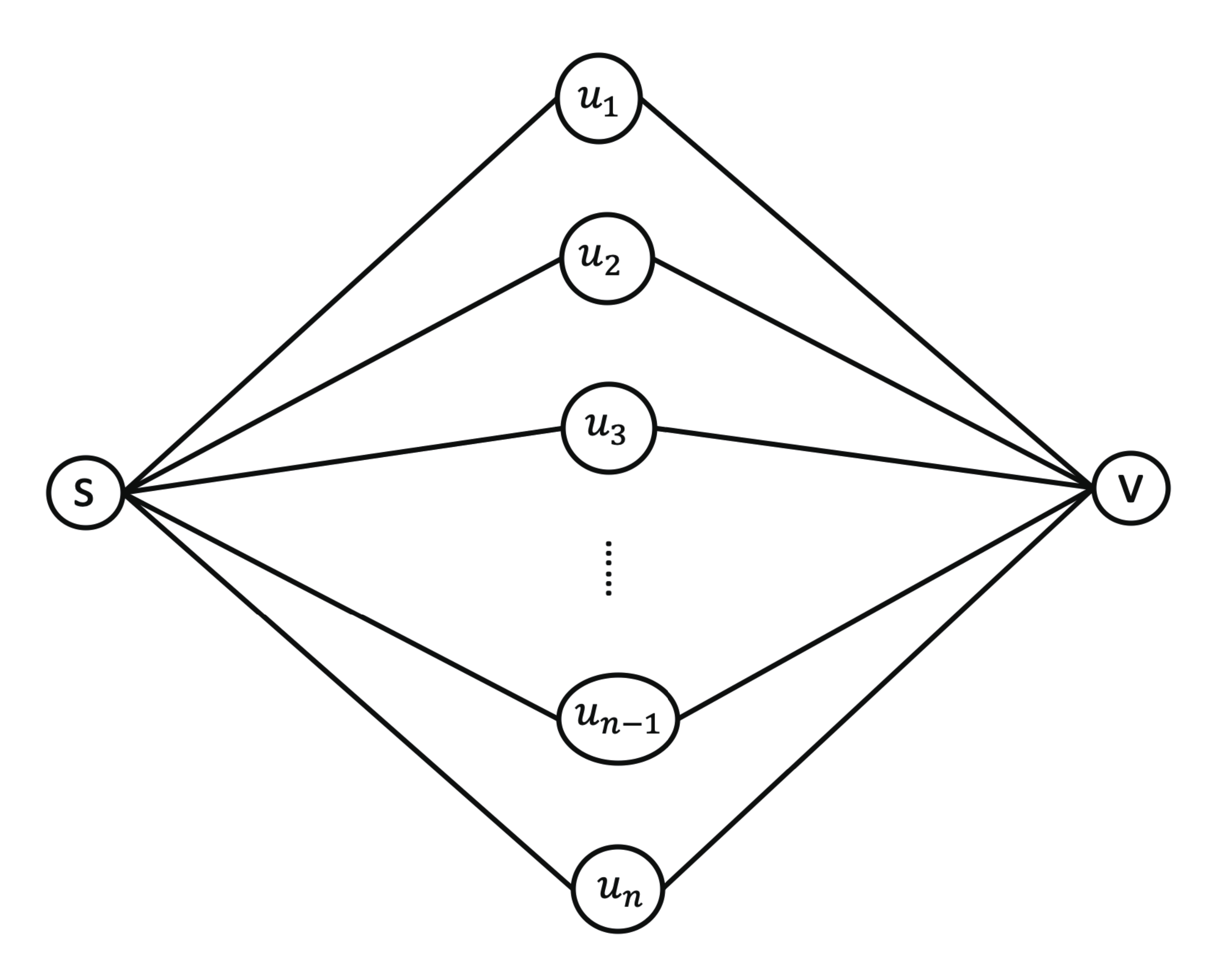} 
		\end{tabular}
		\vspace{-5mm}
		\caption{A bad-case graph for pruned propagation.} 
		\label{fig:special-case}
		\vspace{-3mm}
	\end{small}
\end{figure}
We consider the first iteration, which pushes the residue $\vec{r}^{(0)}(s)=1$ to $u_1, \ldots, u_n$. If we adopt the first pruning scheme that performs push on $s$ when the residue is large, then we will have to visit all $n$ neighbors of $s$, leading to an intolerable time cost of $O(n)$. On the other hand, we observe that the residue transforms from $s$ to any neighbor $u_i$ is $ \frac{\vec{r}^{(0)}(s)}{d_s} = \frac{1}{n}$. Therefore, if we adopt the second pruning scheme which only performs pushes on edges with large residues to be transformed to, we will simply ignore all pushes from $s$ to $u_1, \ldots, u_n$ and make the incorrect estimation that $\hat{\vec{\pi}}(v)=0$. The problem becomes worse when we are dealing with the general graph propagation equation~\eqref{eqn:pi_gen}, where the Laplacian parameters $a$ and $b$ in the transition matrix $\mathbf{D}^{-a} \mathbf{A}  \mathbf{D}^{-b}$ may take arbitrary values. For example, to the best of our knowledge, no sub-linear approximate algorithm exists for Katz index where $a=b=0$.



\header{\bf Randomized propagation.} We solve the above dilemma by presenting a simple {\em randomized} propagation algorithm that achieves both theoretical approximation guarantee and near-optimal running time complexity. Algorithm~\ref{alg:AGP-RQ} illustrates the pseudo-code of the Randomized Propagation Algorithm, which only differs from Algorithm~\ref{alg:AGP-deter} by a few lines. Similar to Algorithm~\ref{alg:AGP-deter},  Algorithm~\ref{alg:AGP-RQ} takes in an undirected graph $G=(V,E)$, a graph signal vector $\vec{x}$, a level number $L$ and a weight sequence $w_i$ for $i\in [0,L]$. In addition, Algorithm~\ref{alg:AGP-RQ}  takes in an extra parameter $\varepsilon$, which specifies the relative error guarantee. As we shall see in the analysis, $\varepsilon$ is roughly of  the same order as the relative error threshold $\delta$ in Definition~\ref{def:pro-relative}. Similar to Algorithm~\ref{alg:AGP-deter}, we start with $\vec{\hat{r}}^{(0)} = \vec{x}$ and iteratively perform propagation through level 0 to level $L$. Here we use $\vec{\hat{r}}^{(i)}$ and $\vec{\hat{q}}^{(i)}$ to denote the estimated residue and reserve vectors at level $i$, respectively. 
The key difference is that, on a node $u$ with non-zero residue $\hat{\vec{r}}^{(i)}(u)$, instead of pushing the residue to the whole neighbor set $N_u$, we only perform pushes to the neighbor $v$ with small degree $d_v$. More specifically, for each neighbor $v\in N(u)$ with degree $d_v \le \left( \frac{1}{\varepsilon} \cdot\frac{Y_{i+1}}{Y_i} \cdot \frac{\hat{\vec{r}}^{(i)}(u)}{d_u^b}\right)^{1/a}$, we increase $v$'s residue by $\frac{Y_{i+1}}{Y_i} \cdot \frac{\hat{\vec{{r}}}^{(i)}(u)}{d_v^a\cdot d_u^b}$, which is the same value as in Algorithm~\ref{alg:AGP-deter}. We also note that the condition  $d_v \le \left( \frac{1}{\varepsilon} \cdot\frac{Y_{i+1}}{Y_i} \cdot \frac{\hat{\vec{r}}^{(i)}(u)}{d_u^b}\right)^{1/a}$ is equivalent to $ \frac{Y_{i+1}}{Y_i} \cdot \frac{\hat{\vec{{r}}}^{(i)}(u)}{d_v^a\cdot d_u^b} > \varepsilon$, which means we push the residue from $u$ to $v$ only if it is larger than $\varepsilon$. For the remaining nodes in $N_u$, we sample each neighbor $v\in N_u$ with probability $p_v= \frac{1}{\varepsilon}\cdot \frac{Y_{i+1}}{Y_i}\cdot \frac{\hat{\vec{r}}^{(i)}(u)}{d_v^a\cdot d_u^b}$. Once a node $v$ is sampled, we increase the residue of $v$ by $\varepsilon$. The choice of $p_v$ is to ensure that $p_v\cdot \varepsilon$, the expected residue increment of $v$, equals to $\frac{Y_{i+1}}{Y_i} \cdot \frac{\hat{\vec{r}}^{(i)}(u)}{d_v^a\cdot d_u^b}$, the true residue increment if we perform the actual propagation from $u$ to $v$ in  Algorithm~\ref{alg:AGP-deter}.

\begin{algorithm}[t]
	\caption{Randomized Propagation Algorithm\label{alg:AGP-RQ}}
	\KwIn{undirected graph $G=(V,E)$, graph signal vector $\vec{x}$ with $\|\vec{x}\|_1 \le 1$, weighted sequence $w_i(i=0,1,...,L)$, error parameter $\varepsilon$, number of levels $L$\\}
	\KwOut{the estimated propagation vector $\hat{\vec{\pi}}$\\}
	$\vec{\hat{r}}^{(0)} \gets \vec{x}$\;
	\For{$i=0$ to $L-1$}{
		\For{each $u \in V$ with non-zero residue $\hat{\vec{r}}^{(i)}(u)$}{
			\For{each $v\in N_u$ and $d_v \hspace{-0.5mm} \le \hspace{-0.5mm} \left(\frac{1}{\varepsilon} \cdot\frac{Y_{i+1}}{Y_i} \cdot \frac{\hat{\vec{r}}^{(i)}(u)}{d_u^b}\right)^{\frac{1}{a}}$}{
				$\vec{\hat{r}}^{(i+1)}(v) \gets \vec{\hat{r}}^{(i+1)}(v)+ \frac{Y_{i+1}}{Y_i} \cdot \frac{\hat{\vec{r}}^{(i)}(u)}{d_v^a\cdot d_u^b} $;
			}
			{\bf Subset Sampling}: Sample each remaining neighbor $v\in N_u$ with probability $p_v = \frac{1}{\varepsilon}\cdot \frac{Y_{i+1}}{Y_i}\cdot \frac{\hat{\vec{r}}^{(i)}(u)}{d_u^b} \cdot \frac{1}{d_v^a}$\;
			\For{each sampled neighbor $v\in N(u)$}{
			    $\vec{\hat{r}}^{(i+1)}(v) \gets \vec{\hat{r}}^{(i+1)}(v)+ \varepsilon$\;
			}
			$\vec{\hat{q}}^{(i)}(u) \gets \vec{\hat{q}}^{(i)}(u)+\frac{w_i}{Y_i}\cdot \hat{\vec{r}}^{(i)}(u)$\;
		}
		$\hat{\vec{\pi}} \gets \hat{\vec{\pi}} +\hat{\vec{q}}^{(i)}$ and empty $\hat{\vec{r}}^{(i)},\hat{\vec{q}}^{(i)}$\;
	}	
	 $\vec{q}^{(L)}=\frac{w_L}{Y_L}\cdot \vec{r}^{(L)}$ and $\hat{\vec{\pi}} \gets \hat{\vec{\pi}} +\vec{q}^{(L)}$\;
	\Return $\vec{\hat{\pi}}$\;
\end{algorithm}

There are two key operations in Algorithm~\ref{alg:AGP-RQ}. First of all, we need to access the neighbors with small degrees. Secondly, we need to sample each (remaining) neighbor $v \in N_u$ according to some probability $p_v$. Both operations can be supported by scanning over the neighbor set $N_u$. However, the cost of the scan is asymptotically the same as performing a full propagation on $u$ (lines 4-5 in Algorithm~\ref{alg:AGP-deter}), which means Algorithm~\ref{alg:AGP-RQ} will lose the benefit of randomization and essentially become the same as Algorithm~\ref{alg:AGP-deter}. 

\header{\bf Pre-sorting adjacency list by degrees.} To access the neighbors with small degrees, we can pre-sort each adjacency list $N_u$ according to the degrees. More precisely, we assume that $N_u = \{v_1, \ldots,v_{d_u}\}$ is stored in a way that $d_{v_1} \le \ldots \le d_{v_{d_u}}$. Consequently, we can implement lines 4-5 in Algorithm~\ref{alg:AGP-RQ} by sequentially scanning through $N_u = \{v_1, \ldots,v_{d_u}\}$ and stopping at the first $v_j$ such that $d_{v_j} > \left( \frac{1}{\varepsilon} \cdot\frac{Y_{i+1}}{Y_i} \cdot \frac{\hat{\vec{r}}^{(i)}(u)}{d_u^b}\right)^{1/a}$. With this implementation, we only need to access the neighbors with degrees that exceed the threshold. We also note that we can pre-sort the adjacency lists when reading the graph into the memory, without increasing the asymptotic cost. In particular, we construct a tuple $(u,v,d_v)$ for each edge $(u,v)$ and use counting sort to sort $(u,v,d_v)$ tuples in the ascending order of $d_v$. Then we scan the tuple list. For each $(u,v,d_v)$, we append $v$ to the end of $u$'s
adjacency list $N_u$. Since each $d_v$ is bounded by $n$, and
there are $m$ tuples, the cost of counting sort is bounded by
$O(m+n)$, which is asymptotically the same as reading the graphs.


\header{\bf Subset Sampling.} The second problem, however, requires a more delicate solution. Recall that the goal is to sample each neighbor $v_j \in N_u = \{v_1, \ldots, v_{d_u}\}$ according to the probability $p_{v_j} = \frac{Y_{i+1}}{\e \cdot Y_i}\cdot \frac{\hat{\vec{r}}^{(i)}(u)}{d_{v_j}^a \cdot d_u^b}$ without touching all the neighbors in $N_u$. This problem is known as the {\em Subset Sampling problem} and has been solved optimally in~\cite{bringmann2012efficient}. For ease of implementation, we employ a simplified solution: for each node $u\in V$, we partition $u$'s adjacency list $N_u = \{v_1, \ldots, v_{d_u}\}$ into $O(\log n)$ groups, such that the $k$-th group $G_k$ consists of the neighbors $v_j\in N_u$ with degrees $d_{v_j} \in [2^{k},2^{k+1})$. Note that this can be done by simply sorting $N_u$ according to the degrees. Inside the $k$-th group $G_k$, the sampling probability $p_{v_j} = \frac{Y_{i+1}}{\e \cdot Y_i}\cdot \frac{\hat{\vec{r}}^{(i)}(u)}{d_{v_j}^a \cdot d_u^b}$ differs by a factor of at most $2^a \le 2$. Let $p^*$ denote the maximum sampling probability in $G_k$. We generate a random integer $\ell$ according to the Binomial distribution $B(|G_k|, p^*)$, and randomly selected $\ell$ neighbors from $G_k$. For each selected neighbor $v_j$, we reject it with probability $1-p_{v_j}/p^*$. 
Hence, the sampling complexity for $G_k$ can be bounded by $O\left(\sum_{j\in G_k} p_{v_j}+1\right)$, where $O(1)$ corresponds to the cost to generate the binomial random integer $\ell$ and $O\left(\sum_{j\in G_k} p_j\right)=O\left(\sum_{j\in G_k} p^*\right)$ denotes the cost to select $\ell$ neighbors from $G_k$. Consequently, the total sampling complexity becomes $O\left(\sum_{k=1}^{\log n} \left(\sum_{j\in G_k} p_j+1 \right) \right)= O\left(\sum_{j=0}^{d_u} p_j + \log n \right)$.
Note that for each subset sampling operation, we need to return $O\left(\sum_{j=0}^{d_u} p_j \right)$ neighbors in expectation, so this complexity is optimal up to the $\log n$ additive term.

 \begin{figure*}[t]
 	\begin{small}
 		\centering
 		\begin{tabular}{cccc}
 			\hspace{-4mm} \includegraphics[height=34mm]{./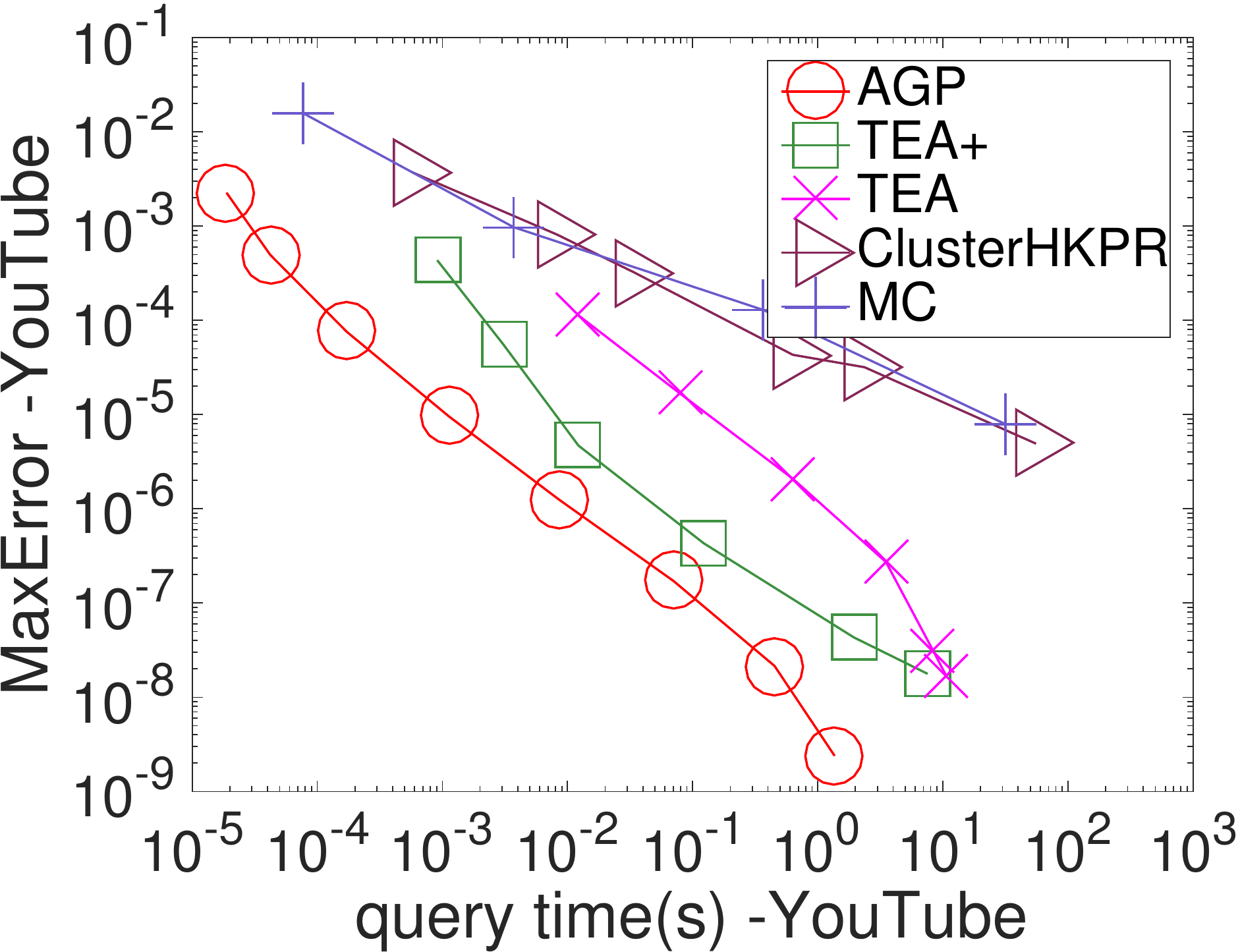} &
 			\hspace{-4mm} \includegraphics[height=34mm]{./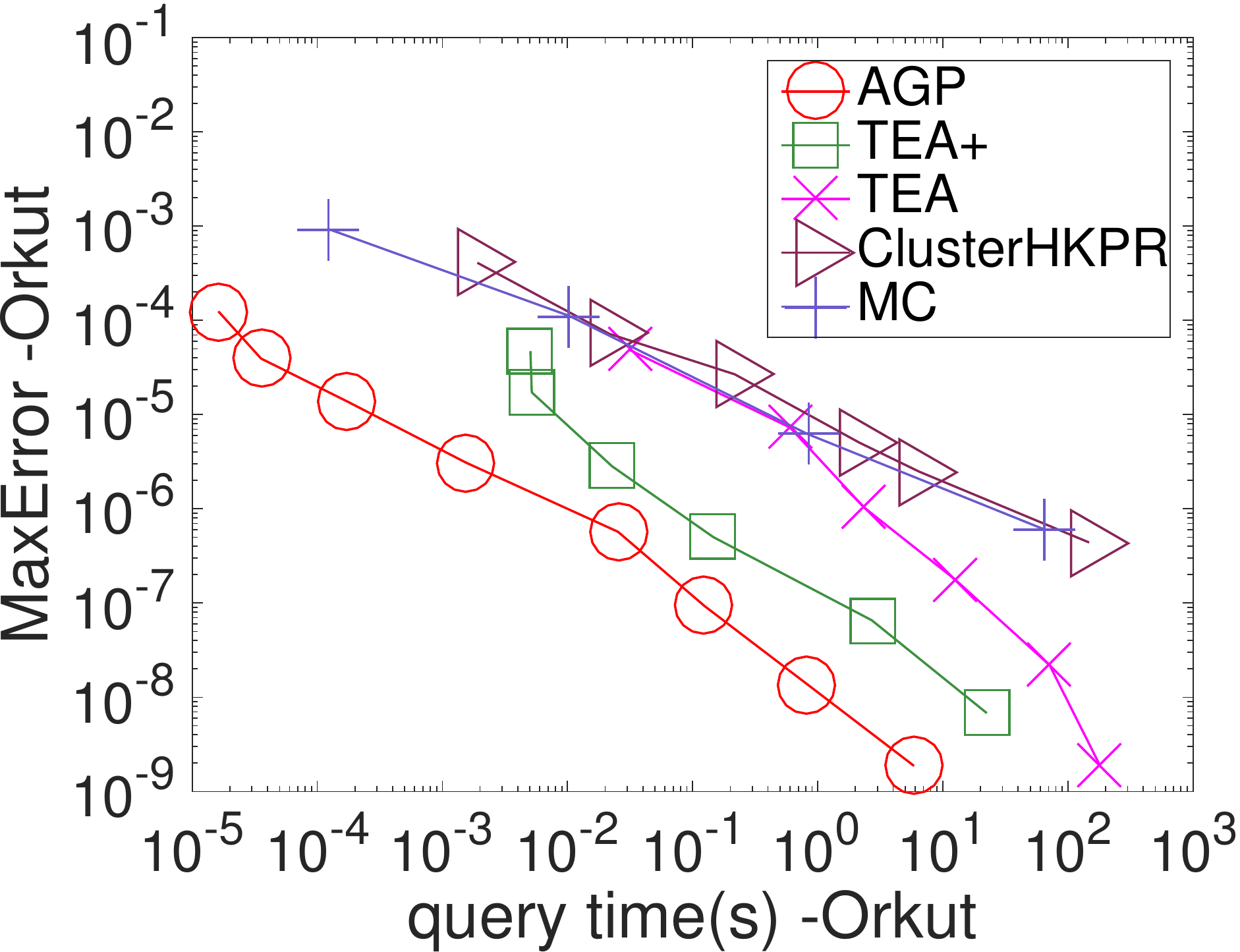} &
 			\hspace{-4mm} \includegraphics[height=34mm]{./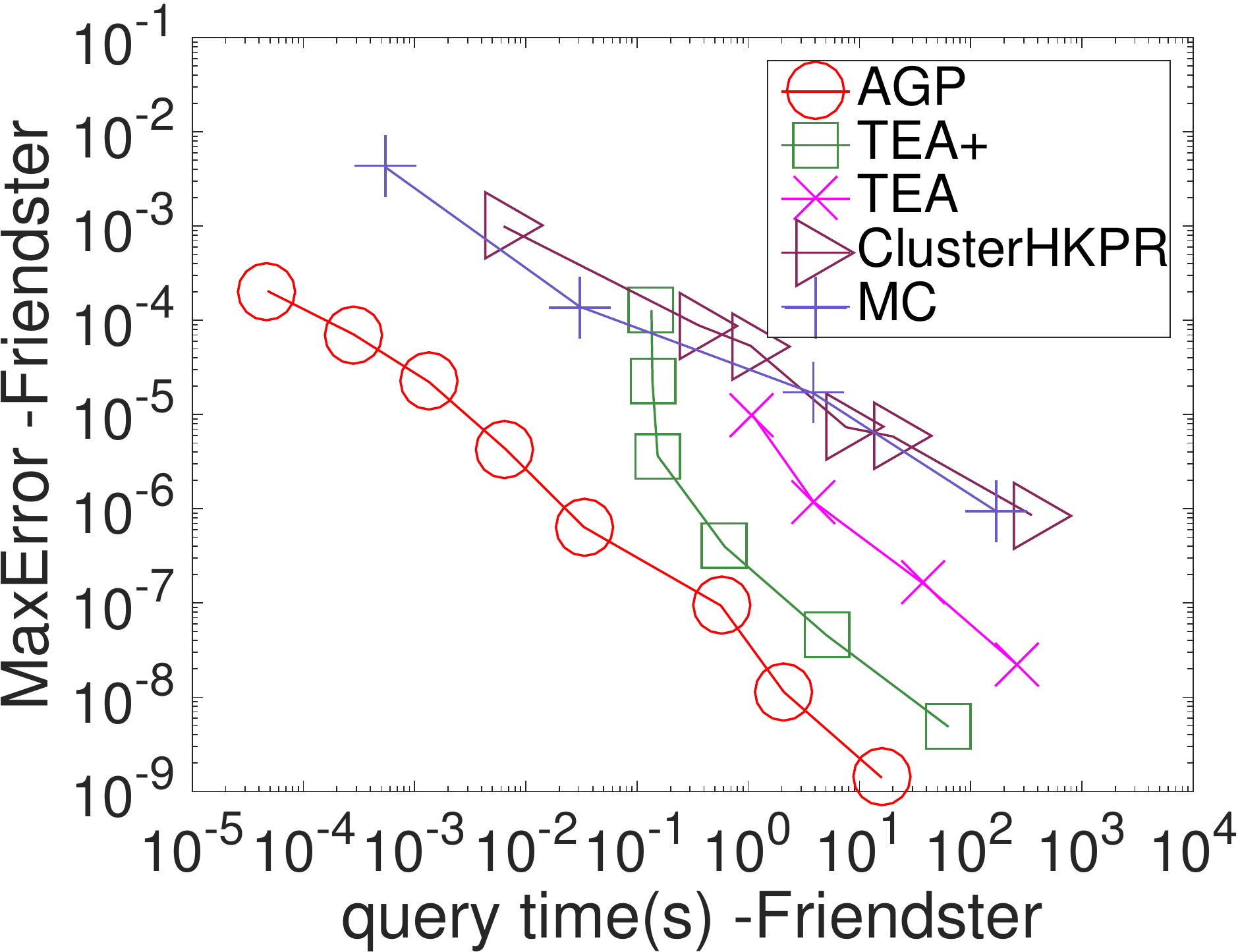} &
 			\hspace{-4mm} \includegraphics[height=34mm]{./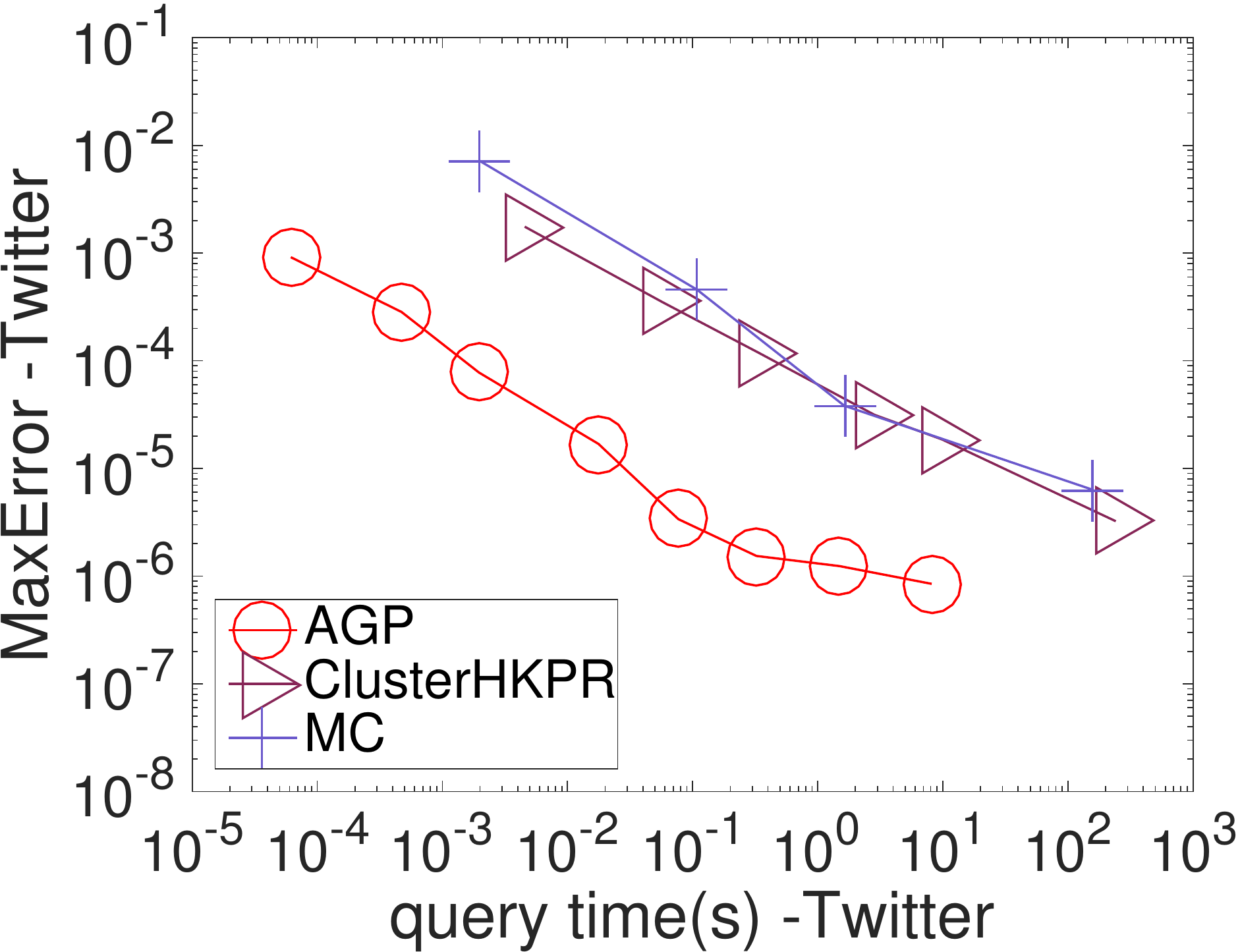} 
 		\end{tabular}
 		\vspace{-5mm}
 		\caption{Tradeoffs between {\em MaxError} and query time in local clustering.}
 		\label{fig:HKPR-maxerror-query}
 		\vspace{-2mm}
 	\end{small}
 \end{figure*}

 \begin{figure}[t]
 	\begin{small}
 		\centering
 		\begin{tabular}{cccc}
 			\hspace{-4mm} \includegraphics[height=32mm]{./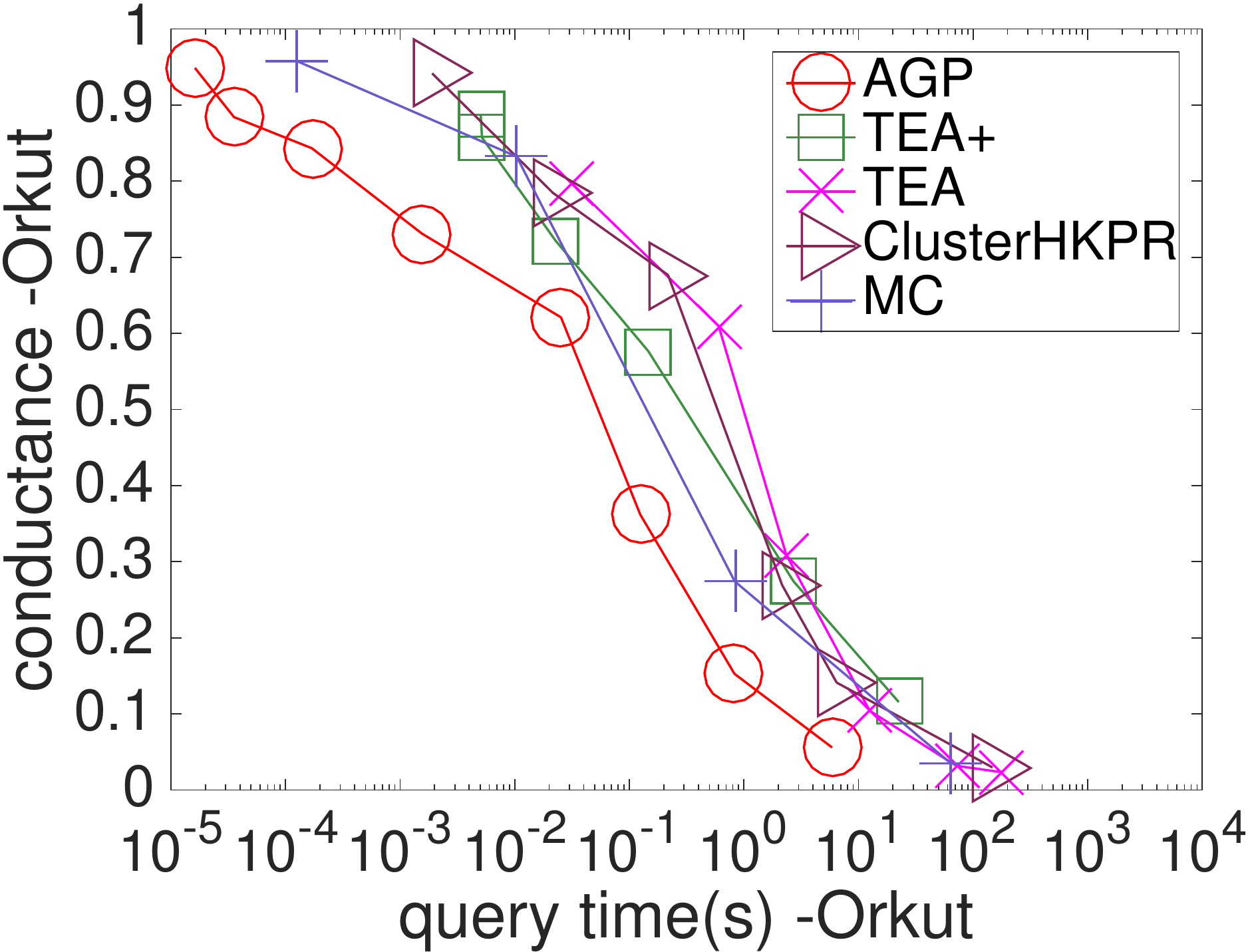} &
 			\hspace{-4mm} \includegraphics[height=32mm]{./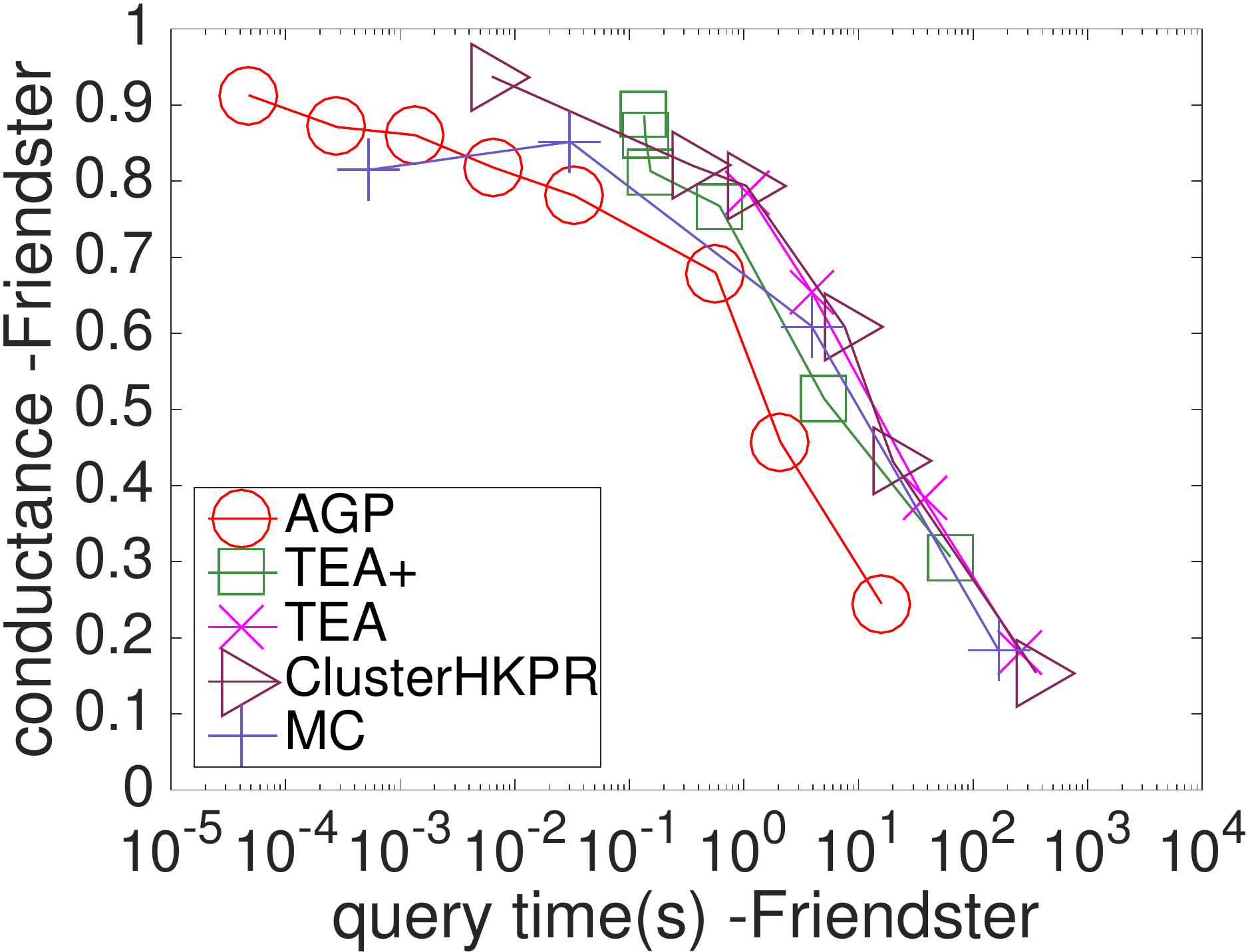} &
 		\end{tabular}
 		\vspace{-5mm}
 		\caption{Tradeoffs between {\em conductance} and query time in local clustering.}
 		\label{fig:HKPR-conductance-query}
 		\vspace{-2mm}
 	\end{small}
 \end{figure}

\header{\bf Analysis.} We now present a series of lemmas that characterize the error guarantee and the running time of Algorithm~\ref{alg:AGP-RQ}. For readability, we only give some intuitions for each lemma and defer the detailed proofs to the technical report~\cite{TechnicalReport}. We first present a lemma to show Algorithm~\ref{alg:AGP-RQ} can return unbiased estimators for the residue and reserve vectors at each level. 
\begin{lemma}
\vspace{-2mm}
	\label{lem:unbiasedness}
	For each node $v\in V$, Algorithm~\ref{alg:AGP-RQ} computes estimators $\vec{\hat{r}}^{(\ell)}(v)$ and $\vec{\hat{q}}^{(\ell)}(v)$ such that $\E\left[ \vec{\hat{q}}^{(\ell)}(v)\right]=\vec{q}^{(\ell)}(v)$ and $\E\left[ \vec{\hat{r}}^{(\ell)}(v)\right]=\vec{r}^{(\ell)}(v)$ holds for $\forall \ell \in \{0,1, 2, ... , L\}$.
\end{lemma}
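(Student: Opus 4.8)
The plan is to prove both identities simultaneously by induction on the level $\ell$, using linearity of expectation together with the recursion~\eqref{eqn:iteration} for the true residues. Since $\vec{\hat{q}}^{(\ell)}(v)=\frac{w_\ell}{Y_\ell}\cdot\vec{\hat{r}}^{(\ell)}(v)$ is a deterministic rescaling of $\vec{\hat{r}}^{(\ell)}(v)$, and $\vec{q}^{(\ell)}(v)=\frac{w_\ell}{Y_\ell}\cdot\vec{r}^{(\ell)}(v)$ likewise, the claim for $\vec{\hat{q}}$ follows immediately from the claim for $\vec{\hat{r}}$ by linearity; so the real work is to establish $\E[\vec{\hat{r}}^{(\ell)}(v)]=\vec{r}^{(\ell)}(v)$ for every $\ell$. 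For the \emph{base case} $\ell=0$, Algorithm~\ref{alg:AGP-RQ} sets $\vec{\hat{r}}^{(0)}=\vec{x}$, and since $Y_0=\sum_{k\ge 0}w_k=1$ we have $\vec{r}^{(0)}=Y_0\cdot(\D^{-a}\A\D^{-b})^0\cdot\vec{x}=\vec{x}$, so the estimator is in fact exact and trivially unbiased.

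For the \emph{inductive step} I would assume $\E[\vec{\hat{r}}^{(i)}(u)]=\vec{r}^{(i)}(u)$ for all $u$ and compute $\E[\vec{\hat{r}}^{(i+1)}(v)]$ by first conditioning on the entire random vector $\vec{\hat{r}}^{(i)}$. The key point is that conditioned on $\vec{\hat{r}}^{(i)}$, every degree threshold $\big(\tfrac{1}{\e}\cdot\tfrac{Y_{i+1}}{Y_i}\cdot\tfrac{\vec{\hat{r}}^{(i)}(u)}{d_u^b}\big)^{1/a}$ is determined, hence so is the split of each neighbor set $N_u$ into the part that is pushed deterministically (lines 4--5) and the part that is subset-sampled (line 6). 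A fixed node $u$ with $\vec{\hat{r}}^{(i)}(u)\ne 0$ contributes to $\vec{\hat{r}}^{(i+1)}(v)$ either the deterministic amount $\tfrac{Y_{i+1}}{Y_i}\cdot\tfrac{\vec{\hat{r}}^{(i)}(u)}{d_v^a d_u^b}$ (when $d_v$ is below the threshold), or the random amount $\e\cdot\mathbf{1}[v\text{ sampled from }N_u]$ (when $d_v$ is above the threshold); in the latter case the conditional expectation is $\e\cdot p_v=\tfrac{Y_{i+1}}{Y_i}\cdot\tfrac{\vec{\hat{r}}^{(i)}(u)}{d_v^a d_u^b}$ by the very choice of $p_v$. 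Either way, the conditional expected contribution of $u$ equals the true push value.

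Summing over all $u\in N_v$ by linearity gives $\E[\vec{\hat{r}}^{(i+1)}(v)\mid\vec{\hat{r}}^{(i)}]=\tfrac{Y_{i+1}}{Y_i}\sum_{u\in N_v}\tfrac{\vec{\hat{r}}^{(i)}(u)}{d_v^a d_u^b}$, which is exactly the $v$-th coordinate of $\tfrac{Y_{i+1}}{Y_i}(\D^{-a}\A\D^{-b})\vec{\hat{r}}^{(i)}$. Taking the outer expectation, applying the inductive hypothesis coordinatewise, and invoking~\eqref{eqn:iteration} then yields $\E[\vec{\hat{r}}^{(i+1)}(v)]=\tfrac{Y_{i+1}}{Y_i}\sum_{u\in N_v}\tfrac{\vec{r}^{(i)}(u)}{d_v^a d_u^b}=\vec{r}^{(i+1)}(v)$, completing the induction; the reserve claim then follows by the rescaling noted above.

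The one technical point that needs care, and which I expect to be the main obstacle, is verifying that the grouped ``Binomial + rejection'' implementation of Subset Sampling selects each neighbor $v_j\in N_u$ with marginal probability exactly $p_{v_j}$, which is what justifies the conditional-expectation computation above. Concretely, within a group $G_k$ one draws $\ell\sim B(|G_k|,p^*)$ and picks a uniformly random $\ell$-subset; a short exchangeability computation ($\Pr[v_j\in S]=\sum_\ell\Pr[\ell]\cdot\tfrac{\ell}{|G_k|}=\tfrac{\E[\ell]}{|G_k|}=p^*$) shows each element lies in the chosen subset with probability exactly $p^*$, and the subsequent independent rejection with probability $1-p_{v_j}/p^*$ brings the marginal down to $p^*\cdot(p_{v_j}/p^*)=p_{v_j}$. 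The remaining subtlety is purely bookkeeping: the thresholds and the probabilities $p_v$ are themselves functions of the previous-level estimates, which is precisely why the argument conditions on the full vector $\vec{\hat{r}}^{(i)}$ before taking expectations, and why the induction must be carried on the joint law of $\vec{\hat{r}}^{(i)}$ rather than one coordinate at a time.
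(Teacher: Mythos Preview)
Your proposal is correct and follows essentially the same route as the paper: induction on $\ell$, conditioning on $\hat{\vec{r}}^{(\ell-1)}$, computing the conditional expected increment from each neighbor $u$ in both the deterministic and sampled branches, summing via linearity, and then taking the outer expectation together with the recursion~\eqref{eqn:iteration}; the reserve claim is likewise deduced from the deterministic rescaling $\hat{\vec{q}}^{(\ell)}=\tfrac{w_\ell}{Y_\ell}\hat{\vec{r}}^{(\ell)}$. The one addition you supply that the paper omits is the explicit verification that the grouped Binomial-plus-rejection implementation of subset sampling yields the exact marginal $p_{v_j}$, which is a welcome bit of rigor but does not change the structure of the argument.
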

To give some intuitions on the correctness of Lemma~\ref{lem:unbiasedness}, recall that in lines 4-5 of Algorithm~\ref{alg:AGP-deter}, we add $\frac{Y_{i+1}}{Y_i} \cdot \frac{\hat{\vec{r}}^{(i)}(u)}{d_v^a\cdot d_u^b}$ to each residue $\hat{\vec{r}}^{(i+1)}(v)$ for $\forall v\in N_u$. 
We perform the same operation in Algorithm~\ref{alg:AGP-RQ} for each neighbor $v \in N_u$ with large degree $d_v$. For each remaining neighbor $v\in V$, we add a residue of $ \varepsilon$ to  $\hat{\vec{r}}^{(i+1)}(v)$ with probability $\frac{1}{\varepsilon}\frac{Y_{i+1}}{Y_i} \cdot \frac{\hat{\vec{r}}^{(i)}(u)}{d_v^a\cdot d_u^b}$, leading to an expected increment of $\frac{Y_{i+1}}{Y_i} \cdot \frac{\hat{\vec{r}}^{(i)}(u)}{d_v^a\cdot d_u^b}$. Therefore, Algorithm~\ref{alg:AGP-RQ} computes an unbiased estimator for each residue vector $\vec{r}^{(i)}$, and consequently an unbiased estimator for each reserve vector $\vec{q}^{(i)}$.

In the next Lemma, we bound the variance of the approximate graph propagation vector $\hat{\pi}$, which takes a surprisingly simple form. 
\begin{lemma}
	\label{lem:variance}
    For any node $v \in V$, the variance of $\hat{\vec{\pi}}(v)$ obtained by Algorithm~\ref{alg:AGP-RQ} satisfies 
    $	\Var \left[ \hat{\vec{\pi}}(v) \right]\le \frac{L(L+1)\varepsilon}{2} \cdot \vec{\pi}(v). $
\end{lemma}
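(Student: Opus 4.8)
The plan is to bound $\Var[\hat{\vec r}^{(\ell)}(v)]$ level by level, convert it into a bound on $\Var[\hat{\vec q}^{(\ell)}(v)]$, and combine the per-level bounds through the triangle inequality for the $L^2$ norm and a Cauchy--Schwarz step. The central claim is
\begin{align}\nonumber
\Var\big[\hat{\vec r}^{(\ell)}(v)\big]\le \ell\,\varepsilon\cdot\vec r^{(\ell)}(v)\qquad\text{for all }\ell\in\{0,1,\dots,L\}.
\end{align}
Granting it, since $\hat{\vec q}^{(\ell)}=(w_\ell/Y_\ell)\hat{\vec r}^{(\ell)}$, $\vec q^{(\ell)}=(w_\ell/Y_\ell)\vec r^{(\ell)}$ and $w_\ell\le Y_\ell$ (the weights are non-negative), we get $\Var[\hat{\vec q}^{(\ell)}(v)]=(w_\ell/Y_\ell)^2\Var[\hat{\vec r}^{(\ell)}(v)]\le (w_\ell/Y_\ell)\,\ell\,\varepsilon\,\vec q^{(\ell)}(v)\le \ell\,\varepsilon\,\vec q^{(\ell)}(v)$. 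As $\hat{\vec\pi}(v)=\sum_{\ell=0}^{L}\hat{\vec q}^{(\ell)}(v)$ with $\E[\hat{\vec q}^{(\ell)}(v)]=\vec q^{(\ell)}(v)$ by Lemma~\ref{lem:unbiasedness}, the triangle inequality for $\|Z\|:=\sqrt{\Var[Z]}$ gives $\|\hat{\vec\pi}(v)\|\le\sum_{\ell=0}^{L}\|\hat{\vec q}^{(\ell)}(v)\|\le\sum_{\ell=0}^{L}\sqrt{\ell\,\varepsilon\,\vec q^{(\ell)}(v)}$, and squaring and applying Cauchy--Schwarz,
\begin{align}\nonumber
\Var[\hat{\vec\pi}(v)]\le\Big(\sum_{\ell=0}^{L}\ell\Big)\Big(\varepsilon\sum_{\ell=0}^{L}\vec q^{(\ell)}(v)\Big)=\frac{L(L+1)}{2}\,\varepsilon\sum_{\ell=0}^{L}\vec q^{(\ell)}(v)\le\frac{L(L+1)}{2}\,\varepsilon\,\vec\pi(v),
\end{align}
where the last inequality uses $\sum_{\ell=0}^{L}\vec q^{(\ell)}(v)\le\sum_{\ell=0}^{\infty}\vec q^{(\ell)}(v)=\vec\pi(v)$, valid because every $\vec q^{(\ell)}$ is entrywise non-negative.

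To prove the per-level claim I would use a Doob (martingale-difference) decomposition of $\hat{\vec r}^{(\ell)}(v)$ over the $\ell$ rounds of subset sampling executed at levels $0,\dots,\ell-1$. Let $\mathcal F_j$ be the $\sigma$-algebra generated by the sampling choices of levels $0,\dots,j-1$, so $\hat{\vec r}^{(j)}$ is $\mathcal F_j$-measurable; applying Lemma~\ref{lem:unbiasedness} to the sub-process started at level $j$ from the realized residue $\hat{\vec r}^{(j)}$, the conditional expectation $\E[\hat{\vec r}^{(\ell)}(v)\mid\mathcal F_j]$ is a \emph{linear} functional $\langle\vec c_j,\hat{\vec r}^{(j)}\rangle$ with $\vec c_j(w)=\tfrac{Y_\ell}{Y_j}\big((\D^{-b}\A\D^{-a})^{\ell-j}\vec e_v\big)(w)\ge 0$, and $\langle\vec c_0,\hat{\vec r}^{(0)}\rangle=\vec r^{(\ell)}(v)$ is deterministic. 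By orthogonality of martingale differences,
\begin{align}\nonumber
\Var\big[\hat{\vec r}^{(\ell)}(v)\big]=\sum_{j=0}^{\ell-1}\E\big[D_j^2\big],\qquad D_j:=\E[\hat{\vec r}^{(\ell)}(v)\mid\mathcal F_{j+1}]-\E[\hat{\vec r}^{(\ell)}(v)\mid\mathcal F_{j}].
\end{align}
By linearity $D_j=\langle\vec c_{j+1},\,\hat{\vec r}^{(j+1)}-\E[\hat{\vec r}^{(j+1)}\mid\mathcal F_j]\rangle$, and conditioned on $\mathcal F_j$ the fluctuation $\hat{\vec r}^{(j+1)}-\E[\hat{\vec r}^{(j+1)}\mid\mathcal F_j]$ is a sum — over source nodes $u$ with $\hat{\vec r}^{(j)}(u)>0$ and their subset-sampled neighbours $w$ — of terms $\varepsilon(X_{u,w}-p_{u,w})\vec e_w$ that are independent across distinct $u$. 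Two numerical facts drive the estimate: (i) a neighbour $w$ is assigned to the sampled set only when the true increment $\tfrac{Y_{j+1}}{Y_j}\tfrac{\hat{\vec r}^{(j)}(u)}{d_w^ad_u^b}$ is at most $\varepsilon$, so $p_{u,w}\le 1$ and $\Var[\varepsilon X_{u,w}]=\varepsilon^2p_{u,w}(1-p_{u,w})\le\varepsilon\cdot\tfrac{Y_{j+1}}{Y_j}\tfrac{\hat{\vec r}^{(j)}(u)}{d_w^ad_u^b}$; and (ii) $\vec c_{j+1}(w)\le 1$, because $Y_\ell\le Y_{j+1}$ and the propagation matrix $\D^{-a}\A\D^{-b}$ (equivalently its transpose) is spectrally normalized for every measure in Table~\ref{tbl:propagation} — (sub-)stochastic for the transition-type measures, and for Katz one works with the $\beta$-damped matrix $\beta\A$, at worst inflating the bound by a constant. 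Combining (i), (ii) and independence across $u$,
\begin{align}\nonumber
\E[D_j^2\mid\mathcal F_j]\le\varepsilon\sum_{w}\vec c_{j+1}(w)^2\,\E[\hat{\vec r}^{(j+1)}(w)\mid\mathcal F_j]\le\varepsilon\sum_{w}\vec c_{j+1}(w)\,\E[\hat{\vec r}^{(j+1)}(w)\mid\mathcal F_j]=\varepsilon\,\E[\hat{\vec r}^{(\ell)}(v)\mid\mathcal F_j],
\end{align}
the last equality being the tower property for the linear functionals $\vec c_j$. Taking expectations gives $\E[D_j^2]\le\varepsilon\,\vec r^{(\ell)}(v)$, and summing the $\ell$ terms $j=0,\dots,\ell-1$ yields the claim.

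\textbf{Main obstacle.} The delicate step is the single-level conditional-variance bound $\E[D_j^2\mid\mathcal F_j]\le\varepsilon\,\E[\hat{\vec r}^{(\ell)}(v)\mid\mathcal F_j]$. Two points need care. First, the indicators $X_{u,w}$ for a \emph{fixed} source $u$ produced by the simplified binomial-plus-rejection subset-sampling routine are not independent; but they are negatively associated, which only decreases $\Var[\langle\vec c_{j+1},\cdot\rangle]$, so the bound is unaffected (alternatively one may invoke the optimal subset sampler of~\cite{bringmann2012efficient}, which returns an independent set). Second, one must actually verify $\vec c_{j+1}(w)\le 1$: this is where the spectral normalization of $\D^{-a}\A\D^{-b}$, together with $\sum_i w_i=1$ and the monotonicity $Y_\ell\le Y_{j+1}$ of the weight tails, enters, and it is the one place the argument is sensitive to the Laplacian parameters $a,b$ and the weight sequence. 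The remaining ingredients — unbiasedness, martingale orthogonality, and the closing Cauchy--Schwarz — are routine.
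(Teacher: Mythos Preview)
Your argument is correct and arrives at the stated bound, but it takes a genuinely different route from the paper. The paper applies a single Doob decomposition directly to $\hat{\vec\pi}(v)$: it introduces $\vec z^{(\ell)}(v)=\E[\hat{\vec\pi}(v)\mid\mathcal F_\ell]$ (written as $\sum_{i<\ell}\tfrac{w_i}{Y_i}\hat{\vec r}^{(i)}(v)+\sum_{i\ge\ell}\sum_u\tfrac{w_i}{Y_\ell}\hat{\vec r}^{(\ell)}(u)p_{i-\ell}(u,v)$), shows $\vec z^{(\ell)}$ is a martingale in $\ell$, and bounds the $\ell$-th increment by $(L-\ell+1)\varepsilon\,\vec\pi(v)$; summing over $\ell$ gives $\tfrac{L(L+1)}{2}\varepsilon\,\vec\pi(v)$. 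You instead run a Doob decomposition on each $\hat{\vec r}^{(\ell)}(v)$ separately to get the per-level claim $\Var[\hat{\vec r}^{(\ell)}(v)]\le\ell\varepsilon\,\vec r^{(\ell)}(v)$, and then combine the (correlated) summands $\hat{\vec q}^{(\ell)}(v)$ via Minkowski plus Cauchy--Schwarz. Both arguments rest on exactly the same two ingredients---the per-push bound $\Var[X^{(\ell)}(w,u)\mid\mathcal F_{\ell-1}]\le\varepsilon\cdot(\text{expected increment})$ and the coefficient bound $p_k(\cdot,\cdot)\le 1$---so neither is more general than the other; the paper's version is a bit more direct (one martingale, no Cauchy--Schwarz), while yours isolates the useful intermediate fact $\Var[\hat{\vec r}^{(\ell)}(v)]\le\ell\varepsilon\,\vec r^{(\ell)}(v)$.

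Two small remarks. First, your worry about negative association is unnecessary: the binomial-plus-uniform-subset implementation produces \emph{exactly} independent Bernoulli indicators within each group (drawing $\ell\sim B(|G_k|,p^*)$ then a uniform $\ell$-subset is distributionally identical to independent $\mathrm{Ber}(p^*)$'s), and the subsequent rejections are independent, so all $X_{u,w}$ are independent given $\mathcal F_j$. Second, the step $\vec c_{j+1}(w)\le 1$ is indeed the same bottleneck the paper faces when it writes $\sum_{i=\ell}^L\tfrac{w_i}{Y_\ell}p_{i-\ell}(u,v)\le L-\ell+1$; both require each $p_k(\cdot,\cdot)\le 1$, which holds for the (sub-)stochastic and symmetric-normalized propagation matrices but, as you note, needs the $\beta$-damping to be folded into the matrix for Katz.
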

Recall that we can set $L=O(\log 1/\varepsilon)$ to obtain a relative error threshold of $\varepsilon$. Lemma~\ref{lem:variance} essentially states that the variance decreases linearly with the error parameter $\varepsilon$. Such property is desirable for bounding the relative error. In particular, for any node $v$ with $\vec{\pi}(v) > 100\cdot \frac{L(L+1)\varepsilon}{2}$, the standard deviation of $\hat{\vec{\pi}}(v)$ is bounded by ${1\over 10}\vec{\pi}(v)$. Therefore, we can set $\varepsilon = \frac{\delta }{50L(L+1)} = \Tilde{O}(\delta)$ to obtain the relative error guarantee in Definition~\ref{def:pro-relative}.
In particular, we have the following theorem that bounds the expected cost of Algorithm~\ref{alg:AGP-RQ} under the relative error guarantee.

\begin{theorem}\label{thm:RP-error}
Algorithm~\ref{alg:AGP-RQ} achieves an approximate propagation with relative error $\delta$, that is, for any node $v$ with $\vec{\pi}(v)>\delta$, $\left|\vec{\pi}(v)-\hat{\vec{\pi}}(v)\right|\hspace{-1mm} \leq \hspace{-1mm}\frac{1}{10} \cdot \vec{\pi}(v)$. The expected time cost can be bounded by
\begin{equation*}
E[Cost] =	O\left(\frac{L^2}{\delta}\cdot \sum_{i=1}^{L} \left\| Y_i \cdot \left(\mathbf{D}^{-a}\mathbf{A}\mathbf{D}^{-b} \right)^i \cdot \vec{x} \right\|_1\right).
\end{equation*}
\end{theorem}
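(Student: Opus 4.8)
\emph{Proof plan.} The plan is to dispatch the relative-error claim quickly from Lemmas~\ref{lem:unbiasedness} and~\ref{lem:variance}, and to spend the real effort on the running-time bound via a per-node charging argument. For the error guarantee: by Assumption~\ref{asm:L} and the discussion preceding Algorithm~\ref{alg:AGP-deter}, truncating the series~\eqref{eqn:pi_gen} at level $L=O(\log\tfrac1\delta)$ perturbs each coordinate of $\vec{\pi}$ by a negligible additive amount, so it suffices to approximate $\sum_{i=0}^{L}\vec{q}^{(i)}$. Lemma~\ref{lem:unbiasedness} gives $\E[\hat{\vec{\pi}}(v)]=\sum_{i=0}^{L}\vec{q}^{(i)}(v)$, i.e.\ $\hat{\vec{\pi}}(v)$ is (up to truncation) unbiased for $\vec{\pi}(v)$, and Lemma~\ref{lem:variance} gives $\Var[\hat{\vec{\pi}}(v)]\le\tfrac{L(L+1)\varepsilon}{2}\vec{\pi}(v)$. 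Choosing $\varepsilon=\Theta(\delta/L^2)$ small enough that $\tfrac{L(L+1)\varepsilon}{2}\le\tfrac{\delta}{10^4}$, for every $v$ with $\vec{\pi}(v)>\delta$ we get $\Var[\hat{\vec{\pi}}(v)]\le\tfrac{1}{10^4}\vec{\pi}(v)^2$, and Chebyshev's inequality yields $\Pr[\,|\hat{\vec{\pi}}(v)-\vec{\pi}(v)|>\tfrac1{10}\vec{\pi}(v)\,]\le\tfrac1{100}$, which is Definition~\ref{def:pro-relative}. Note $\varepsilon=\tilde O(\delta)$, hence $1/\varepsilon=O(L^2/\delta)$.

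\emph{Cost, per node.} Fix a level $i$ and a node $u$ with $\hat{\vec{r}}^{(i)}(u)>0$, and write $m_u^{(i)}=\tfrac{Y_{i+1}}{Y_i}\cdot\tfrac{\hat{\vec{r}}^{(i)}(u)}{d_u^b}\sum_{v\in N_u}d_v^{-a}$ for the mass $u$ would forward to level $i{+}1$ in a non-randomized step; summing over $u$, $\sum_u m_u^{(i)}=\left\|\tfrac{Y_{i+1}}{Y_i}\mathbf{D}^{-a}\mathbf{A}\mathbf{D}^{-b}\hat{\vec{r}}^{(i)}\right\|_1$. The work Algorithm~\ref{alg:AGP-RQ} does on $u$ is: the deterministic pushes of lines~4--5, one subset-sampling operation and the $\varepsilon$-increments of lines~6--8, and an $O(1)$ reserve update. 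The $d_v$-cutoff in line~4 is, as noted in the text, exactly the condition that the push value $\tfrac{Y_{i+1}}{Y_i}\cdot\tfrac{\hat{\vec{r}}^{(i)}(u)}{d_v^a d_u^b}$ exceed $\varepsilon$, so each deterministic push moves more than $\varepsilon$ mass and hence there are fewer than $m_u^{(i)}/\varepsilon$ of them; and $\sum_{v\in N_u}p_v=m_u^{(i)}/\varepsilon$, so by the optimal subset sampler of~\cite{bringmann2012efficient} the sampling step costs $O(1+m_u^{(i)}/\varepsilon)$. Thus the total work on $u$ is $O(1+m_u^{(i)}/\varepsilon)$.

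\emph{Cost, summation.} Summing over the active nodes at level $i$ (all quantities non-negative), the cost there is $O\!\big(N_i+\tfrac1\varepsilon\left\|\tfrac{Y_{i+1}}{Y_i}\mathbf{D}^{-a}\mathbf{A}\mathbf{D}^{-b}\hat{\vec{r}}^{(i)}\right\|_1\big)$ with $N_i=\#\{u:\hat{\vec{r}}^{(i)}(u)>0\}$. Since every residue increment made by Algorithm~\ref{alg:AGP-RQ} is exactly $\varepsilon$ or strictly larger, for $i\ge1$ every active node carries residue $\ge\varepsilon$, so $N_i\le\|\hat{\vec{r}}^{(i)}\|_1/\varepsilon$; the level-$0$ term $N_0=\mathrm{nnz}(\vec{x})$ is bounded by the cost of reading $\vec{x}$ and is not counted. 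Now take the expectation over all the coin flips: because residues stay non-negative, $\E\|\hat{\vec{r}}^{(i)}\|_1=\|\vec{r}^{(i)}\|_1$ by Lemma~\ref{lem:unbiasedness}, and applying the one-level unbiasedness inside its proof together with the recursion~\eqref{eqn:iteration}, $\E\left\|\tfrac{Y_{i+1}}{Y_i}\mathbf{D}^{-a}\mathbf{A}\mathbf{D}^{-b}\hat{\vec{r}}^{(i)}\right\|_1=\|\vec{r}^{(i+1)}\|_1$. Summing over $i=0,\dots,L-1$ and substituting $\vec{r}^{(i)}=Y_i(\mathbf{D}^{-a}\mathbf{A}\mathbf{D}^{-b})^i\vec{x}$ and $1/\varepsilon=O(L^2/\delta)$ yields
\[
\E[Cost]=O\!\left(\frac1\varepsilon\sum_{i=0}^{L}\|\vec{r}^{(i)}\|_1\right)=O\!\left(\frac{L^2}{\delta}\sum_{i=1}^{L}\left\|Y_i(\mathbf{D}^{-a}\mathbf{A}\mathbf{D}^{-b})^i\vec{x}\right\|_1\right),
\]
with the $i=0$ term $\|\vec{r}^{(0)}\|_1=\|\vec{x}\|_1\le1$ absorbed into the constant.

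\emph{Main obstacle.} I expect the delicate part to be handling the nested randomness: $\hat{\vec{r}}^{(i)}$ is itself a random vector depending on all coins tossed at levels $<i$, so the ``cost at level $i$'' estimate must be stated conditionally on $\hat{\vec{r}}^{(i)}$ and then unrolled with the tower rule, invoking at each step the one-level unbiasedness behind Lemma~\ref{lem:unbiasedness} and the non-negativity of the residue vectors --- the latter being precisely what lets $\ell_1$ norms and expectations commute. The remaining point requiring care, though purely mechanical, is verifying that the $d_v$-threshold and the sampling probability $p_v$ in Algorithm~\ref{alg:AGP-RQ} are calibrated so that the per-node work is $O(1)$ plus $1/\varepsilon$ times the forwarded mass; this is what produces both the $\tilde O(1/\delta)$ factor and the $\ell_1$-norm shape of the final bound.
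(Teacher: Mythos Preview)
The proposal is correct and follows essentially the same approach as the paper: unbiasedness (Lemma~\ref{lem:unbiasedness}) plus the variance bound (Lemma~\ref{lem:variance}) plus Chebyshev with $\varepsilon=\Theta(\delta/L^2)$ for the error guarantee, and a charging argument combined with the tower rule and Lemma~\ref{lem:unbiasedness} to turn the random cost into $\tfrac{1}{\varepsilon}\sum_i\|\vec{r}^{(i)}\|_1$. The only organizational difference is that the paper charges \emph{per edge}---defining $C^{(i)}(u,v)\in\{0,1\}$ and bounding $\E[C^{(i)}(u,v)]\le\tfrac{1}{\varepsilon}\cdot\tfrac{Y_i}{Y_{i-1}}\cdot\tfrac{\vec r^{(i-1)}(u)}{d_v^a d_u^b}$ directly---whereas you charge \emph{per node} and then separately bound the additive per-node subset-sampling overhead via the observation $N_i\le\|\hat{\vec r}^{(i)}\|_1/\varepsilon$ (every increment is $\ge\varepsilon$); this is a term the paper's write-up does not make explicit, so your accounting is if anything slightly tighter.
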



To understand the time complexity in Theorem~\ref{thm:RP-error}, note that by the Pigeonhole principle, $ \frac{1}{\delta}\sum_{i=1}^L \left\| w_i \cdot \left(\mathbf{D}^{-a}\mathbf{A}\mathbf{D}^{-b} \right)^i \cdot \vec{x} \right\|_1$ is the upper bound of the number of $v\in V$ satisfying $\vec{\pi}(v)\ge \delta$. In the proximity models such as PPR, HKPR and transition probability, this bound is the output size of the propagation, which means Algorithm~\ref{alg:AGP-RQ} achieves near optimal time complexity. We defer the detailed explanation to the technical report~\cite{TechnicalReport}.

Furthermore, we can compare the time complexity of Algorithm~\ref{alg:AGP-RQ} with other state-of-the-art algorithms in specific applications. For example, in the setting of heat kernel PageRank, the goal is to estimate $\vec{\pi}=\sum_{i=0}^\infty e^{-t} \cdot \frac{t^i}{i!}\cdot \left(\mathbf{A}\mathbf{D}^{-1} \right)^i \cdot \vec{e}_s $ for a given node $s$. The state-of-the-art algorithm TEA~\cite{yang2019TEA} computes an approximate HKPR vector $\hat{\vec{\pi}}$ such that for any $\pi(v)>\delta$, $|\vec{\pi}(v)-\hat{\vec{\pi}}(v)| \leq \frac{1}{10} \cdot \vec{\pi}(v)$ holds for high probability. By the fact that $t$ is the a constant and $\Tilde{O}$ is the Big-Oh notation ignoring log factors, the total cost of TEA is bounded by $O\left(t\log n \over \delta \right) = \Tilde{O}\left(1 \over \delta \right)$. On the other hand, in the setting of HKPR, the time complexity of Algorithm~\ref{alg:AGP-RQ} is bounded by 
\begin{equation*}
O\left(\frac{L^2}{\delta}\cdot \sum_{i=1}^{L} \left\| Y_i \cdot \left(\cdot \mathbf{A} \mathbf{D}^{-1} \right)^i \cdot \vec{e}_s \right\|_1 \right) = \frac{L^2}{\delta}\cdot \sum_{i=1}^{L} Y_i \le \frac{L^3}{\delta} = \Tilde{O}\left(1 \over \delta \right).
\end{equation*}
Here we use the facts that $ \left\| \left( \mathbf{A} \mathbf{D}^{-1} \right)^i \cdot \vec{e}_s \right\|_1 =1$ and $Y_i \le 1$. This implies that under the specific application of estimating HKPR, the time complexity of the more generalized  Algorithm~\ref{alg:AGP-RQ} is asymptotically the same as the complexity of TEA. Similar bounds also holds for Personalized PageRank and transition probabilities.

\header{\bf Propagation on directed graph.}
Our generalized propagation structure also can be extended to directed graph by $\pi=\sum_{i=0}^\infty w_i \cdot \left(\D^{-a}\tilde{\A}\D^{-b} \right)^i \cdot \vec{x},$
where $\D$ denotes the diagonal out-degree matrix, and $\tilde{\A}$ represents the adjacency matrix or its transition according to specific applications. For PageRank, single-source PPR, HKPR, Katz we set $\tilde{\A}=\A^\top$ with the following recursive equation: 
\begin{equation}\nonumber
	\begin{aligned}
		\vec{r}^{(i+1)}(v)=\sum_{u\in N_{in}(v)} \left(\frac{Y_{i+1}}{Y_i} \right) \cdot \frac{\vec{r}^{(i)}(u)}{d_{out}^{a}(v) \cdot d_{out}^b(u)}. 
	\end{aligned}
\end{equation}
where $N_{in}(v)$ denotes the in-neighbor set of $v$ and $d_{out}(u)$ denotes the out-degree of $u$. For single-target PPR, we set $\tilde{\A}=\A$. 
\section{Experiments} \label{sec:exp}

\begin{table}[t]
	\centering
	\tblcapup
	\caption{Datasets for local clustering.}
	\vspace{-4mm}
	\tblcapdown
	\begin{small}
		\begin{tabular}{|l|l|r|r|} 
			\hline
			{\bf Data Set} & {\bf Type} & {\bf $\boldsymbol{n}$} & {\bf $\boldsymbol{m}$}	 \\ \hline
			YouTube  & undirected & 1,138,499 & 5,980,886 \\
			Orkut-Links  & undirected & 3,072,441 & 234,369,798 \\		
			Twitter  & directed & 41,652,230& 1,468,364,884 \\
			Friendster   & undirected & 68,349,466 & 3,623,698,684 \\
			\hline
		\end{tabular}
	\end{small}
	\label{tbl:datasets}
	\vspace{-2mm}
\end{table}

This section experimentally evaluates AGP's performance in two concrete applications: local clustering with heat kernel PageRank and node classification with GNN. Specifically, Section~\ref{subsec:clustering} presents the experimental results of AGP in local clustering. Section~\ref{subsec:GNN} evaluates the  effectiveness of AGP on existing GNN models.


\begin{figure*}[t]
	\begin{small}
		\centering
		\begin{tabular}{cccc}
			\hspace{-4mm} \includegraphics[height=34mm]{./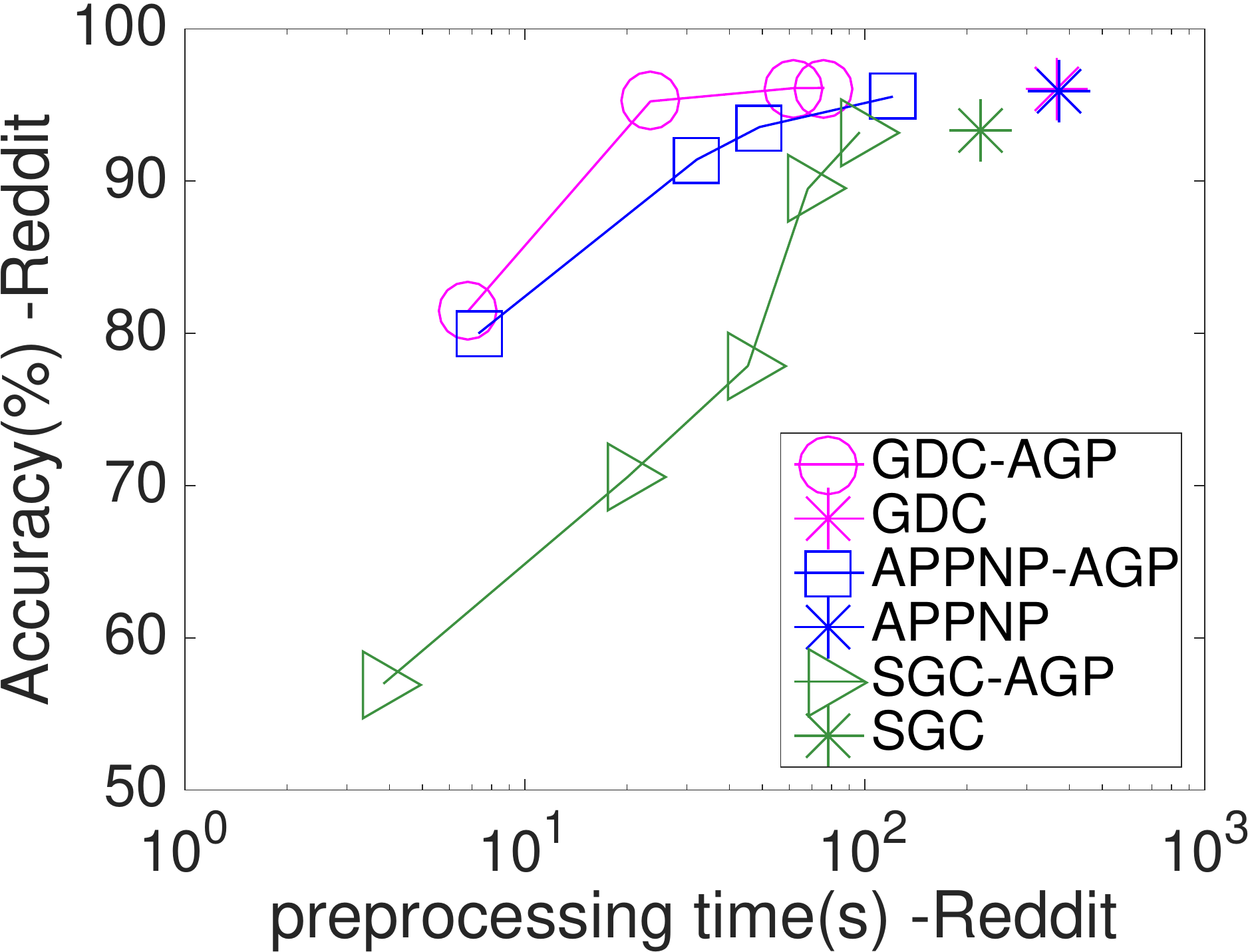} &
			\hspace{-4mm} \includegraphics[height=34mm]{./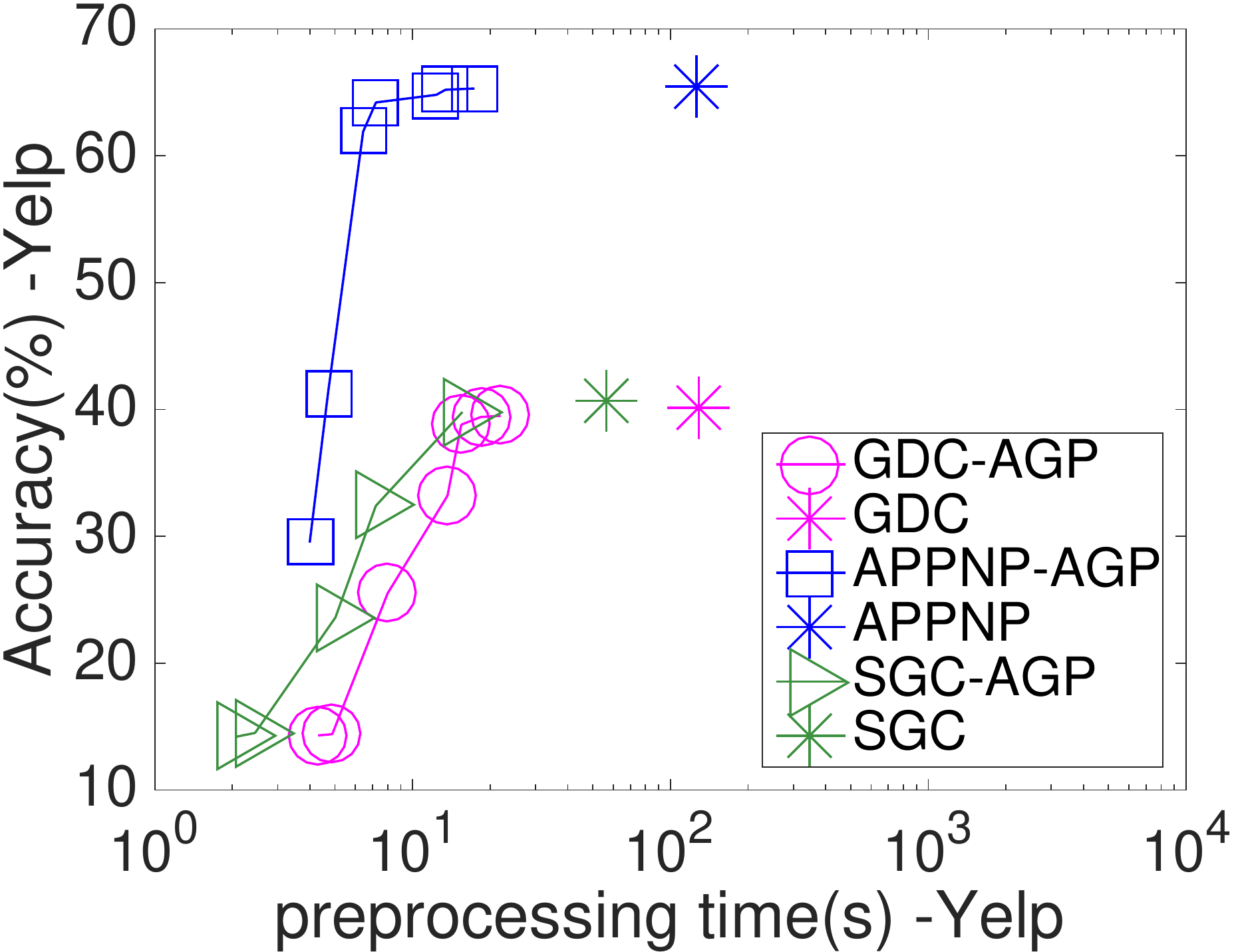} &
			\hspace{-2mm} \includegraphics[height=34mm]{./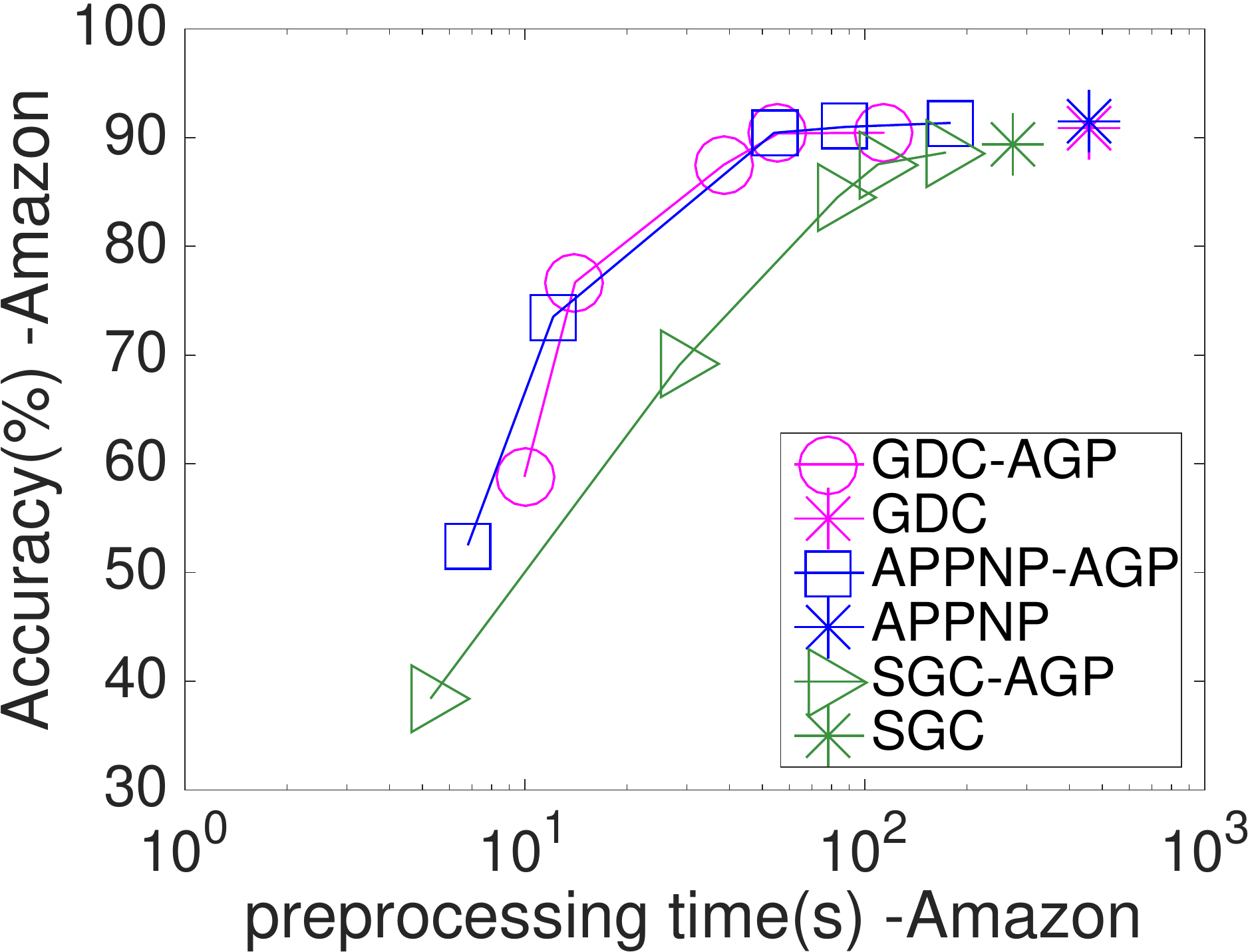} &
			\hspace{-4mm} \includegraphics[height=34mm]{./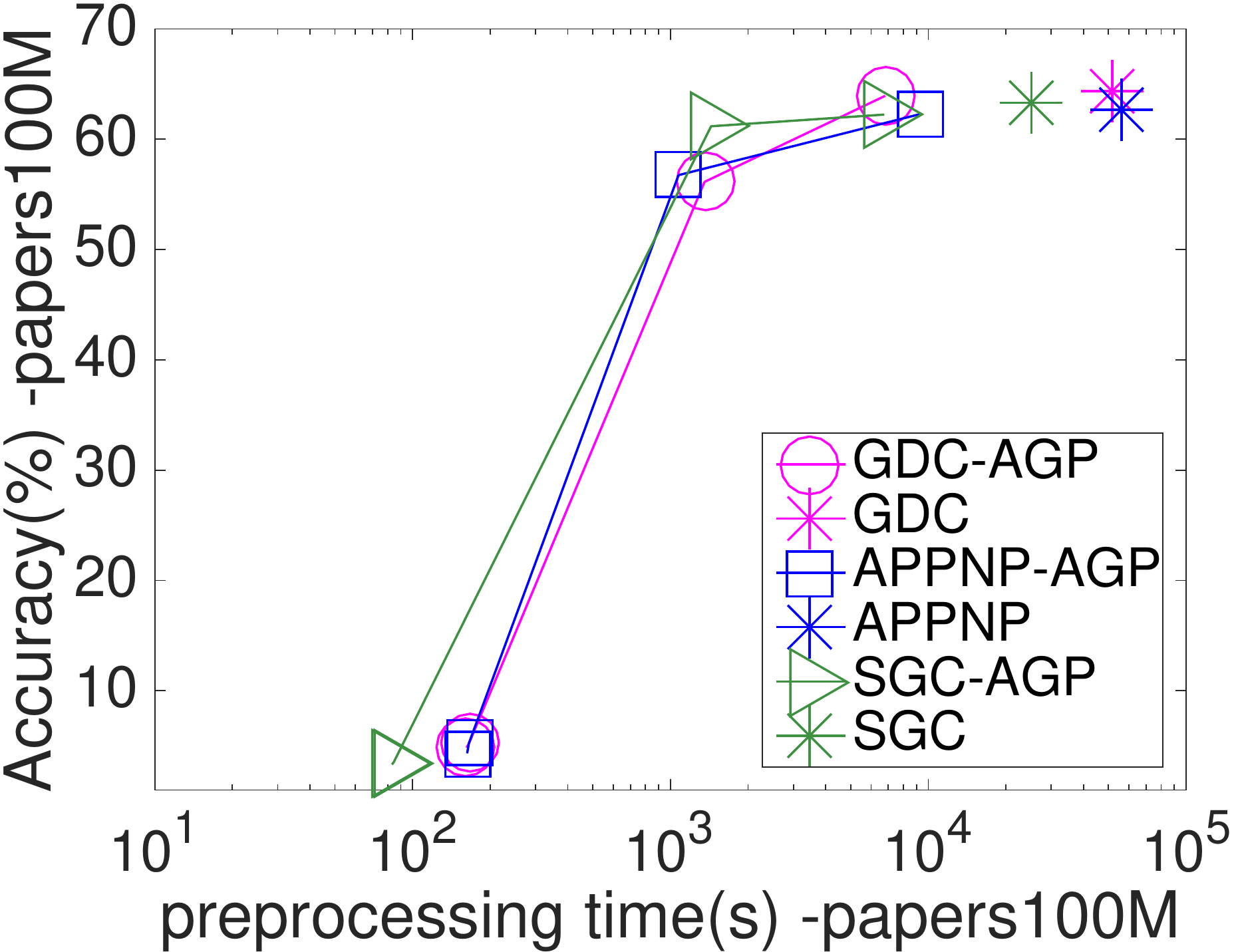} 
		\end{tabular}
		\vspace{-5mm}
		\caption{Tradeoffs between {\em Accuracy(\%)} and preprocessing time in node classification (Best viewed in color).}
		\label{fig:GNN-accuracy-query}
		\vspace{-2mm}
	\end{small}
\end{figure*}

\begin{figure}[t]
	\begin{small}
		\centering
		\begin{tabular}{cccc}
			\hspace{-4mm}  \includegraphics[height=32mm]{./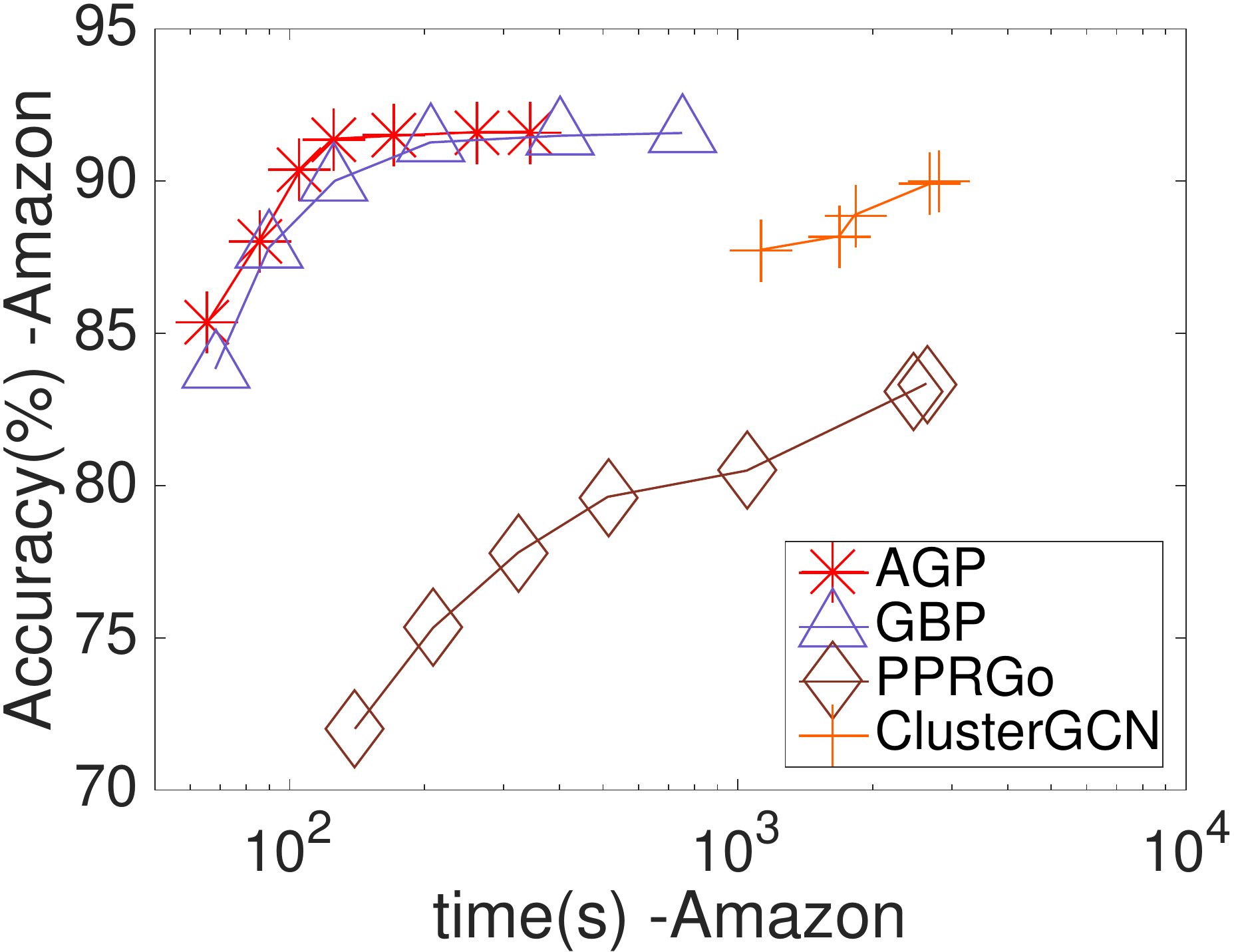} &
			\includegraphics[height=32mm]{./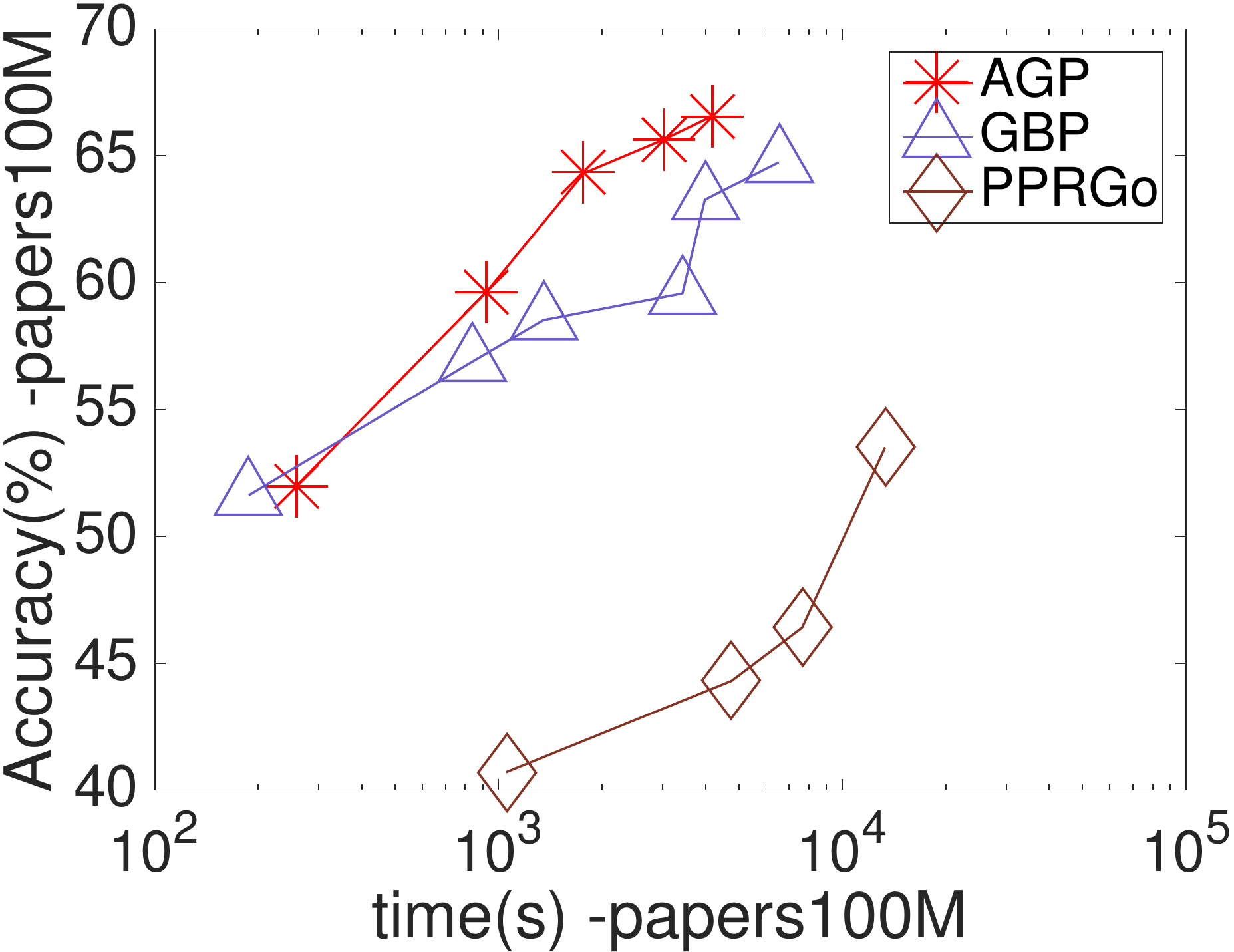}
		\end{tabular}
		\vspace{-5mm}
		\caption{Comparison with GBP, PPRGo and ClusterGCN.}
		\label{fig:GBP}
		\vspace{-4mm}
	\end{small}
\end{figure}

\subsection{Local clustering with HKPR}\label{subsec:clustering}
In this subsection, we conduct experiments to evaluate the performance of AGP in local clustering problem. We select HKPR among various node proximity measures as it achieves the state-of-the-art result for local clustering~\cite{chung2018computing,kloster2014heat,yang2019TEA}. 





\header{\bf Datasets and Environment.} 
We use three undirected graphs: YouTube, Orkut, and Friendster in our experiments, as most of the local clustering methods can only support undirected graphs. We also use a large directed graph Twitter to demonstrate AGP's effectiveness on directed graphs. The four datasets can be obtained from~\cite{snapnets,LWA}. We summarize the detailed information of the four datasets in Table~\ref{tbl:datasets}.

\header{\bf Methods.} 
We compare AGP with four local clustering methods: TEA~\cite{yang2019TEA} and its optimized version TEA+, ClusterHKPR~\cite{chung2018computing}, and the Monte-Carlo method (MC). We use the results derived by the Basic Propagation Algorithm~\ref{alg:AGP-deter} with $L = 50$ as the ground truths for evaluating the trade-off curves of the approximate algorithms. Detailed descriptions on parameter settings are deferred to the appendix due to the space limits.

\header{\bf Metrics.} 
We use {\em MaxError} as our metric to measure the approximation quality of each method. {\em MaxError} is defined as $MaxError =\max_{v \in V}\left|\frac{\vec{\pi}(v)}{d_v}-\frac{\vec{\epi}(v)}{d_v}\right|$, which measures the maximum error between the true normalized HKPR and the estimated value. We refer to $\frac{\vec{\pi}(v)}{d_{v}} $ as the {\em normalized HKPR} value from $s$ to $v$. 
On directed graph, $d_v$ is substituted by the out-degree $d_{out}(v)$. 
 

We also consider the quality of the cluster algorithms, which is measured by the {\em conductance}. Given a subset $S  \subseteq V$, the conductance is defined as $\Phi(S)\hspace{-1mm}=\hspace{-1mm}\frac{|cut(S)|}{\min\{vol(S),2m-vol(S)\}}$, where $vol(S)\hspace{-1mm}=\hspace{-1mm}\sum_{v \in S}d(v)$, and $cut(S)\hspace{-1mm}=\hspace{-1mm}\{(u,v)\hspace{-1mm}\in\hspace{-1mm} E \mid u \in S, v \in V-S \}$.
We perform a sweeping algorithm~\cite{Teng2004Nibble,FOCS06_FS,chung2007HKPR,chung2018computing,yang2019TEA} to find a subset $S$ with small conductance. More precisely, after deriving the (approximate) HKPR vector from a source node $s$, we sort the nodes $\{v_1, \ldots, v_n\}$ in descending order of the normalized HKPR values that $\frac{\vec{\pi}(v_1)}{d_{v_1}}\hspace{-1mm}\ge\hspace{-1mm} \frac{\vec{\pi}(v_2)}{d_{v_2}} \hspace{-1mm}\ge \hspace{-1mm}\ldots \hspace{-1mm}\ge \hspace{-1mm}\frac{\vec{\pi}(v_n)}{d_{v_n}}$. 
Then, we sweep through $\{v_1, \ldots, v_n\}$ and find the node set with the minimum {\em conductance} among partial sets $S_i\hspace{-1mm} =\hspace{-1mm} \{v_1, \ldots,v_i\}, i=1,\ldots,n-1 $.


\header{\bf Experimental Results.}
Figure~\ref{fig:HKPR-maxerror-query} plots the trade-off curve between the {\em MaxError} and query time. The time of reading graph is not counted in the query time. We observe that AGP  achieves the lowest curve among the five algorithms on all four datasets, which means AGP incurs the least error under the same query time. As a generalized algorithm for the graph propagation problem, these results suggest that AGP outperforms the state-of-the-art HKPR algorithms in terms of the approximation quality. 


To evaluate the quality of the clusters found by each method, Figure~\ref{fig:HKPR-conductance-query} shows the trade-off curve between conductance and the query time on two large undirected graphs Orkut and Friendster. 
We observe that AGP can achieve the lowest conductance-query time curve among the five approximate methods on both of the two datasets, which concurs with the fact that AGP provides estimators that are closest to the actual normalized HKPR.

\subsection{Node classification with GNN}\label{subsec:GNN}
In this section, we evaluate AGP's ability to scale  existing Graph Neural Network models on large graphs.

\header{\bf Datasets.}
We use four publicly available graph datasets with different size: a socal network Reddit~\cite{hamilton2017graphSAGE}, a customer interaction network Yelp~\cite{zeng2019graphsaint}, a co-purchasing network Amazon~\cite{chiang2019clusterGCN} and a large citation network Papers100M~\cite{hu2020ogb}. Table~\ref{tbl:datasets_gnn} summarizes the statistics of the datasets, where $d$ is the dimension of the node feature, and the label rate is the percentage of labeled nodes in the graph. A detailed discussion on datasets is deferred to the appendix. 


\begin{table}[t]
	\centering
	\tblcapup
	\caption{Datasets for node classification.}
	\vspace{-5mm}
	\tblcapdown
	\begin{small}
		\begin{tabular}{|l|r|r|r|r|r|} 
			\hline
			{\bf Data Set} & {\bf $\boldsymbol{n}$}& {\bf $\boldsymbol{m}$} & {\bf $\boldsymbol{d}$}& {\bf Classes} 	& {\bf Label $\%$}  \\ \hline
    Reddit     & 232,965 &   114,615,892 &602 & 41  & 0.0035           \\
    Yelp        & 716,847     & 6,977,410   & 300 & 100 & 0.7500         \\
    Amazon        & 2,449,029   & 61,859,140  & 100   & 47 & 0.7000      \\
    Papers100M     & 111,059,956 & 1,615,685,872  & 128  &  172 & 0.0109           \\
			\hline
		\end{tabular}
	\end{small}
	\label{tbl:datasets_gnn}
\end{table}

\header{\bf GNN models.} 
We first consider three proximity-based GNN models: APPNP~\cite{Klicpera2018APPNP},SGC~\cite{wu2019SGC}, and GDC~\cite{klicpera2019GDC}. We augment the three models with the AGP Algorithm~\ref{alg:AGP-RQ} to obtain three variants: APPNP-AGP, SGC-AGP and GDC-AGP. Besides, we also compare AGP with three scalable methods: PPRGo~\cite{bojchevski2020scaling}, GBP~\cite{chen2020GBP}, and ClusterGCN~\cite{chiang2019clusterGCN}.

\header{\bf Experimental results.} 
For GDC, APPNP and SGC, we divide the computation time into two parts: the preprocessing time for computing $\mathbf{Z}$ and the training time for performing mini-batch. Accrording to~\cite{chen2020GBP}, the preprocessing time is the main bottleneck for achieving scalability. Hence, in Figure~\ref{fig:GNN-accuracy-query}, we show the trade-off between the preprocessing time and the classification accuracy for SGC, APPNP, GDC and the corresponding AGP models. For each dataset, the three snowflakes represent the exact methods SGC, APPNP, and GDC, which can be distinguished by colors.
We observe that compared to the exact models, the approximate models generally achieve a $10\times$ speedup in preprocessing time without sacrificing the classification accuracy. For example, on the billion-edge graph Papers100M,  SGC-AGP achieves an accuracy of $62\%$ in less than $2,000$ seconds, while the exact model SGC needs over $20,000$ seconds to finish. 

Furthermore, we compare AGP against three recent works, PPRGo, GBP, ClusterGCN, to demonstrate the efficiency of AGP. Because ClusterGCN cannot be decomposed into propagation phase and traing phase, in Figure~\ref{fig:GBP}, we draw the trade-off plot between the computation time (i.e., preprocessing time plus training time) and the classsification accuracy on Amazon and Papers100M. 
Note that on Papers100M, we omit ClusterGCN because of the out-of-memory problem. We tune $a, b, w_i$ in AGP for scalability. The detailed hyper-parameters of each method are summarized in the appendix. We observe AGP outperforms PPRGo and ClusterGCN on both Amazon and Papers100M in terms of accuracy and running time. In particular, given the same running time, AGP achieves a higher accuracy than GBP does on Papers100M. We attribute this quality to the randomization introduced in AGP.




\section{Conclusion} \label{sec:conclusion}
\vspace{-1mm}
In this paper, we propose the concept of approximate graph propagation, which unifies various proximity measures, including transition probabilities, PageRank and Personalized PageRank, heat kernel PageRank, and Katz. We present a randomized graph propagation algorithm that achieves almost optimal computation time with a theoretical error guarantee. We conduct an extensive experimental study to demonstrate the effectiveness of AGP on real-world graphs. We show that AGP outperforms the state-of-the-art algorithms in the specific application of local clustering and node classification with GNNs. For future work, it is interesting to see if the AGP framework can inspire new proximity measures for graph learning and mining tasks.



\section{ACKNOWLEDGEMENTS}
\vspace{-1mm}
Zhewei Wei was supported by National Natural Science Foundation of China (NSFC) No. 61972401 and No. 61932001, by the Fundamental Research Funds for the Central Universities and the Research Funds of Renmin University of China under Grant 18XNLG21, and by Alibaba Group through Alibaba Innovative Research Program. 
Hanzhi Wang was supported by the Outstanding Innovative Talents Cultivation Funded Programs 2020 of Renmin Univertity of China.
Sibo Wang was supported by Hong Kong RGC ECS No. 24203419, RGC CRF No. C4158-20G, and NSFC No. U1936205. 
Ye Yuan was supported by NSFC No. 61932004 and No. 61622202, and by FRFCU No. N181605012. 
Xiaoyong Du was supported by NSFC No. 62072458. 
Ji-Rong Wen was supported by NSFC  No. 61832017, and by Beijing Outstanding Young Scientist Program NO. BJJWZYJH012019100020098. 
This work was supported by Public Computing Cloud, Renmin University of China, and by China Unicom Innovation Ecological Cooperation Plan. 




\balance
\bibliographystyle{plain}
\bibliography{paper}


\appendix
\begin{table}[t]
\begin{minipage}[t]{1\columnwidth}
	\tblcapup
	\caption{Hyper-parameters of AGP.}
	\vspace{-5mm}
	\tblcapdown
	\begin{small}
	\begin{threeparttable}
		\begin{tabular}{|@{\hspace{+1mm}}c@{\hspace{1mm}}|@{\hspace{+1mm}}c@{\hspace{+1mm}}|@{\hspace{+1mm}}c@{\hspace{+1mm}}|@{\hspace{+0.5mm}}c@{\hspace{+0.5mm}}|@{\hspace{+0.5mm}}c@{\hspace{+0.5mm}}|@{\hspace{+3.5mm}}c@{\hspace{+3.5mm}}|@{\hspace{+1.1mm}}c@{\hspace{+1.1mm}}|@{\hspace{+1mm}}c@{\hspace{+1mm}}|}
			\hline
			\multirow{2}{*}{{\bf Dataset}} & {\bf Learning}& \multirow{2}{*}{{\bf Dropout}} & {\bf Hidden} &{\bf Batch} & \multirow{2}{*}{\bf $\boldsymbol{t}$}&\multirow{2}{*}{\bf $\boldsymbol{\alpha}$}&\multirow{2}{*}{\bf $\boldsymbol{L}$} \\ 
			~&{\bf rate}&&{\bf dimension} & {\bf size} &&& \\ \hline
Yelp & 0.01 & 0.1 & 2048 & $3\cdot 10^4$&  4 & 0.9& 10 \\ 
Amazon  & 0.01 & 0.1 & 1024 &$10^5$& 4 & 0.2& 10\\ 
Reddit  & 0.0001 & 0.3 &  2048  &$10^4$& 3 & 0.1& 10 \\ 
Papers100M & 0.0001 & 0.3 & 2048 &$10^4$& 4 & 0.2& 10 \\ \hline
	\end{tabular}
	\end{threeparttable}
	\end{small}
	\label{tbl:parameters}
	\vspace{+2mm}
\end{minipage}

\begin{minipage}[t]{1\columnwidth}
	\centering
	\tblcapup
	\caption{Hyper-parameters of GBP.}
	\vspace{-5mm}
	\tblcapdown
	\begin{small}
		\begin{tabular}{|@{\hspace{+1mm}}c@{\hspace{+1mm}}|@{\hspace{+1mm}}c@{\hspace{+1mm}}|@{\hspace{+1mm}}c@{\hspace{+1mm}}|@{\hspace{+0.5mm}}c@{\hspace{+0.5mm}}|@{\hspace{+0.5mm}}c@{\hspace{+0.5mm}}|@{\hspace{+1mm}}c@{\hspace{+1mm}}|@{\hspace{+1mm}}c@{\hspace{+1mm}}|@{\hspace{+0.5mm}}c@{\hspace{+0.5mm}}|}\hline
    \multirow{2}{*}{\bf Dataset} & {\bf Learning} & \multirow{2}{*}{\bf Dropout} & {\bf Hidden} &{\bf Batch} & \multirow{2}{*}{\bf $r_{max}$} & \multirow{2}{*}{\bf $r$} & \multirow{2}{*}{\bf $\alpha$} \\ 
    &{\bf rate}&& {\bf dimension} & {\bf size} & & & \\ \hline
    Amazon & 0.01 & 0.1 &1024& $10^5$ & $10^{-7}$ & 0.2 & 0.2  \\ 
    Papers100M & 0.0001 & 0.3 &2048& $10^4$ & $10^{-8}$ & 0.5 & 0.2  \\ \hline
	\end{tabular}
	\end{small}
	\label{tbl:para-GBP}
	\vspace{+2mm}
\end{minipage}

\begin{minipage}[t]{1\columnwidth}
	\tblcapup
	\caption{Hyper-parameters of PPRGo.}
	\vspace{-5mm}
	\tblcapdown
	\begin{small}
		\begin{tabular}{|@{\hspace{+1mm}}c@{\hspace{+1mm}}|@{\hspace{+1mm}}c@{\hspace{+1mm}}|@{\hspace{+1mm}}c@{\hspace{+1mm}}|@{\hspace{+0.5mm}}c@{\hspace{+0.5mm}}|@{\hspace{+0.5mm}}c@{\hspace{+0.5mm}}|@{\hspace{+0.3mm}}c@{\hspace{+0.3mm}}|@{\hspace{+1mm}}c@{\hspace{+1mm}}|@{\hspace{+1.3mm}}c@{\hspace{+1.3mm}}|}\hline
    \multirow{2}{*}{\bf Dataset} & {\bf Learning} & \multirow{2}{*}{\bf Dropout} & {\bf Hidden} &{\bf Batch} & \multirow{2}{*}{\bf $r_{max}$} & \multirow{2}{*}{\bf $k$} & \multirow{2}{*}{\bf $L$} \\ 
    &{\bf rate}&& {\bf dimension} &{\bf size} & & & \\ \hline
    Amazon & 0.01 & 0.1 & 64 &$10^5$ &$5\cdot 10^{-5}$ & 64 & 8   \\ 
    Papers100M & 0.01 & 0.1 & 64 &$10^4$ &$10^{-4}$ & 32 & 8  \\ \hline
	\end{tabular}
	\end{small}
	\label{tbl:para-PPRGo}
	\vspace{+2mm}
\end{minipage}

\begin{minipage}[t]{1\columnwidth}
	\tblcapup
	\caption{Hyper-parameters of ClusterGCN.}
	\vspace{-5mm}
	\tblcapdown
	\begin{small}
		\begin{tabular}{|@{\hspace{+3.2mm}}c@{\hspace{+3.2mm}}|@{\hspace{+1mm}}c@{\hspace{+1mm}}|@{\hspace{+1mm}}c@{\hspace{+1mm}}|@{\hspace{+0.5mm}}c@{\hspace{+0.5mm}}|@{\hspace{+1mm}}c@{\hspace{+1mm}}|@{\hspace{+2.5mm}}c@{\hspace{+2.5mm}}|}\hline
    \multirow{2}{*}{\bf Dataset} & {\bf Learning} & \multirow{2}{*}{\bf Dropout} & {\bf Hidden} & \multirow{2}{*}{\bf layer} & \multirow{2}{*}{\bf partitions} \\ 
    &{\bf rate}&& {\bf dimension} & & \\ \hline
    Amazon & 0.01 & 0.2 & 400 & 4 & 15000  \\ \hline
	\end{tabular}
	\end{small}
	\label{tbl:para-ClusterGCN}
    \tbldown
	\vspace{+2mm}
\end{minipage}

\begin{minipage}[t]{1\columnwidth}
	\tblcapup
	\caption{URLs of baseline codes.}
	\vspace{-5mm}
	\tblcapdown
	\begin{small}
			
		\begin{tabular}{|c|c|}\hline
    {\bf Methods} & {\bf URL} \\ \hline
    GDC & https://github.com/klicperajo/gdc \\
    APPNP & https://github.com/rusty1s/pytorch$\_$geometric \\
    SGC & https://github.com/Tiiiger/SGC \\ 
    PPRGo & https://github.com/TUM-DAML/pprgo$\_$pytorch \\ 
    GBP & https://github.com/chennnM/GBP \\ 
    ClusterGCN &https://github.com/benedekrozemberczki/ClusterGCN \\ \hline
	\end{tabular}
	\end{small}
	\label{tbl:url}
	\vspace{+2mm}
\end{minipage}
\end{table}

\section{Experimental Details}\label{sec:appendix}

\begin{figure*}[t]
\begin{minipage}[t]{1\textwidth}
	\begin{small}
		\centering
		\begin{tabular}{cccc}
			\hspace{-4mm} \includegraphics[height=32.7mm]{./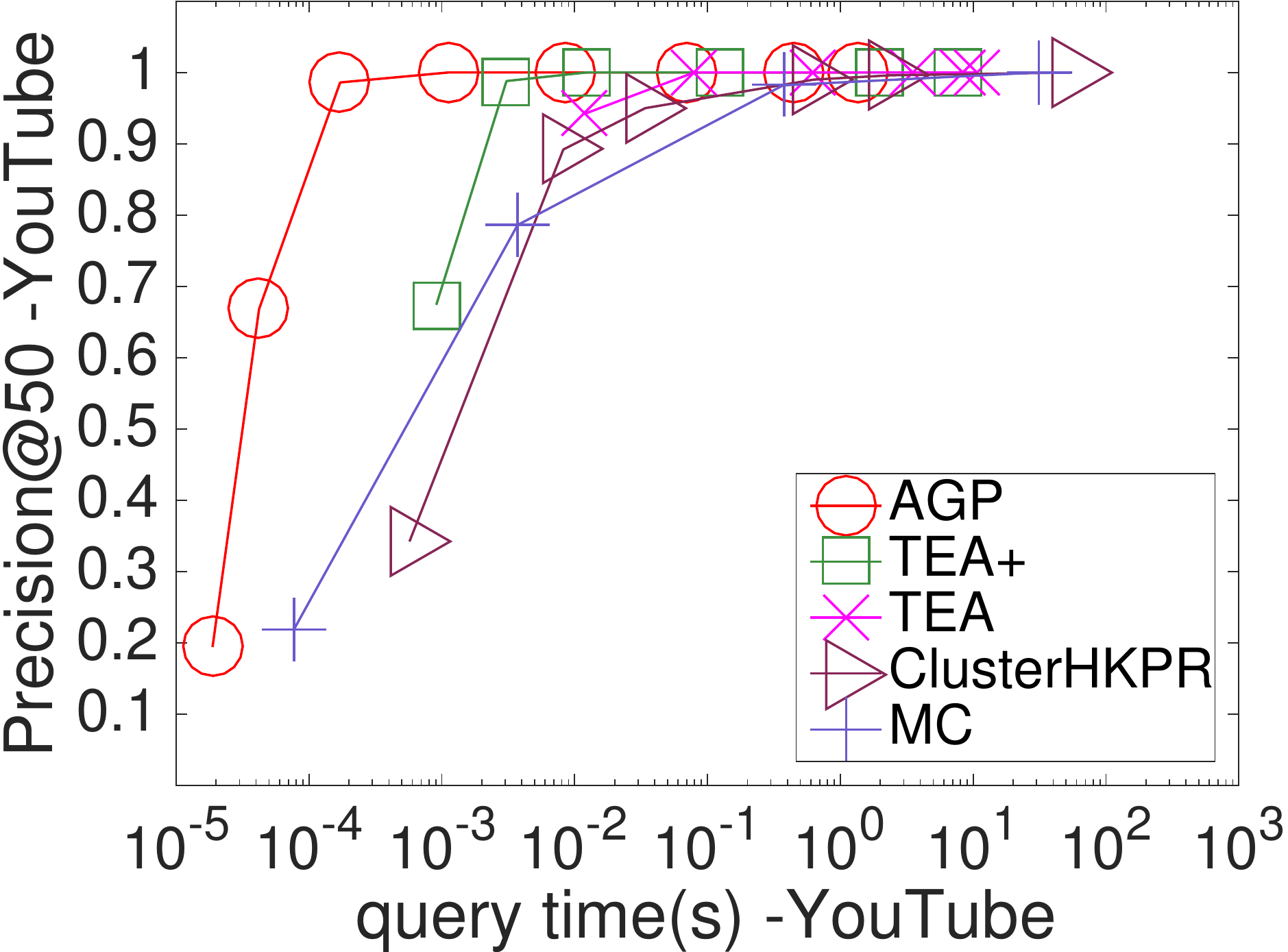} &
			\hspace{-4mm} \includegraphics[height=32.7mm]{./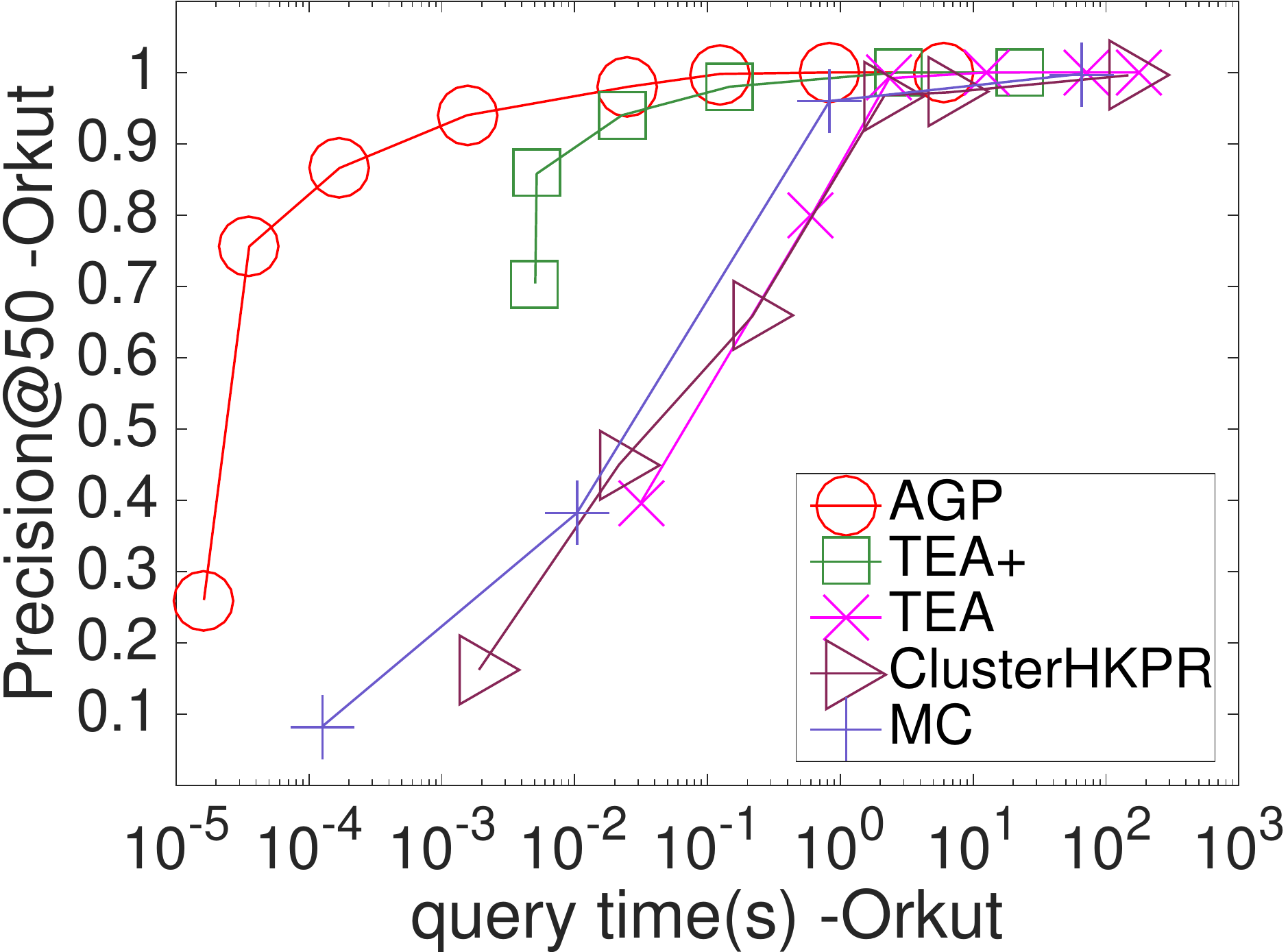} &
			\hspace{-4mm} \includegraphics[height=32.7mm]{./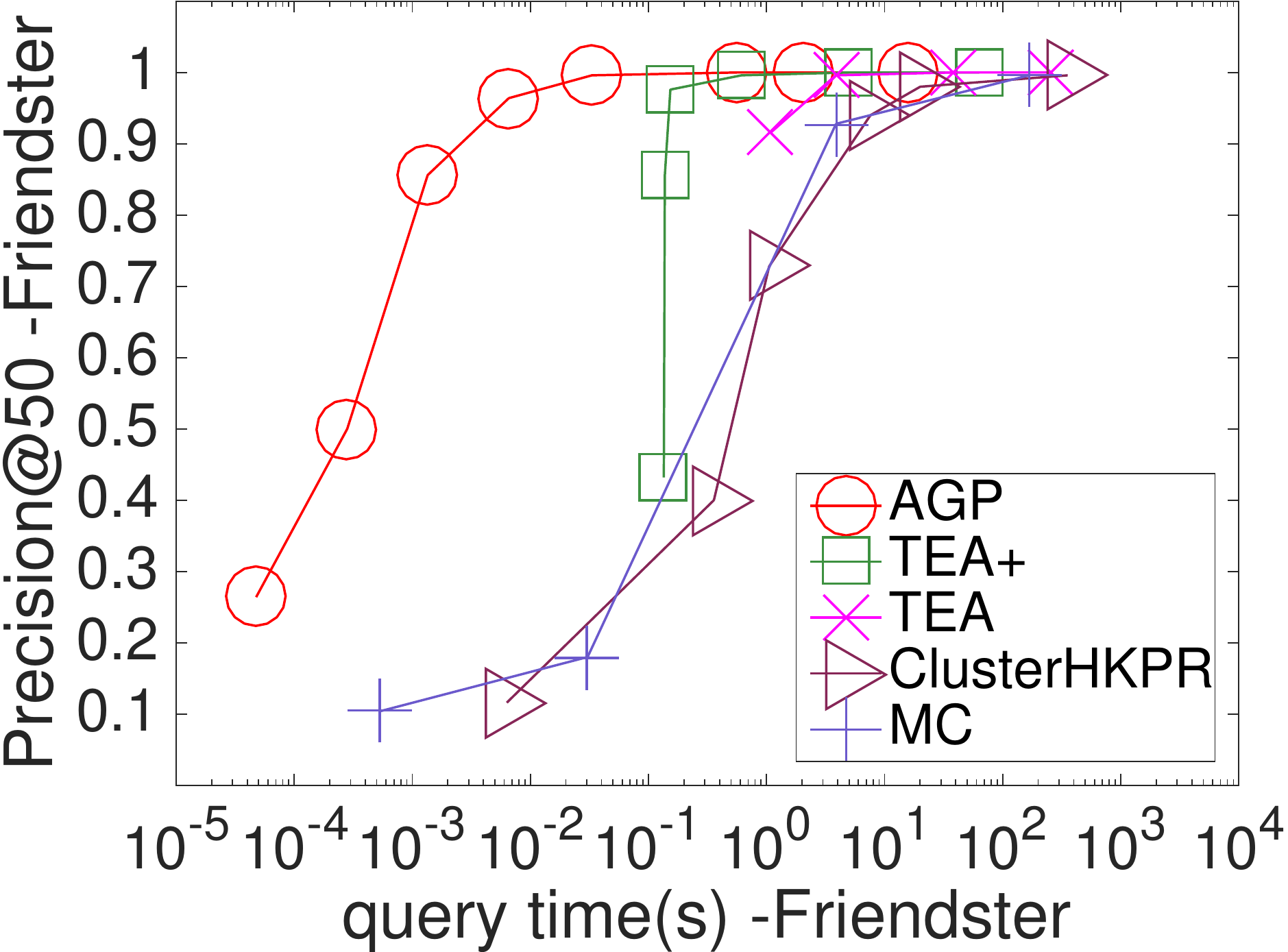} &
			\hspace{-4mm} \includegraphics[height=32.7mm]{./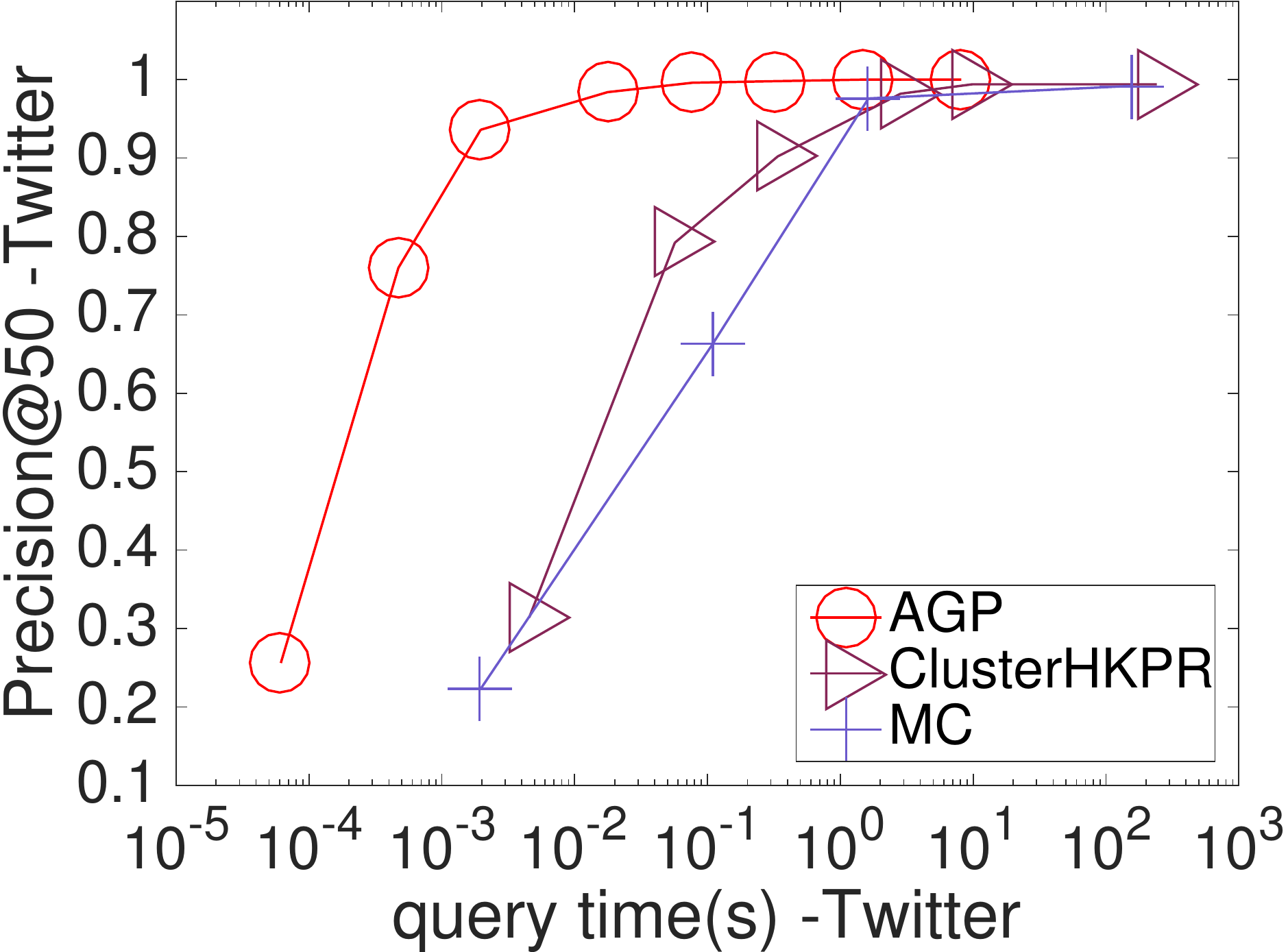} 
		\end{tabular}
		\vspace{-5mm}
		\caption{Tradeoffs between {\em Normalized Precision@50} and query time in local clustering.}
		\label{fig:HKPR-precision-query}
		\vspace{+1mm}
	\end{small}
\end{minipage}

\begin{minipage}[t]{1\textwidth}
	\begin{small}
		\centering
		\begin{tabular}{cccc}
			\hspace{-4mm} \includegraphics[height=34mm]{./Figs/HKPR-conductance-query-OL-eps-converted-to.pdf} &
			\hspace{-4mm} \includegraphics[height=34mm]{./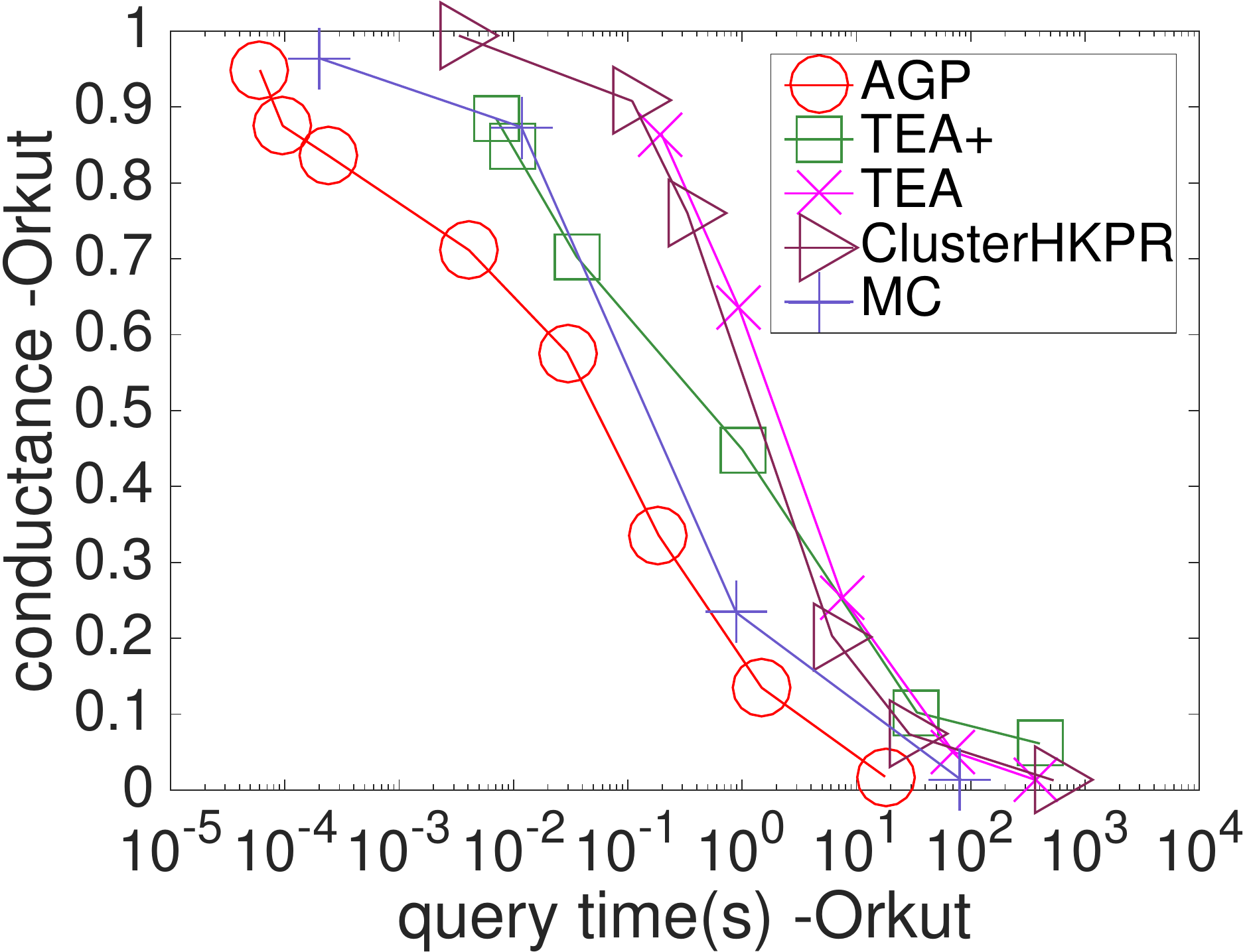} &
			\hspace{-4mm} \includegraphics[height=34mm]{./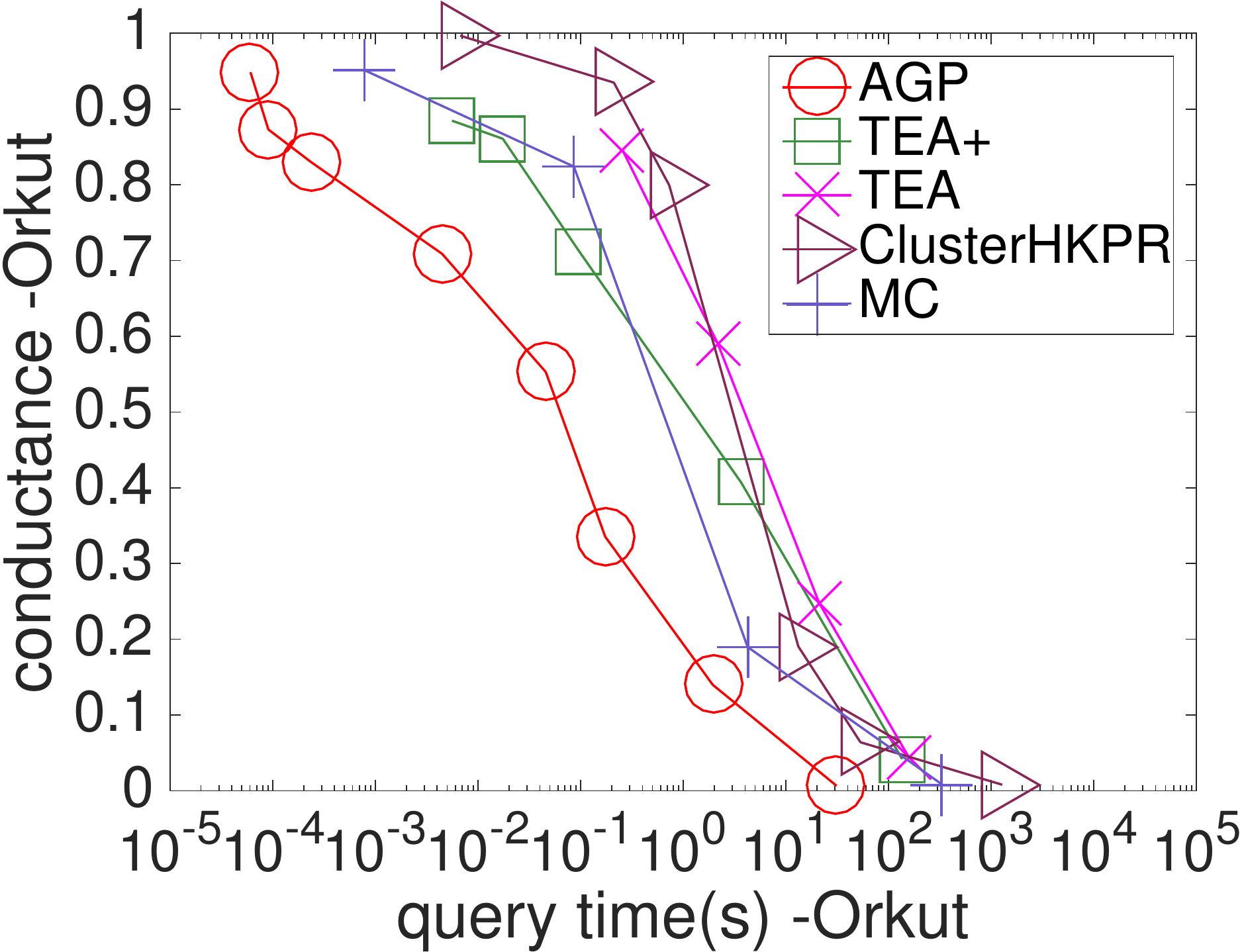} &
			\hspace{-4mm} \includegraphics[height=34mm]{./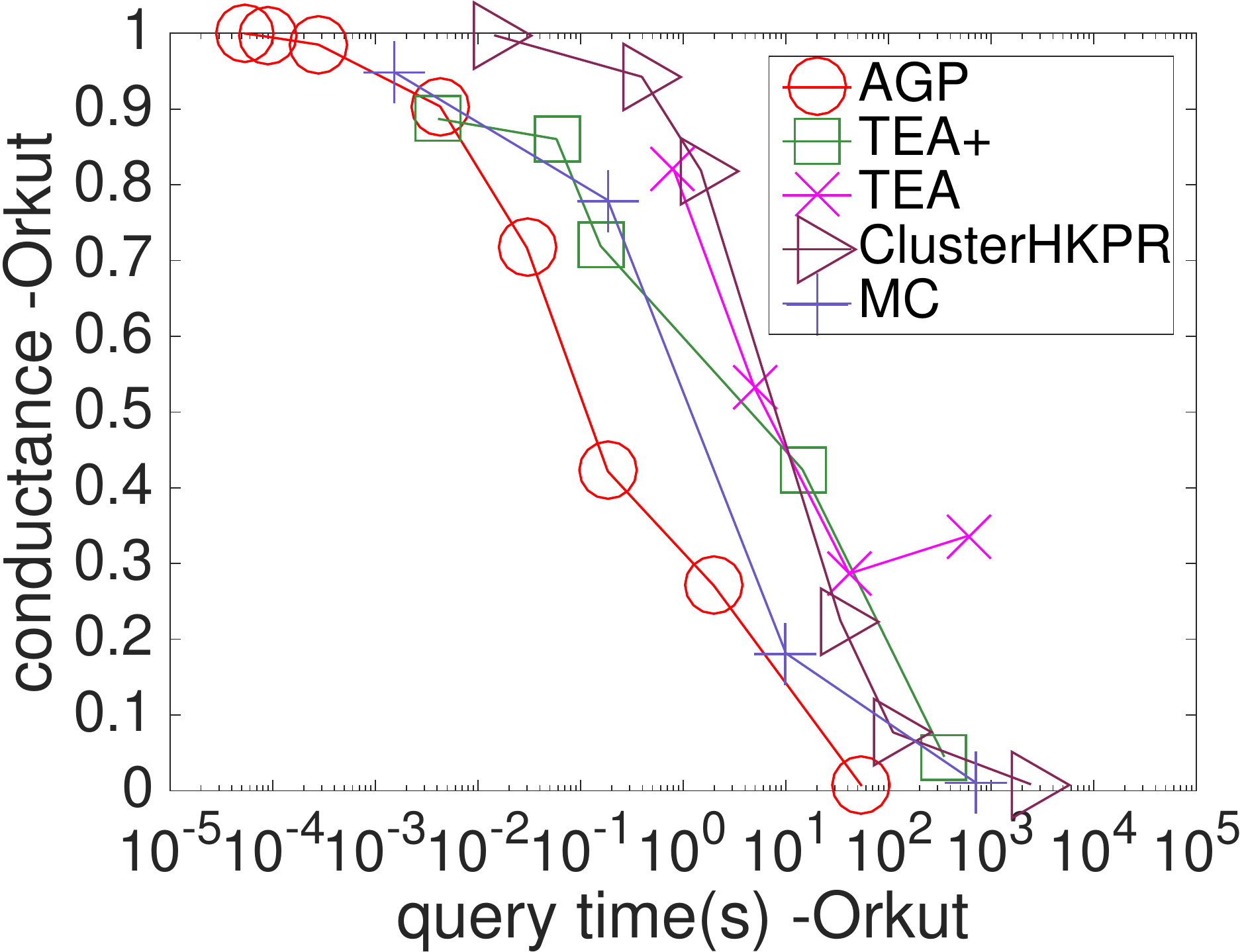} \\
			(a) $t=5$ & (b) $t=10$ & (c) $t=20$ & (d) $t=40$ 
		\end{tabular}
		\vspace{-5mm}
		\caption{Effect of heat constant t for {\em conductance} on {\em Orkut}.}
		\label{fig:conductance-query-OL}
	\end{small}
\end{minipage}
\end{figure*}

%

\subsection{Local clustering with HKPR}

\header{\bf Methods and Parameters. }
For AGP, we set $a=0,b=1, w_i=\frac{e^{-t}t^i}{i!}$, and $\vec{x}=\vec{e}_s$ in Equation~\eqref{eqn:pi_gen} to simulate the HKPR equation $\vec{\pi}=\sum_{i=0}^\infty \frac{e^{-t}t^i}{i!}\cdot \left(\mathbf{A}\mathbf{D}^{-1}\right)^i\cdot \vec{e}_s$. We employ the randomized propagation algorithm~\ref{alg:AGP-RQ} with level number $L= O\left(\log {1/\delta}\right)$ and error parameter $\varepsilon = \frac{2\delta}{L(L+1)}$, where $\delta$ is the relative error threshold in Definition~\ref{def:pro-relative}. We use AGP to denote this method. We vary $\delta$ from $0.1$ to $10^{-8}$ with 0.1 decay step to obtain a trade-off curve between the approximation quality and the query time. 
We use the results derived by the Basic Propagation Algorithm~\ref{alg:AGP-deter} with $L = 50$ as the ground truths for evaluating the trade-off curves of the approximate algorithms.

We compare AGP with four local clustering methods: TEA~\cite{yang2019TEA} and its optimized version TEA+, ClusterHKPR~\cite{chung2018computing}, and the Monte-Carlo method (MC). 
TEA~\cite{yang2019TEA}, as the state-of-the-art clustering method, combines a deterministic push process with the Monte-Carlo random walk. Given a graph $G=(V, E)$, and a seed node $s$, TEA conducts a local search algorithm to explore the graph around $s$ deterministically, and then generates random walks from nodes with residues exceeding a threshold parameter $r_{max}$.  One can manipulate $r_{max}$ to balance the two processes. It is shown in~\cite{yang2019TEA} that TEA can achieve $O\left(\frac{t\cdot \log{n}}{\delta}\right)$ time complexity, where $t$ is the constant heat kernel parameter.
ClusterHKPR~\cite{chung2018computing} is a Monte-Carlo based method that simulates adequate random walks from the given seed node and uses the percentage of random walks terminating at node $v$ as the estimation of $\vec{\pi}(v)$. 
The length of walks $k$ follows the Poisson distribution $\frac{e^{-t}t^k}{k!}$. The number of random walks need to achieve a relative error of $\delta$ in Definition~\ref{def:pro-relative} is $O\left(\frac{t\cdot \log{n}}{\delta^3}\right)$. 
MC~\cite{yang2019TEA} is an optimized version of random walk process that sets identical length for each walk as $L=t\cdot \frac{\log{1/\delta}}{\log{\log{1/\delta}}}$. If a random walk visit node $v$ at the $k$-th step, we add $\frac{e^{-t}t^k}{n_r \cdot k!}$ to the propagation results $\vec{\epi}(v)$, where $n_r$ denotes the total number of random walks. The number of random walks to achieve a relative error of $\delta$ is also $O\left(\frac{t\cdot \log{n}}{\delta^3}\right)$. Similar to AGP, for each method, we vary $\delta$ from $0.1$ to $10^{-8}$ with 0.1 decay step to obtain a trade-off curve between the approximation quality and the query time. 
Unless specified otherwise, we set the heat kernel parameter $t$ as 5, following~\cite{kloster2014heat, yang2019TEA}. All local clustering experiments are conducted on a machine with an Intel(R) Xeon(R) Gold 6126@2.60GHz CPU and 500GB memory. 

\begin{figure*}[t]
\begin{minipage}[t]{1\textwidth}
	\begin{small}
		\centering
		\vspace{-4mm}
		\begin{tabular}{cccc}
			\hspace{-4.4mm} \includegraphics[height=34.5mm]{./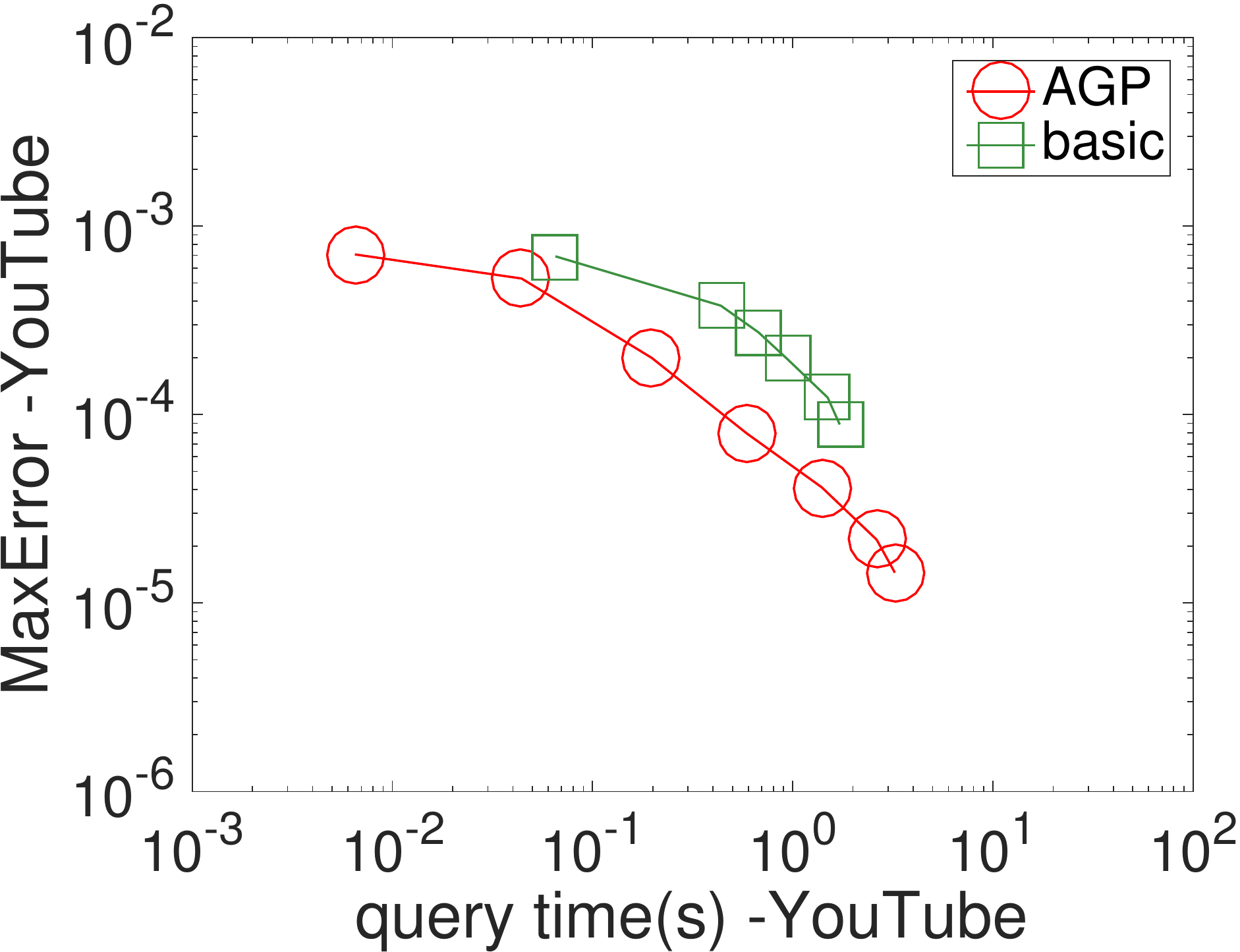} &
			\hspace{-4.4mm} \includegraphics[height=34.5mm]{./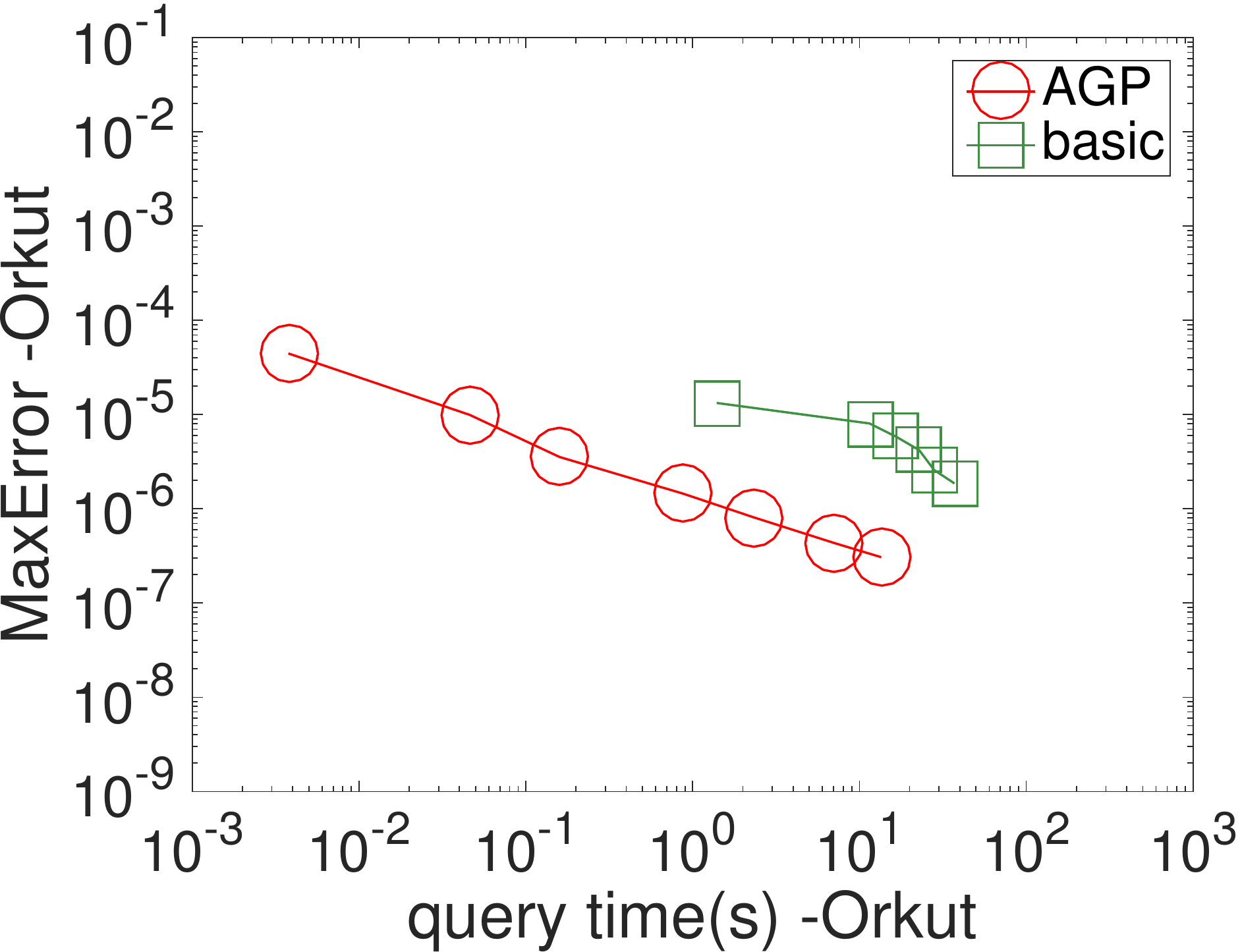} &
			\hspace{-4.4mm} \includegraphics[height=34.5mm]{./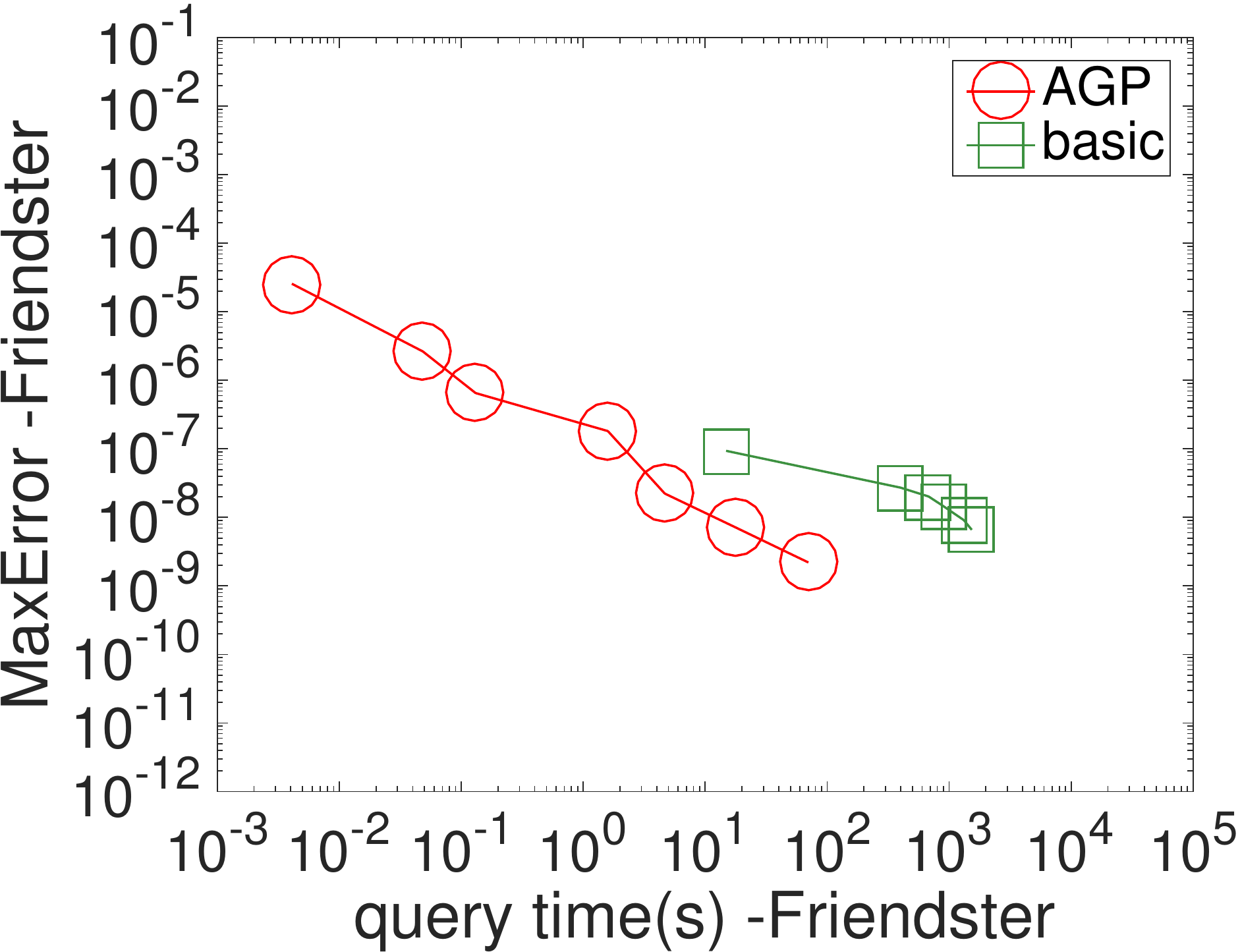} &
			\hspace{-4.4mm} \includegraphics[height=34.5mm]{./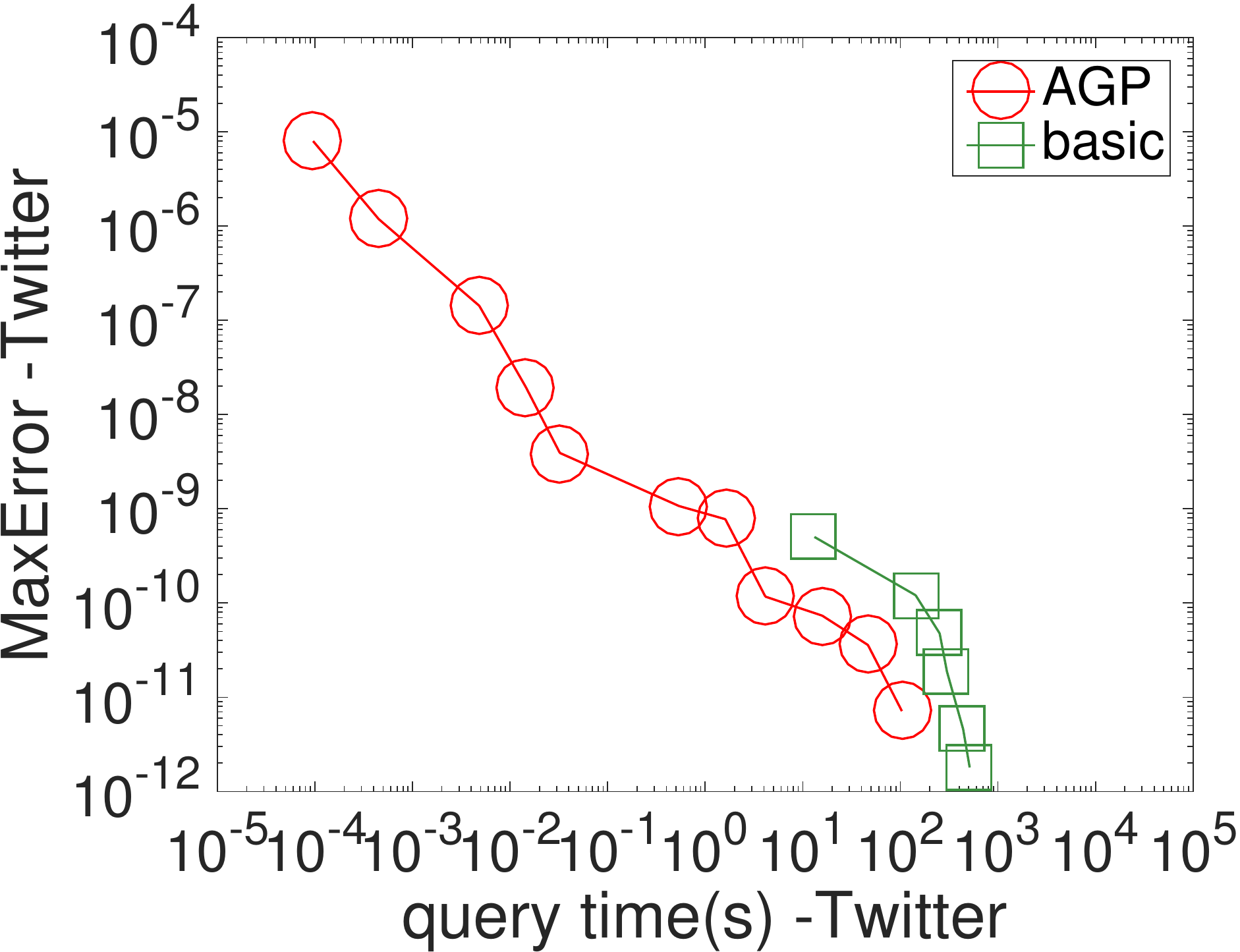} 
		\end{tabular}
		\vspace{-5mm}
		\caption{Tradeoffs between {\em MaxError} and query time of Katz.}
		\label{fig:Katz-MaxError-query}
		\vspace{+1mm}
	\end{small}
\end{minipage}

\begin{minipage}[t]{1\textwidth}
	\begin{small}
		\centering
		\begin{tabular}{cccc}
			\hspace{-4mm} \includegraphics[height=33.8mm]{./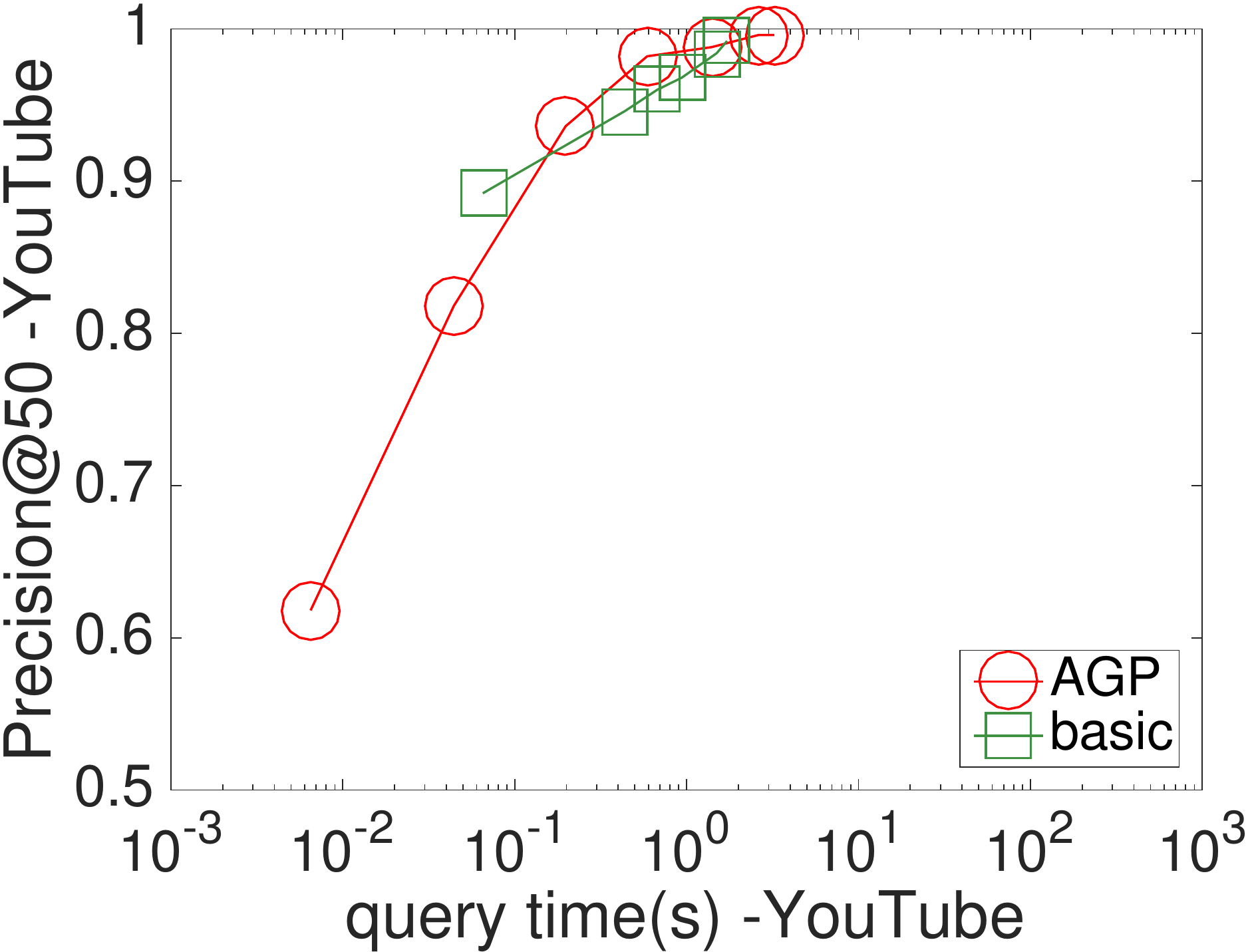} &
			\hspace{-4mm} \includegraphics[height=33.8mm]{./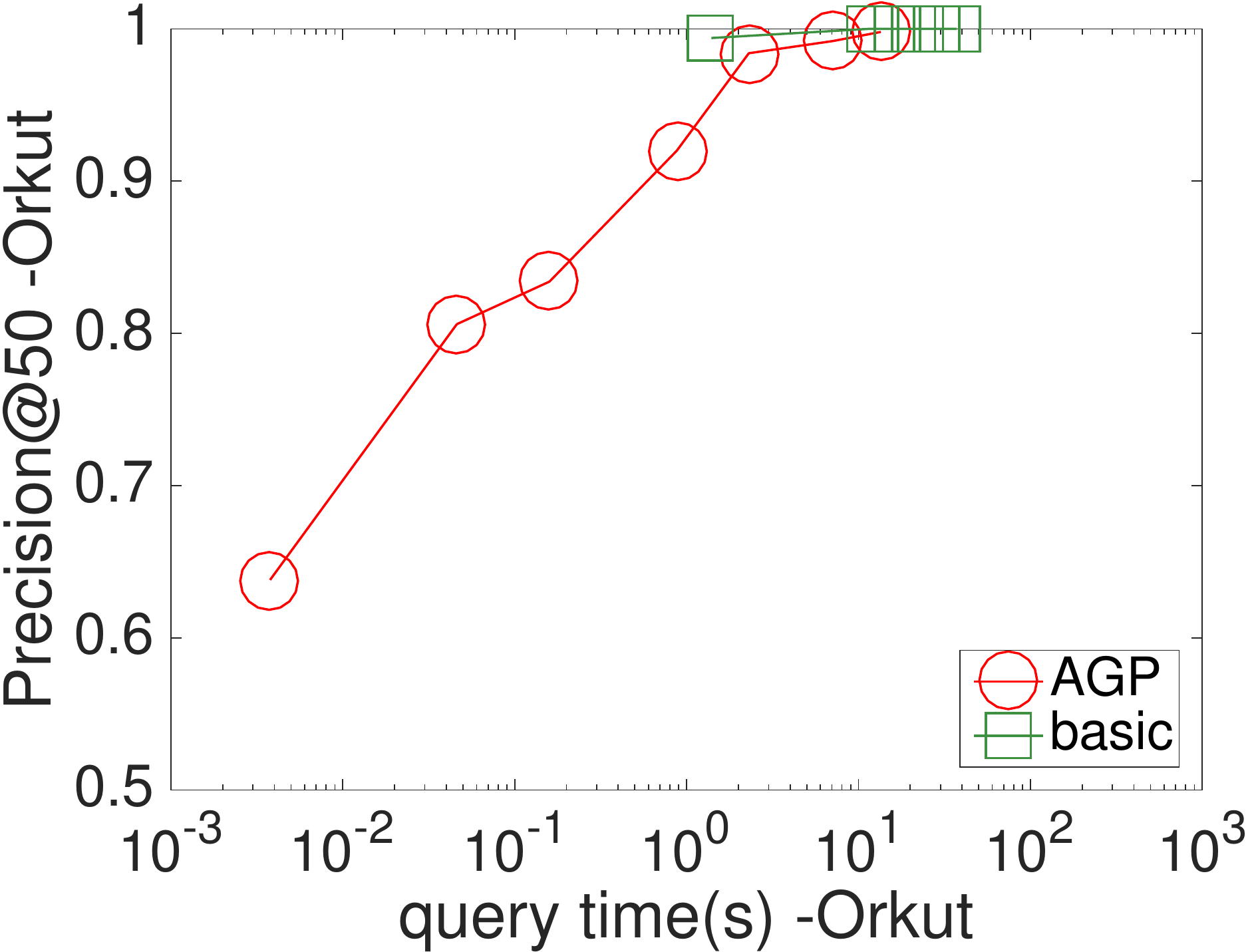} &
			\hspace{-4mm} \includegraphics[height=33.8mm]{./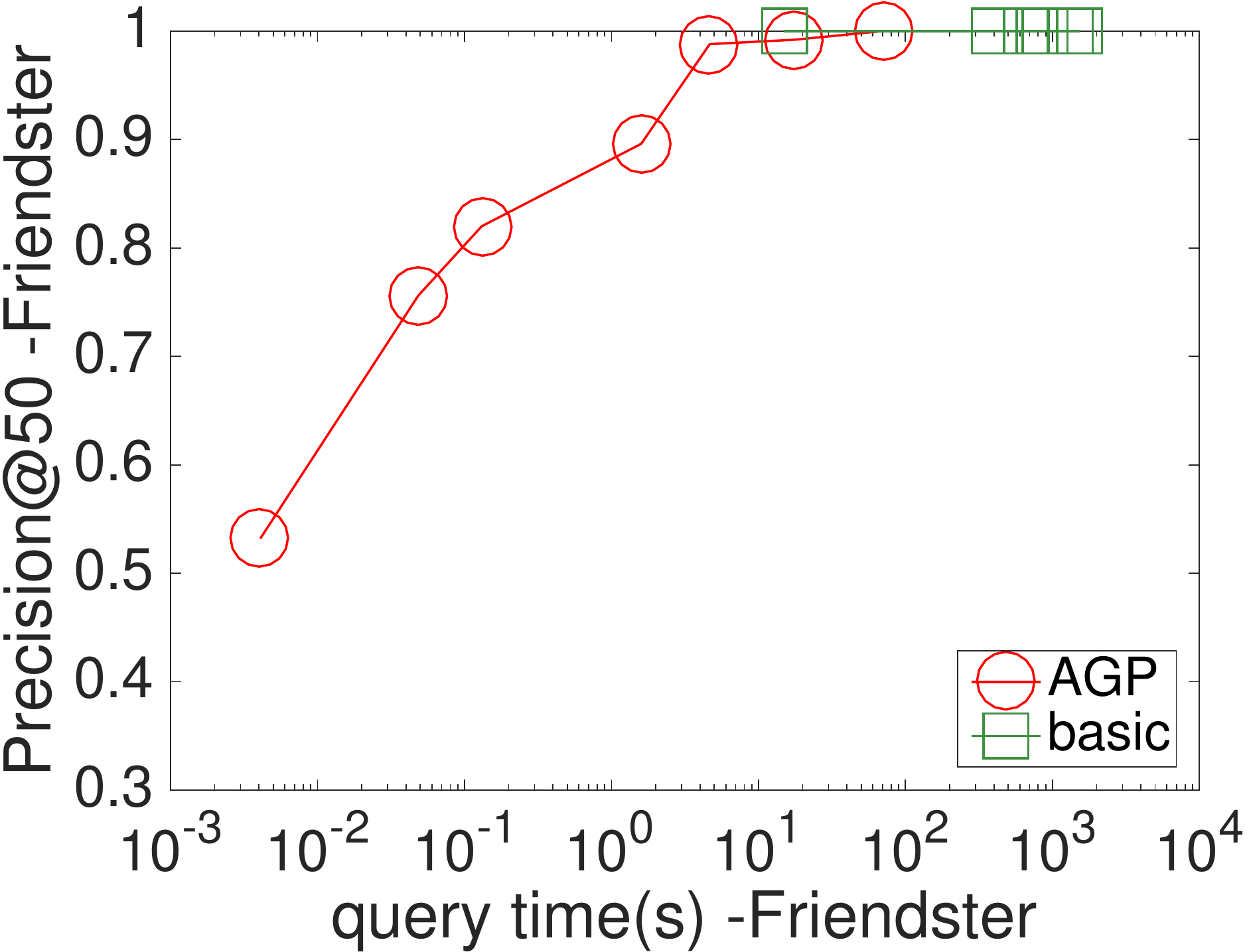} &
			\hspace{-4mm} \includegraphics[height=33.8mm]{./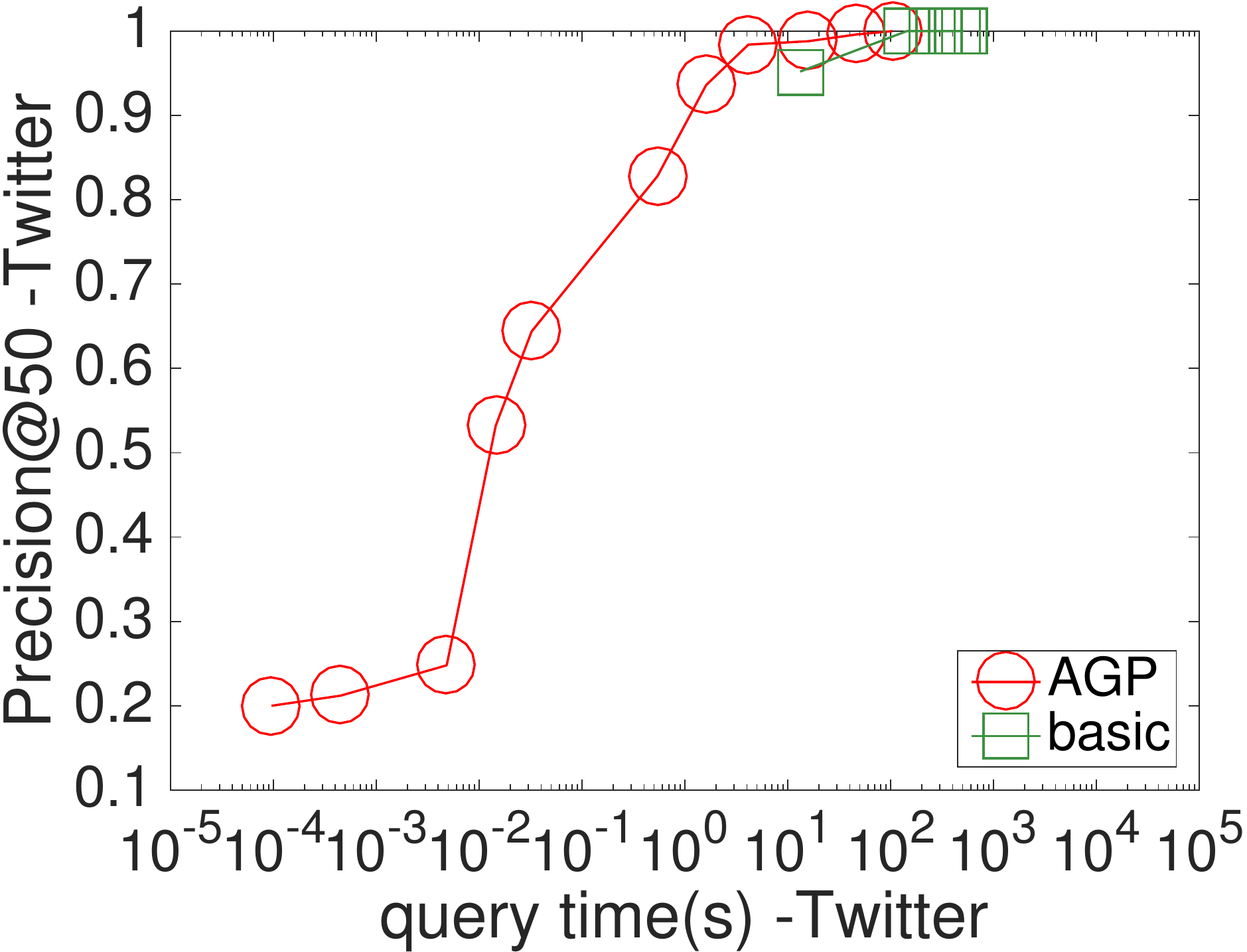} 
		\end{tabular}
		\vspace{-5mm}
		\caption{Tradeoffs between {\em Precision@50} and query time of Katz.}
		\label{fig:Katz-precision-query}
	    \vspace{-2mm}
	\end{small}
	\end{minipage}
\end{figure*}

\subsection{Node classification with GNN}

\header{\bf Datasets. }
Following~\cite{zeng2019graphsaint,zou2019layer}, we perform inductive node classification on Yelp, Amazon and Reddit, and semi-supervised transductive node classification on Papers100M. More specifically, for inductive node classification tasks, we train the model on a graph with labeled nodes and predict nodes' labels on a testing graph. For semi-supervised transductive node classification tasks, we train the model with a small subset of labeled nodes and predict other nodes' labels in the same graph. We follow the same training/validation/test-ing split as previous works in GNN~\cite{zeng2019graphsaint,hu2020ogb}.


\header{\bf GNN models.} 
We first consider three proximity-based GNN models: APPNP~\cite{Klicpera2018APPNP},SGC~\cite{wu2019SGC}, and GDC~\cite{klicpera2019GDC}. We augment the three models with the AGP Algorithm~\ref{alg:AGP-RQ} to obtain three variants: APPNP-AGP, SGC-AGP and GDC-AGP. Take SGC-AGP as an example. Recall that SGC uses $\mathbf{Z}=\left(\mathbf{D}^{-\frac{1}{2}} \mathbf{A}\mathbf{D}^{-\frac{1}{2}} \right)^L \hspace{-1mm}\cdot \X$ to perform feature aggregation, where $\X$ is the $n\times d$ feature matrix. SGC-AGP treats each column of $\X$ as a graph signal $\bm{x}$ and perform randomized propagation algorithm (Algorithm~\ref{alg:AGP-RQ}) with predetermined error parameter $\delta$ to obtain the the final representation $\mathbf{Z}$. To achieve high parallelism, we perform propagation for multiple columns of $\mathbf{X}$ in parallel. Since APPNP and GDC's original implementation cannot scale on billion-edge graph Papers100M, we implement APPNP and GDC in the AGP framework. In particular, we set $\varepsilon = 0$ in Algorithm~\ref{alg:AGP-RQ} to obtain the exact propagation matrix $\mathbf{Z}$, in which case the approximate models APPNP-AGP and GDC-AGP essentially become the exact models APPNP and GDC. We set $L\hspace{-1mm}=\hspace{-1mm}20$ for GDC-AGP and APPNP-AGP, and $L\hspace{-1mm}=\hspace{-1mm}10$ for SGC-AGP. Note that SGC suffers from the over-smoothing problem when the number of layers $L$ is large~\cite{wu2019SGC}. We vary the parameter $\varepsilon$ to obtain a trade-off curve between the classification accuracy and the computation time.

\begin{figure*}[t]
	\centering
	\begin{tabular}{cccc}
		\hspace{-4mm} \includegraphics[height=34mm]{./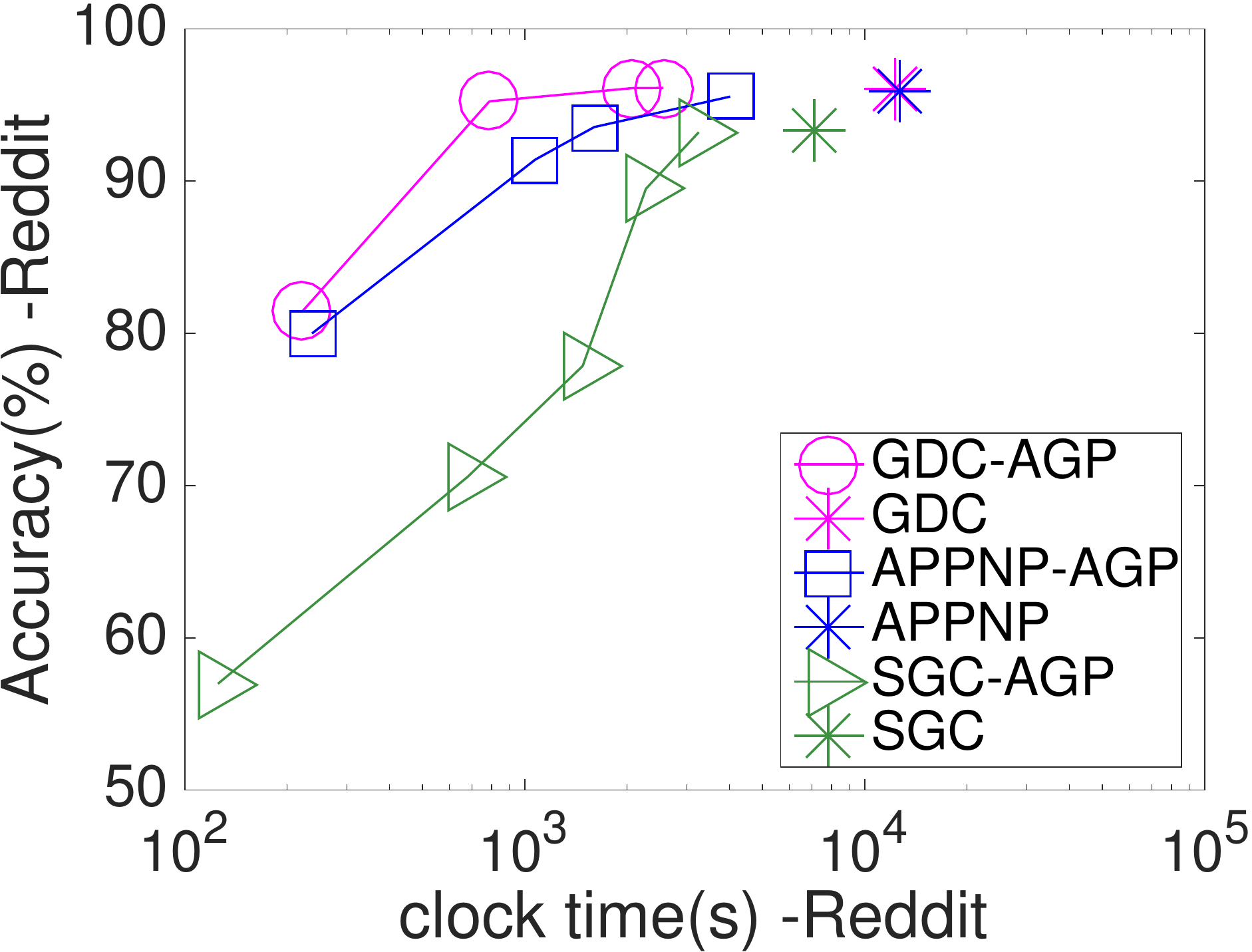} &
		\hspace{-2mm} \includegraphics[height=34mm]{./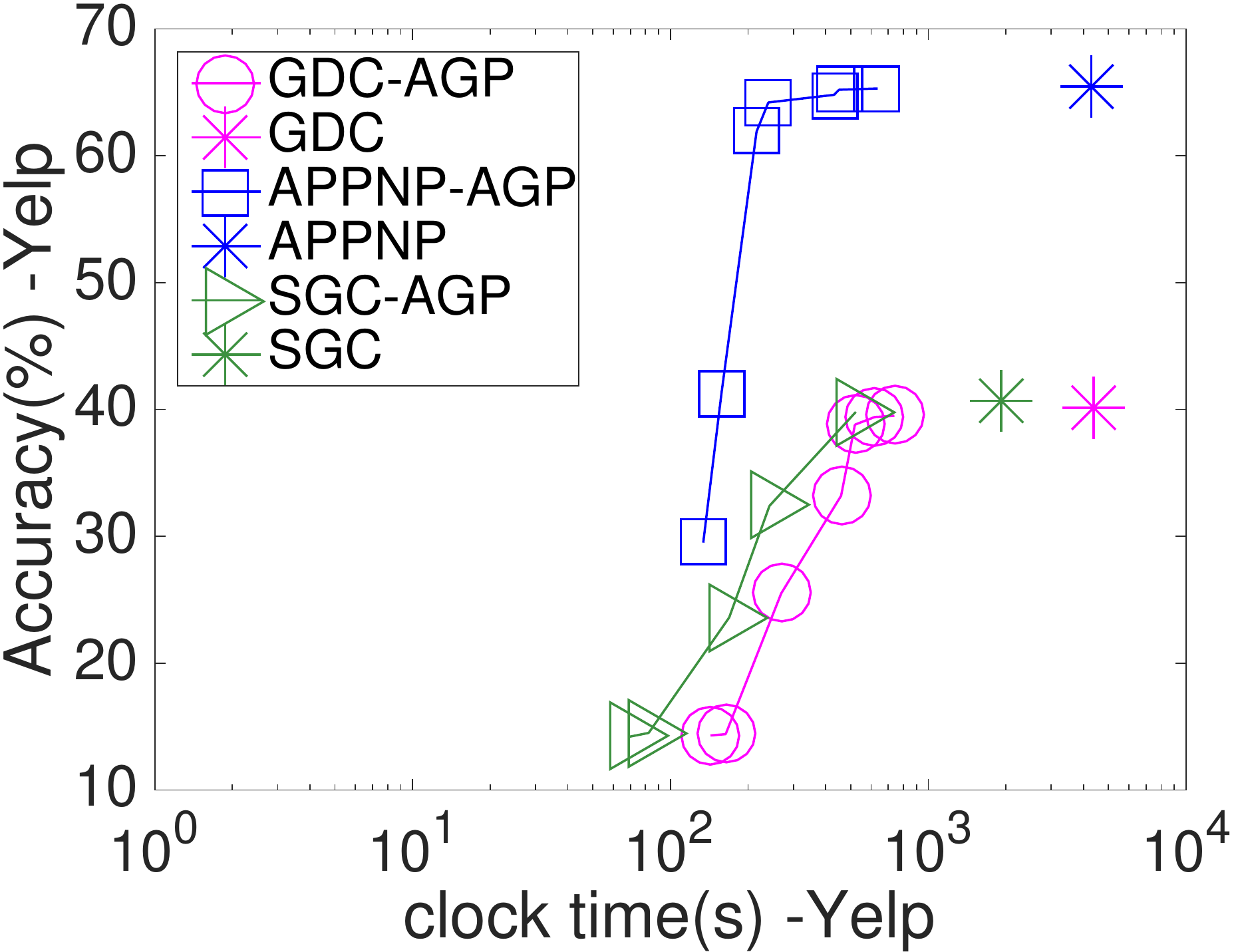} &
		\hspace{-4mm} \includegraphics[height=34mm]{./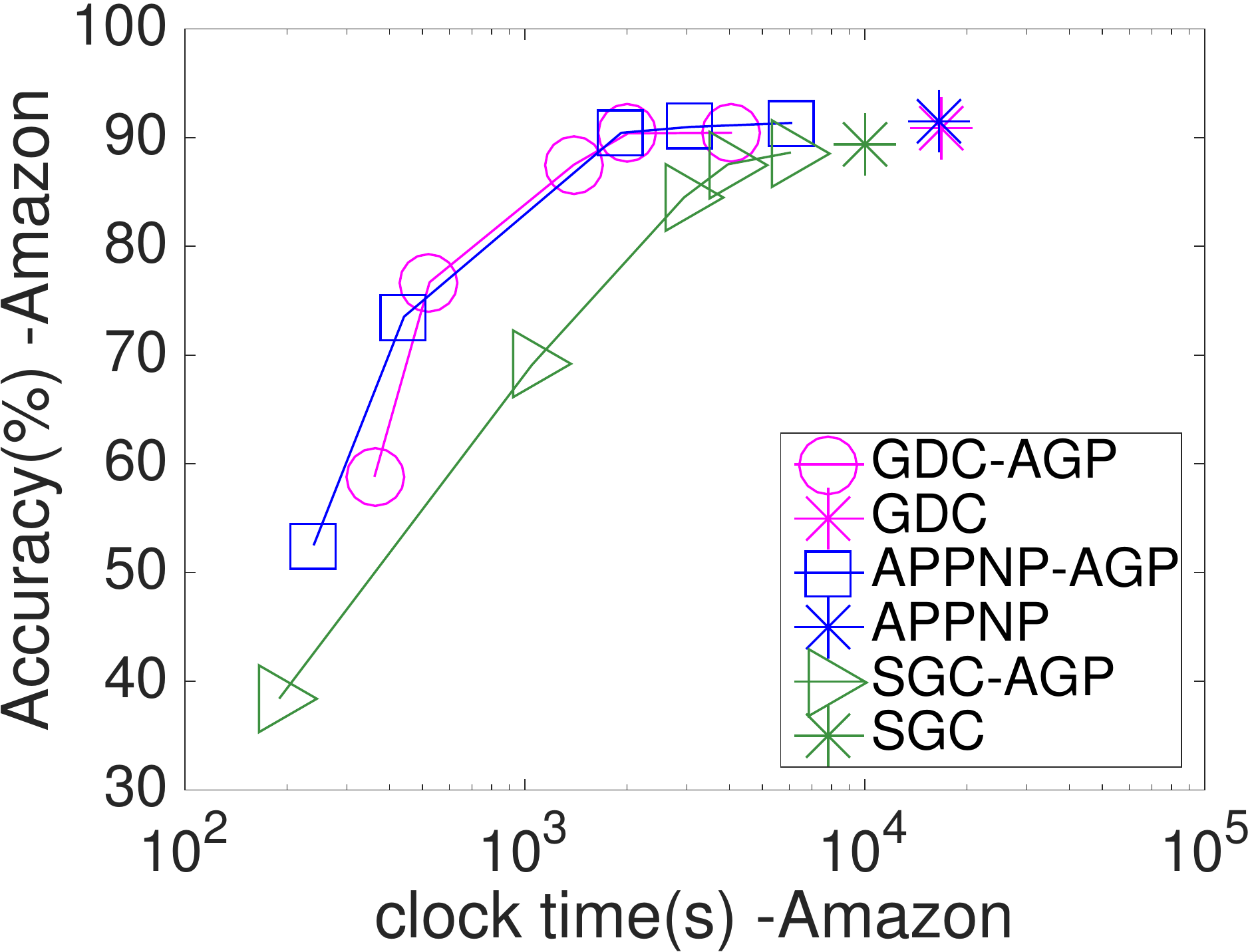} &
		\hspace{-4mm} \includegraphics[height=34mm]{./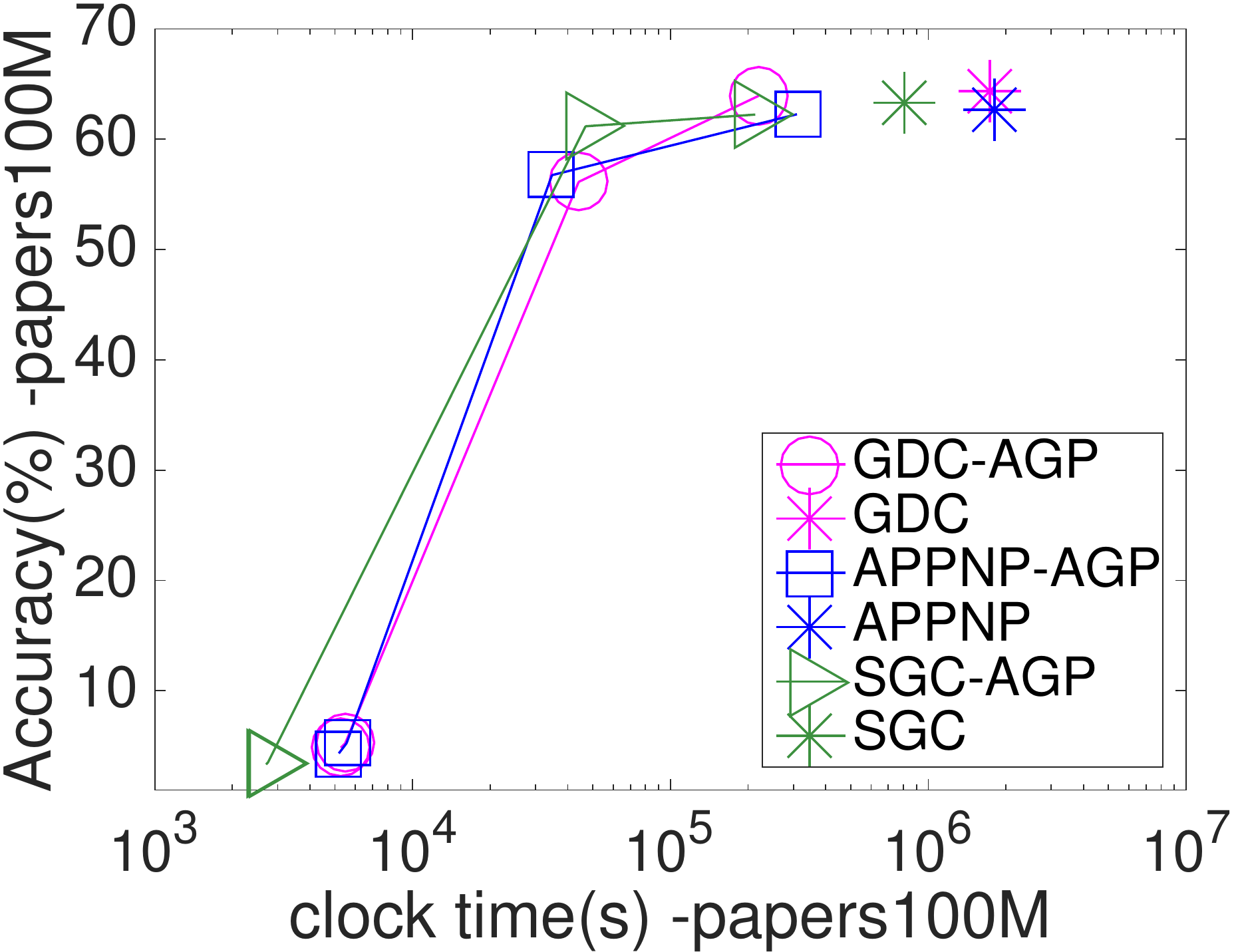} 
	\end{tabular}
	\vspace{-5mm}
	\caption{Tradeoffs between {\em Accuracy(\%)} and clock time in node classification.}
	\label{fig:GNN-accuracy-clock-time}
	\vspace{-2mm}
\end{figure*}

\begin{figure}[t]
	\begin{tabular}{cc}
		\hspace{-3mm} \includegraphics[height=32mm]{./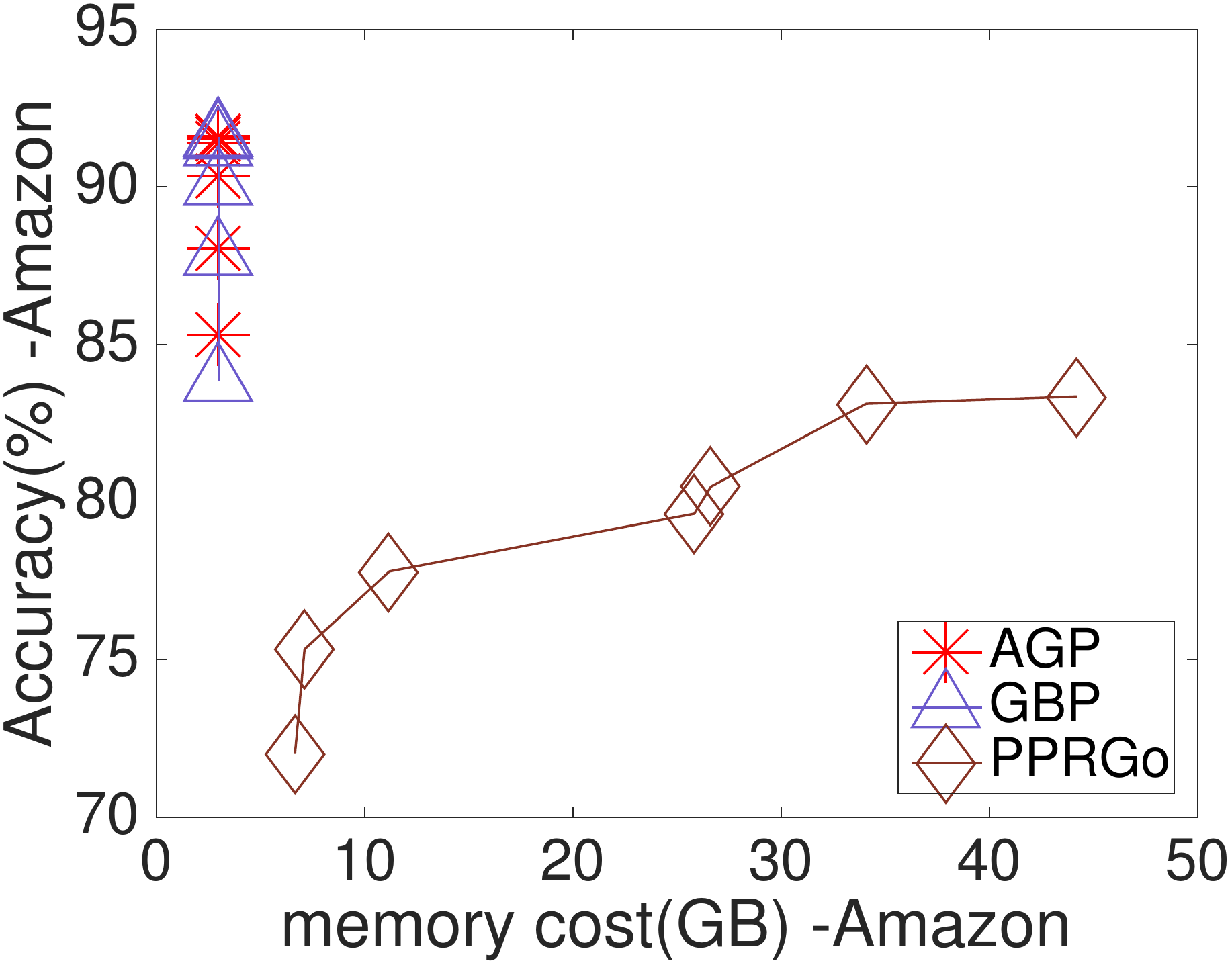} &
		\hspace{-1mm} \includegraphics[height=32mm]{./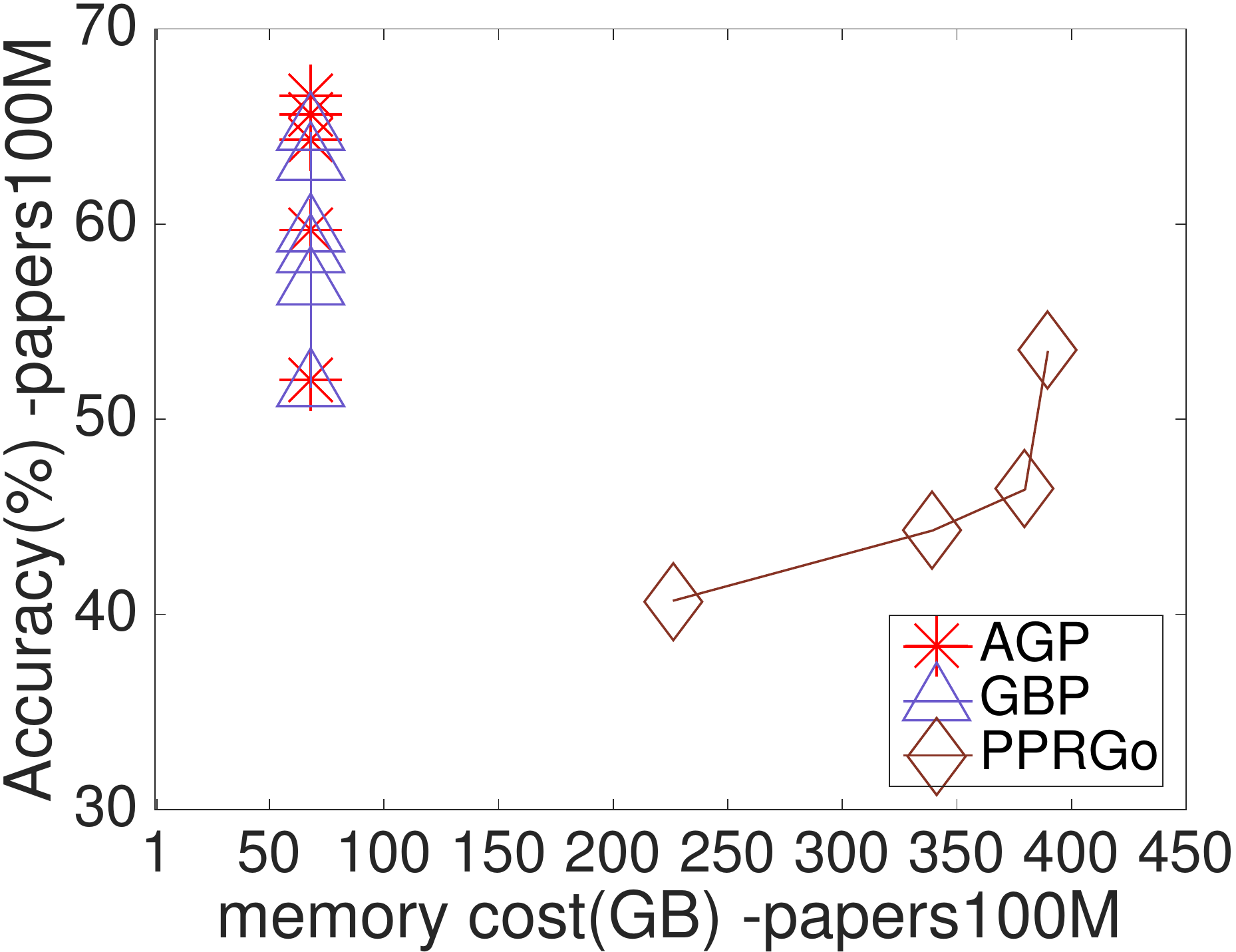} 
	\end{tabular}
	\vspace{-5mm}
	\caption{Tradeoffs between {\em Accuracy(\%)} and memory cost in node classification.}
	\label{fig:GNN-accuracy-memory}
	\vspace{-1mm}
\end{figure}

Besides, we also compare AGP with three scalable methods: PPRGo~\cite{bojchevski2020scaling}, GBP~\cite{chen2020GBP}, and ClusterGCN~\cite{chiang2019clusterGCN}. Recall that PPRGo is an improvement work of APPNP. It has three main parameters: the number of non-zero PPR values for each training node $k$, the number of hops $L$, and the residue threshold $r_{max}$. We vary the three parameters $(k,L,r_{max})$ from $(32,2,0.1)$ to $(64,10,10^{-5})$. GBP decouples the feature propagation and prediction to achieve high scalability. In the propagation process, GBP has two parameters: the propagation threshold $r_{max}$ and the level $L$. We vary $r_{max}$ from $10^{-4}$ to $10^{-10}$, and set $L=4$ following~\cite{chen2020GBP}. ClusterGCN uses graph sampling method to partition graphs into small parts, and performs the feature propagation on one randomly picked sub-graph in each mini-batch. We vary the partition numbers from $10^4$ to $10^5$, and the propagation layers from $2$ to $4$. 

For each method, we apply a neural network with 4 hidden layers, trained with mini-batch SGD. 
We employ initial residual connection~\cite{He2016ResNet} across the hidden layers to facilitate training. We use the trained model to predict each testing node's labels and take the mean accuracy after five runs. 
All the experiments in this section are conducted on a machine with an NVIDIA RTX8000 GPU (48GB memory), Intel Xeon CPU (2.20 GHz) with 40 cores, and 512 GB of RAM.

\header{\bf Detailed setups. }
In Table~\ref{tbl:parameters}, we summarize the hyper-parameters of GDC, APPNP, SGC and the corresponding AGP models. Note that the parameter $t$ is for GDC and GDC-AGP, $\alpha$ is for APPNP and APPNP-AGP and $L$ is for SGC and SGC-AGP. We set $a=b=\frac{1}{2}$. In Table~\ref{tbl:para-GBP}, Table~\ref{tbl:para-PPRGo} and Table~\ref{tbl:para-ClusterGCN}, we summarize the hyper-parameters of GBP, PPRGo and ClusterGNN. For ClusterGCN, we set the number of clusters per batch as $10$ on Amazon, following~\cite{chiang2019clusterGCN}. For AGP, we set $a=0.8, b=0.2, w_i=\alpha(1-\alpha)^i$ on Amazon, and $a=b=0.5, w_i=e^{-t}\cdot \frac{t^i}{i!}$ on Papers100M. The other hyper-parameters of AGP are the same as those in Table~\ref{tbl:parameters}. In Table~\ref{tbl:url}, we list the available URL of each  method. 

\section{Additional experimental results} \label{sec:appendix_old}

\subsection{Local clustering with HKPR}
Apart from using {\em MaxError} to measure the approximation quality, we further explore the trade-off curves between {\em Precision@k} and query time. 
Let $V_k$ denote the set of $k$ nodes with highest normalized HKPR values, and $\hat{V}_k$ denote the estimated top-$k$ node set returned by an approximate method. {\em Normalized Precision@k} is defined as the percentage of nodes in $\hat{V}_k$ that coincides with the actual top-$k$ results $V_k$. {\em Precision@k} can evaluate the accuracy of the relative node order of each method. Similarly, we use the Basic Propagation Algorithm~\ref{alg:AGP-deter} with $L=50$ to obtain the ground truths of the normalized HKPR. 
Figure~\ref{fig:HKPR-precision-query} plots the trade-off curve between {\em Precision@50} and the query time for each method. We omit TEA and TEA+ on Twitter as they cannot handle directed graphs. We observe that AGP achieves the highest precision among the five approximate algorithms on all four datasets under the same query time. 

Besides, we also conduct experiments to present the influence of heat kernel parameter $t$ on the experimental performances. Figure~\ref{fig:conductance-query-OL} plots the conductance and query time trade-offs on {\em Orkut}, with $t$ varying in $\{5,10,20,40\}$. Recall that $t$ is the average length of the heat kernel random walk. Hence, the query time of each method increases as $t$ varying from 5 to 40.  We observe that AGP consistently achieves the lowest conductance with the same amount of query time. Furthermore, as $t$ increases from $5$ to $40$, AGP's query time only increases by $7\times$, while TEA and TEA+ increase by $10\times-100\times$, which demonstrates the scalability of AGP. 


\subsection{Evaluation of Katz index}

\begin{table}[t]
	\centering
	\tblcapup
	\caption{preprocessing and training time}
	\vspace{-4mm}
	\tblcapdown
	\begin{small}
		\begin{tabular}{|c|r|r|r|r|} 
			\hline
			{\bf } & {\bf APPNP-AGP }\hspace{-1mm} & {\bf GDC-AGP} & {\bf SGC-AGP}& \makecell[c]{{\bf PPRGo}} \\ \hline
			\hspace{-1mm} \makecell[c]{preprocessing\\ time (s)} \hspace{-2mm} & 9253.85 \hspace{-1mm} &   6807.17& 1437.95 & 7639.72  \\ \hline
			\hspace{-1mm}training time (s)  \hspace{-2mm}  & 166.19  \hspace{-1mm}   & 175.62   & 120.07 & 140.67 \\
			\hline
		\end{tabular}
	\end{small}
	\label{tbl:propagation-training}
\end{table}

We also evaluate the performance of AGP to compute the Katz index. Recall that Katz index numerates the paths of all lengths between a pair of nodes, which can be expressed as that $\vec{\pi}=\sum_{i=0}^\infty \beta^i \cdot \mathbf{A}^{i} \cdot \vec{e}_s$. 
In our experiments, $\beta$ is set as $\frac{0.85}{\lambda_1}$ to guarantee convergence, where $\lambda_1$ denotes the largest eigenvalue of the adjacent matrix $\bm{A}$. We compare the performance of AGP with the basic propagation algorithm given in Algorithm~\ref{alg:AGP-deter} (denoted as basic in Figure~\ref{fig:Katz-MaxError-query}, ~\ref{fig:Katz-precision-query}). We treat the results computed by the basic propagation algorithm with $L=50$ as the ground truths. Varying $\delta$ from $0.01$ to $10^{8}$, Figure~\ref{fig:Katz-MaxError-query} shows the trade-offs between {\em MaxError} and query time. Here we define $MaxError =\max_{v \in V}\left|\vec{\pi}(v)-\vec{\epi}(v)\right|$. We issue 50 query nodes and return the average {\em MaxError} of all query nodes same as before. We can observe that AGP costs less time than basic propagation algorithm to achieve the same error. Especially when $\delta$ is large, such as $MaxError=10^{-5}$ on the dataset {\em Orkut}, AGP has a $10\times-100\times$ speed up than the basic propagation algorithm. Figure~\ref{fig:Katz-precision-query} plots the trade-off lines between {\em Precision@50} and query time. The definition of {\em Precision@k} is the same as that in Figure~\ref{fig:HKPR-precision-query}, which equals the percentage of nodes in the estimated top-k set that coincides with the real top-k nodes. 
Note that the basic propagation algorithm can always achieve precision 1 even with large {\em MaxError}. This is because the propagation results derived by the basic propagation algorithm are always smaller than the ground truths. The biased results may present large error and high precision simultaneously by maintaining the relative order of top-k nodes. While AGP is not a biased method, the precision will increase with the decreasing of {\em MaxError}. 



\vspace{-1mm}
\subsection{Node classification with GNN}
\header{\bf Comparison of clock time.}
To eliminate the effect of parallelism, we plot the trade-offs between clock time and classification accuracy in Figure~\ref{fig:GNN-accuracy-clock-time}. We can observe that AGP still achieves a $10\times$ speedup on each dataset, which concurs with the analysis for propagation time. Besides, note that every method presents a nearly $30\times$ speedup after parallelism, which reflects the effectiveness of parallelism.


\header{\bf Comparison of preprocessing time and training time. }
Table~\ref{tbl:propagation-training} shows the comparison between the training time and preprocessing time on Papers100M. Due to the large batch size ($10,000$), the training process is generally significantly faster than the feature propagation process. Hence, we recognize feature propagation as the bottleneck for scaling GNN models on large graphs, motivating our study on approximate graph propagation.

\header{\bf Comparison of memory cost. }
Figure~\ref{fig:GNN-accuracy-memory} shows the memory overhead of AGP, GBP, and PPRGo. Recall that the AGP Algorithm~\ref{alg:AGP-RQ} only maintains two $n$ dimension vectors: the residue vector $\vec{r}$ and the reserve vector $\vec{q}$. Consequently, AGP only takes a fixed memory, which can be ignored compared to the graph's size and the feature matrix. Such property is ideal for scaling GNN models on massive graphs. 

\balance
\section{Proofs} \label{sec:proofs}



\subsection{Chebyshev's Inequality} \label{sec:chebyshev}
\begin{lemma}[Chebyshev's inequality] \label{lmm:chebysev}
	Let $X$ be a random variable, then $\Pr\left[\left| X -\E[X]\right| \geq \e\right] \le {\Var[X] \over \e^2 }. $
\end{lemma}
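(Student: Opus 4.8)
The plan is to derive Chebyshev's inequality as an immediate consequence of Markov's inequality applied to the non-negative random variable $(X-\E[X])^2$. First I would record (or cite) Markov's inequality: for any non-negative random variable $Y$ and any $t>0$, $\Pr[Y \ge t] \le \E[Y]/t$. This in turn follows from the pointwise bound $Y \ge t\cdot\mathbf{1}[Y \ge t]$, which upon taking expectations gives $\E[Y] \ge t\cdot\Pr[Y \ge t]$; dividing by $t$ yields the claim. I would include this one-line argument for completeness so that the technical report remains self-contained.

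Next I would apply Markov's inequality with $Y=(X-\E[X])^2$, which is non-negative by construction, and with threshold $t=\e^2>0$. The key elementary observation is the set identity
\[
\left\{\, |X-\E[X]| \ge \e \,\right\} = \left\{\, (X-\E[X])^2 \ge \e^2 \,\right\},
\]
valid because $u\mapsto u^2$ is increasing on $[0,\infty)$. Combining this identity with Markov's inequality gives
\[
\Pr\left[\,|X-\E[X]| \ge \e\,\right] = \Pr\left[\,(X-\E[X])^2 \ge \e^2\,\right] \le \frac{\E\!\left[(X-\E[X])^2\right]}{\e^2}.
\]
Finally I would invoke the definition $\Var[X]=\E[(X-\E[X])^2]$ to rewrite the right-hand side as $\Var[X]/\e^2$, which is exactly the asserted bound.

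Since the statement is classical, there is essentially no obstacle; the only decision points are cosmetic. I would note up front that we may assume $\e>0$ and $\Var[X]<\infty$, since otherwise the inequality is vacuous or trivially true, and I would decide whether to prove Markov's inequality inline (my preference) or simply cite it. The slightly delicate point, worth stating carefully, is the monotonicity-of-squaring step that justifies the set identity above, as this is where the two-sided deviation on the left is converted into a one-sided deviation of a non-negative quantity on the right.
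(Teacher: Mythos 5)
Your proof is correct: the standard argument via Markov's inequality applied to the non-negative variable $(X-\E[X])^2$ with threshold $\e^2$, together with the set identity $\{|X-\E[X]|\ge\e\}=\{(X-\E[X])^2\ge\e^2\}$, is exactly the canonical derivation. The paper itself states this lemma as a classical fact without proof, so there is nothing to compare against; your write-up simply supplies the textbook argument, and your remarks about assuming $\e>0$ and $\Var[X]<\infty$ are fine (the bound is vacuously true otherwise).
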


\subsection{Further Explanations on Assumption~\ref{asm:L}}
Recall that in Section~\ref{sec:BPA}, we introduce Assumption~\ref{asm:L} to guarantee we only need to compute the prefix sum $\vec{\pi}=\sum_{i=0}^L w_i \cdot \left(\D^{-a}\A \D^{-b} \right)^i$ to achieve the relative error in Definition~\ref{def:pro-relative}, where $L=O\left(\log{\frac{1}{\delta}}\right)$. The following theorem offers a formal proof of this property. 
\begin{theorem}\label{thm:prefix}
According to the assumptions on $w_i$ and $\vec{x}$ in Section~\ref{sec:BPA}, to achieve the relative error in Definition~\ref{def:pro-relative}, we only need to approximate the prefix sum $\vec{\pi}_L=\sum_{i=0}^L  w_i \cdot \hspace{-0.5mm} \left(\D^{-a}\A \D^{-b} \right)^i \hspace{-1.5mm}\cdot \vec{x}$, such that for any $v\in V$ with $\vec{\pi}_L(v)>\frac{18}{19}\cdot \delta$, we have $\left|\vec{\pi}_L(v)-\hat{\vec{\pi}}(v)\right|\le \frac{1}{20} \vec{\pi}_L(v)$ holds with high probability. 
\end{theorem}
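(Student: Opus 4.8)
The plan is to split the error by the triangle inequality into a \emph{truncation error} $|\vec{\pi}(v)-\vec{\pi}_L(v)|$ and an \emph{approximation error} $|\vec{\pi}_L(v)-\hat{\vec{\pi}}(v)|$, to bound the truncation error using Assumption~\ref{asm:L} together with the normalization $\|\vec{x}\|_1=1$, and to feed the approximation error the hypothesized relative-error guarantee on the prefix sum $\vec{\pi}_L$.

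First I would fix the truncation level. Using $\ell_\infty\le\ell_2$ and the operator-norm bound $\|(\D^{-a}\A\D^{-b})^i\vec{x}\|_2\le\lambda_{max}^i\|\vec{x}\|_2$, where $\lambda_{max}$ is the largest singular value of the propagation matrix, Assumption~\ref{asm:L} gives, for every $L$ with $L+1\ge L_0$,
\begin{align*}
\|\vec{\pi}-\vec{\pi}_L\|_\infty \;\le\; \sum_{i=L+1}^{\infty} w_i\,\lambda_{max}^i\,\|\vec{x}\|_2 \;\le\; \sum_{i=L+1}^{\infty}\lambda^i\cdot\|\vec{x}\|_1 \;=\;\frac{\lambda^{L+1}}{1-\lambda},
\end{align*}
using $\|\vec{x}\|_2\le\|\vec{x}\|_1=1$ and $\lambda<1$. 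Taking $L=\max\{L_0-1,\ \log_\lambda(\tfrac{(1-\lambda)\delta}{20})\}=O(\log\tfrac1\delta)$ makes this at most $\tfrac{\delta}{20}$, so $|\vec{\pi}(v)-\vec{\pi}_L(v)|\le\tfrac{\delta}{20}$ for every $v\in V$. I would also record that $0\le\vec{\pi}_L(v)\le\vec{\pi}(v)$, since the tail $\vec{\pi}(v)-\vec{\pi}_L(v)=\sum_{i>L}w_i[(\D^{-a}\A\D^{-b})^i\vec{x}](v)$ is non-negative (the signal $\vec{x}$, the weights $w_i$, and the matrix $\D^{-a}\A\D^{-b}$ are all entrywise non-negative).

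Now take any $v$ with $\vec{\pi}(v)>\delta$. Then $\vec{\pi}_L(v)\ge\vec{\pi}(v)-\tfrac{\delta}{20}>\delta-\tfrac{\delta}{20}=\tfrac{19}{20}\delta>\tfrac{18}{19}\delta$, so the hypothesis applies to the entry $\vec{\pi}_L(v)$ and yields $|\vec{\pi}_L(v)-\hat{\vec{\pi}}(v)|\le\tfrac{1}{20}\vec{\pi}_L(v)$ with high probability; on that event
\begin{align*}
|\vec{\pi}(v)-\hat{\vec{\pi}}(v)| \;\le\; |\vec{\pi}(v)-\vec{\pi}_L(v)|+|\vec{\pi}_L(v)-\hat{\vec{\pi}}(v)| \;\le\;\frac{\delta}{20}+\frac{\vec{\pi}_L(v)}{20} \;\le\;\frac{\vec{\pi}(v)}{20}+\frac{\vec{\pi}(v)}{20} \;=\;\frac{\vec{\pi}(v)}{10},
\end{align*}
where the last step uses $\delta<\vec{\pi}(v)$ and $\vec{\pi}_L(v)\le\vec{\pi}(v)$. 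This is exactly the guarantee of Definition~\ref{def:pro-relative}, and the success probability is inherited verbatim since the two events coincide.

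The step that needs the most care is the \emph{uniform} (over all nodes) truncation bound: it relies on the operator-norm control $\lambda_{max}^i$ furnished by Assumption~\ref{asm:L}, and on checking that the chosen $L$ clears the constant threshold $L_0$, which is automatic since $L=O(\log\tfrac1\delta)$. For the concrete models of Table~\ref{tbl:propagation} the same $\tfrac{\delta}{20}$ tail bound alternatively follows from the explicit weights together with the (sub)stochasticity of $\D^{-a}\A\D^{-b}$ — e.g.\ $\lambda_{max}=1$ and $\sum_{i>L}\alpha(1-\alpha)^i=(1-\alpha)^{L+1}$ for PPR, the Poisson tail $\sum_{i>L}e^{-t}t^i/i!$ for HKPR, and $\sum_{i>L}\beta^i\lambda_1^i$ with $\beta\lambda_1<1$ for Katz — as sketched in Section~\ref{sec:BPA}.
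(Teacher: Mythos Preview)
Your proof is correct and follows essentially the same route as the paper: bound the tail $\|\vec{\pi}-\vec{\pi}_L\|_\infty\le\|\vec{\pi}-\vec{\pi}_L\|_2$ via Assumption~\ref{asm:L} and $\|\vec{x}\|_2\le\|\vec{x}\|_1=1$, then combine with the prefix-sum hypothesis by the triangle inequality. The only differences are cosmetic: the paper sets the tail bound to $\delta/19$ (so $\vec{\pi}_L(v)>\tfrac{18}{19}\delta$ comes out exactly) and rewrites $\tfrac{1}{20}\vec{\pi}_L(v)+\bar{\vec{\pi}}_L(v)=\tfrac{1}{20}\vec{\pi}(v)+\tfrac{19}{20}\bar{\vec{\pi}}_L(v)$ before applying $\vec{\pi}(v)\ge\delta$, whereas you use $\delta/20$ and bound each piece directly by $\tfrac{1}{20}\vec{\pi}(v)$ using $\delta<\vec{\pi}(v)$ and $\vec{\pi}_L(v)\le\vec{\pi}(v)$; your version is slightly cleaner and also makes the implication $\vec{\pi}(v)>\delta\Rightarrow\vec{\pi}_L(v)>\tfrac{18}{19}\delta$ explicit, which the paper leaves implicit.
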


\begin{proof}
We first show according to Assumption~\ref{asm:L} and the assumption $\|\vec{x}\|_1=1$, by setting $L=O\left(\log{\frac{1}{\delta}}\right)$, we have: 
\begin{align}\label{eqn:partsum}
  \left\|\sum_{i=L+1}^\infty \hspace{-0.5mm} w_i \cdot \hspace{-0.5mm} \left(\D^{-a}\A \D^{-b} \right)^i \hspace{-1.5mm}\cdot \vec{x}\right\|_2 \le \frac{\delta}{19}. 
\end{align}

Recall that in Assumption~\ref{asm:L}, we assume $w_i \cdot \lambda_{max}^i$ is upper bounded by $\lambda^i$ when $i\ge L_0$ and $L_0\ge 1$ is a constant. $\lambda_{max}$ denotes the maximum eigenvalue of the transition probability matrix $\D^{-a}\A \D^{-b}$ and $\lambda<1$ is a constant. Thus, for $\forall L \ge L_0$, we have: 
\begin{align*}
    \left\|\sum_{i=L+1}^\infty \hspace{-1.5mm} w_i \hspace{-0.5mm} \cdot \hspace{-0.5mm} \left(\D^{-a}\A \D^{-b}\right)^i \hspace{-1.5mm}\cdot \hspace{-0.5mm}\vec{x} \right\|_2 \hspace{-1.5mm}\le \left\|\sum_{i=L+1}^\infty \hspace{-1.5mm} w_i \hspace{-0.5mm} \cdot \hspace{-0.5mm} \lambda_{max}^i \hspace{-1mm} \cdot \hspace{-0.5mm} \vec{x} \right\|_2 \hspace{-1.5mm} \le \hspace{-1.5mm}\sum_{i=L+1}^\infty \hspace{-1mm} \lambda^i \hspace{-0.5mm} \cdot \hspace{-0.5mm} \left\|\vec{x} \right\|_2 \hspace{-1mm}\le \hspace{-2mm}\sum_{i=L+1}^\infty \hspace{-1.5mm} \lambda^i. 
\end{align*}
In the last inequality, we use the fact $\|\vec{x}\|_2\le \|\vec{x}\|_1=1$. By setting $L=\max\left\{L_0,O\left(\log_{\lambda}{\frac{(1-\lambda)\cdot \delta}{19}}\right)\right\}=O(\log{\frac{1}{\delta}})$, we can derive: 
\begin{align}\nonumber 
\left\|\sum_{i=L+1}^\infty \hspace{-0.5mm} w_i \hspace{-0.5mm} \cdot \hspace{-0.5mm} \left(\D^{-a}\A \D^{-b}\right)^i \hspace{-1.5mm}\cdot \hspace{-0.5mm}\vec{x} \right\|_2 \le \sum_{i=L+1}^\infty \hspace{-1mm} \lambda^i=\frac{\lambda^{L+1}}{1-\lambda} \le \frac{\delta}{19}, 
\end{align}
which follows the inequality~\eqref{eqn:partsum}. 

Then we show that according to the assumption on the non-negativity of $\vec{x}$ and the bound $\left\|\sum_{i=L+1}^\infty \hspace{-0.5mm} w_i \cdot \hspace{-0.5mm} \left(\D^{-a}\A \D^{-b} \right)^i \hspace{-1.5mm}\cdot \vec{x}\right\|_2 \le \frac{\delta}{19}$, to achieve the relative error in Definition~\ref{def:pro-relative}, we only need to approximate the prefix sum $\vec{\pi}_L=\sum_{i=0}^L  w_i \cdot \hspace{-0.5mm} \left(\D^{-a}\A \D^{-b} \right)^i \hspace{-1.5mm}\cdot \vec{x}$. Specifically, we only need to return the vector $\hat{\vec{\pi}}$ as the estimator of $\vec{\pi}_L$ such that for any $v\in V$ with $\vec{\pi}_L(v) \ge \frac{18}{19}\cdot \delta$, we have
\begin{align}\label{eqn:apprgoal}
\left|\vec{\pi}_L(v)-\hat{\vec{\pi}}(v)\right|\le \frac{1}{20} \vec{\pi}_L(v)
\end{align} 
holds with high probability. 

Let $\vec{\pi}_L$ denote the prefix sum that $\vec{\pi}_L\hspace{-1mm}=\hspace{-1mm} \sum_{i=0}^L \hspace{-1mm} w_i\cdot \left(\D^{-a}\A \D^{-b} \right)^i \hspace{-1.5mm}\cdot \vec{x}$ and $\vec{\bar{\pi}}_L$ denote the remaining sum that $\vec{\bar{\pi}}_L=\sum_{i=L+1}^\infty w_i\cdot \left(\D^{-a}\A \D^{-b} \right)^i \hspace{-1.5mm}\cdot \vec{x}$. Hence the real propagation vector $\vec{\pi}=\vec{\pi}_L+\vec{\bar{\pi}}_L=\sum_{i=0}^\infty w_i\cdot \left(\D^{-a}\A \D^{-b} \right)^i \hspace{-1.5mm}\cdot \vec{x}$. For each node $v\in V$ with $\vec{\pi}_L(v)>\frac{18}{19}\cdot \delta$, we have: 
\begin{align*}
&\left|\vec{\pi}(v)-\hat{\vec{\pi}}(v)\right|\le \left|\vec{\pi}_L(v)-\hat{\vec{\pi}}(v)\right|+\vec{\bar{\pi}}(v)\\
&\le \frac{1}{20}\vec{\pi}_L(v)+\vec{\bar{\pi}}(v)=\frac{1}{20}\vec{\pi}(v)+\frac{19}{20}\vec{\bar{\pi}}(v). 
\end{align*}
In the first inequality, we use the fact that $\vec{\pi}=\vec{\pi}_L+\vec{\bar{\pi}}_L$ and $\left|\vec{\pi}_L(v)-\hat{\vec{\pi}}(v)+\vec{\bar{\pi}}(v)\right|\le \left|\vec{\pi}_L(v)-\hat{\vec{\pi}}(v)\right|+\vec{\bar{\pi}}(v)$. In the second inequality, we apply inequality~\eqref{eqn:apprgoal} and the assumption that $\vec{x}$ is non-negative. By the bound that $\vec{\bar{\pi}}(v)\le \left\|\sum_{i=L+1}^\infty \hspace{-0.5mm} w_i \hspace{-0.5mm} \cdot \hspace{-0.5mm} \left(\D^{-a}\A \D^{-b}\right)^i \hspace{-1.5mm}\cdot \hspace{-0.5mm}\vec{x} \right\|_2 \le \frac{1}{19}\delta$, we can derive: 
\begin{align*}
\left|\vec{\pi}(v)-\hat{\vec{\pi}}(v)\right|\le \frac{1}{20}\vec{\pi}(v)+\frac{1}{20}\delta \le \frac{1}{10}\vec{\pi}(v). 
\end{align*}
In the last inequality, we apply the error threshold in Definition~\ref{def:pro-relative} that $\vec{\pi}(v)\ge \delta$, and the theorem follows. 
\end{proof}

Even though we derive Theorem~\ref{thm:prefix} based on Assumption~\ref{asm:L}, the following lemma shows that without Assumption~\ref{asm:L}, the property of Theorem~\ref{thm:prefix} is also possessed by all proximity measures discussed in this paper. 
\begin{lemma}
In the proximity models of PageRank, PPR, HKPR, transition probability and Katz, Theorem~\ref{thm:prefix} holds without Assumption~\ref{asm:L}. 
\end{lemma}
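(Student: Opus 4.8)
The plan is to notice that Assumption~\ref{asm:L} enters the proof of Theorem~\ref{thm:prefix} only through inequality~\eqref{eqn:partsum}, i.e.\ the tail bound $\bigl\|\sum_{i=L+1}^\infty w_i\,(\D^{-a}\A\D^{-b})^i\vec{x}\bigr\|_2 \le \delta/19$ for some $L = O(\log\tfrac1\delta)$; every subsequent step uses only that bound together with the non-negativity of $\vec{x}$. Hence it suffices to re-derive~\eqref{eqn:partsum} directly, model by model, using the explicit form of $w_i$, $a$, $b$, $\vec{x}$ in place of Assumption~\ref{asm:L}.

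For PageRank, PPR, and HKPR we have $a=0$, $b=1$, so the propagation matrix is the column-stochastic matrix $\A\D^{-1}$: since $\vec{1}^\top\A\D^{-1}=\vec{1}^\top$ and $\vec{x}$ is non-negative with $\|\vec{x}\|_1=1$, we get $\|(\A\D^{-1})^i\vec{x}\|_2 \le \|(\A\D^{-1})^i\vec{x}\|_1 = \|\vec{x}\|_1 = 1$ for all $i$. For PageRank and PPR, $w_i=\alpha(1-\alpha)^i$, so the tail is at most $\sum_{i=L+1}^\infty\alpha(1-\alpha)^i=(1-\alpha)^{L+1}$, which is $\le\delta/19$ once $L=\log_{1-\alpha}\tfrac{\delta}{19}=O(\log\tfrac1\delta)$ --- exactly the computation already spelled out in Section~\ref{sec:BPA}. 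For HKPR, $w_i=e^{-t}t^i/i!$, so the tail equals $\Pr[X>L]$ with $X\sim\mathrm{Poisson}(t)$; the Chernoff-type estimate $\Pr[X\ge L]\le 2e^{-t}(et/L)^L$ (valid for $L\ge 2t$) drops below $\delta/19$ as soon as $L = O\!\bigl(\log\tfrac1\delta/\log\log\tfrac1\delta\bigr)=O(\log\tfrac1\delta)$, using that $t$ is a constant.

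For the transition-probability model the weight sequence has a single non-zero entry $w_\ell=1$, so taking the prefix-sum level $L\ge\ell$ makes the tail sum identically $\vec{0}$ and~\eqref{eqn:partsum} is trivial. For Katz, $a=b=0$ and, after the normalization $\sum_i w_i=1$ imposed in Section~\ref{sec:BPA}, $w_i=(1-\beta)\beta^i$; since $\A$ is symmetric, $\|\A^i\vec{e}_s\|_2\le\|\A\|_2^i=\lambda_1^i$, hence the tail is at most $(1-\beta)\sum_{i=L+1}^\infty(\beta\lambda_1)^i=(1-\beta)\frac{(\beta\lambda_1)^{L+1}}{1-\beta\lambda_1}$. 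Because $\beta$ is taken to be a constant multiple of $1/\lambda_1$ (e.g.\ $\beta\lambda_1=0.85$), this is a geometric series with a constant ratio in $(0,1)$, so again $L=O(\log\tfrac1\delta)$ suffices to make it $\le\delta/19$.

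With~\eqref{eqn:partsum} re-established in each of the five cases, the remainder of the proof of Theorem~\ref{thm:prefix} goes through verbatim, which proves the lemma. The only part requiring genuine care is HKPR: one must pick the right form of the Poisson/Chernoff tail estimate so that the resulting $L$ is truly $O(\log\tfrac1\delta)$ and hides no bad dependence on $t$; the remaining cases are routine, modulo remembering to renormalize the Katz weights and to invoke the symmetry of $\A$ when bounding $\|\A^i\|_2$ on an undirected graph.
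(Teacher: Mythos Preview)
Your proposal is correct and follows essentially the same plan as the paper: observe that Assumption~\ref{asm:L} enters the proof of Theorem~\ref{thm:prefix} only through the tail bound~\eqref{eqn:partsum}, then re-derive that bound model by model from the explicit form of $w_i$ and the propagation matrix, and finally invoke the remainder of the Theorem~\ref{thm:prefix} argument unchanged. The only cosmetic differences are that the paper bounds the HKPR tail via Stirling on individual terms (showing $w_i<(1/2)^i$ for $i\ge 2et$) rather than a Poisson--Chernoff inequality, and does not renormalize the Katz weights; your $\ell_1$-preservation argument for $\A\D^{-1}$ is in fact cleaner than the paper's eigenvalue phrasing, since $\A\D^{-1}$ is not symmetric.
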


\begin{proof}
We first show in the proximity models of PageRank, PPR, HKPR, transition probability and Katz, by setting $L=O\left(\log{\frac{1}{\delta}}\right)$, we can bound
\begin{align}\label{eqn:boundL}
    \left\|\sum_{i=L+1}^\infty \hspace{-0.5mm} w_i \cdot \hspace{-0.5mm} \left(\D^{-a}\A \D^{-b} \right)^i \hspace{-1.5mm}\cdot \vec{x}\right\|_2 \le \frac{\delta}{19}, 
\end{align}
only based on the assumptions that $\vec{x}$ is non-negative and $\|\vec{x}\|_1=1$, without Assumption~\ref{asm:L}.


In the proximity model of PageRank, PPR, HKPR and transition probability, we set $a=b=1$ and the transition probability matrix is $\D^{-a}\A \D^{-b}=\A \D^{-1}$. Thus, the left side of inequality~\eqref{eqn:boundL} becomes $\left\|\sum_{i=L+1}^\infty \hspace{-0.5mm} w_i \cdot \hspace{-0.5mm} \left(\A \D^{-1} \right)^i \hspace{-1.5mm}\cdot \vec{x}\right\|_2=\sum_{i=L+1}^\infty \hspace{-0.5mm} w_i \cdot \left\| \left(\A \D^{-1} \right)^i \hspace{-1.5mm}\cdot \vec{x}\right\|_2$, where we apply the assumption on the non-negativity of $\vec{x}$. Because the maximum eigenvalue of the matrix $\A \D^{-1}$ is $1$, we have $\left\|\left(\A \D^{-1} \right)^i \hspace{-1.5mm}\cdot \vec{x}\right\|_2 \le \|\vec{x}\|_2$, following: 
\begin{align*}
   \sum_{i=L+1}^\infty \hspace{-0.5mm} w_i \cdot \hspace{-0.5mm} \left\|\left(\A \D^{-1} \right)^i \hspace{-1.5mm}\cdot \vec{x}\right\|_2 \le \sum_{i=L+1}^\infty \hspace{-1mm}w_i \cdot \|\vec{x} \|_2 \le \hspace{-1mm} \sum_{i=L+1}^\infty \hspace{-1mm}w_i \cdot \|\vec{x} \|_1 =\hspace{-1mm}\sum_{i=L+1}^\infty \hspace{-1mm} w_i. 
\end{align*}
In the last equality, we apply the assumption that $\|\vec{x}\|_1=1$. Hence, for PageRank, PPR, HKPR and transition probability, we only need to show $\sum_{i=L+1}^\infty w_i \le \frac{\delta}{19}$ holds for $L=O\left(\log{\frac{1}{\delta}}\right)$ . Specifically, for PageRank and PPR, $w_i=\alpha(1-\alpha)^i$. Hence, $\sum_{i=L+1}^\infty w_i =\sum_{i=L+1}^\infty \alpha(1-\alpha)^i=(1-\alpha)^{L+1}$. By setting $L=\log_{1-\alpha}\frac{\delta}{19}=O\left(\log{\frac{1}{\delta}}\right)$, we can bound $\sum_{i=L+1}^\infty w_i$ by $\frac{\delta}{19}$. For HKPR, we set $w_i=e^{-t}\cdot \frac{t^i}{i!}$, where $t>1$ is a constant. According to the Stirling formula~\cite{cam1935stirling} that $i!\ge 3\sqrt{i}\left(\frac{i}{e}\right)^i$, we have $w_i =e^{-t}\cdot \frac{t^i}{i!}<e^{-t}\cdot \left(\frac{et}{i}\right)^i<\left( \frac{1}{2}\right)^i$ for any $i\ge 2et$. Hence, $\sum_{i=L+1}^\infty w_i\le \sum_{i=L+1}^\infty \left( \frac{1}{2}\right)^i=\left( \frac{1}{2}\right)^L$. By setting $L=\max\{2et, O\left(\log{\frac{1}{\delta}}\right)\}=O\left(\log{\frac{1}{\delta}}\right)$, we can derive the bound that $\sum_{i=L+1}^\infty w_i\le \frac{\delta}{19}$. For transition probability that $w_L=1$ and $w_i=0$ if $i\neq L$, we have $\sum_{i=L+1}^\infty w_i =0 \le \frac{\delta}{19}$. Consequently, for PageRank, PPR, HKPR and transition probability, $\sum_{i=L+1}^\infty w_i \le \frac{\delta}{19}$ holds for $L=O\left(\log{\frac{1}{\delta}}\right)$. 

In the proximity model of Katz, we set $w_i=\beta^i$, where $\beta$ is a constant and set to be smaller than $\frac{1}{\lambda_{1}}$ to guarantee convergence. Here $\lambda_{1}$ is the maximum eigenvalue of the adjacent matrix $\A$. The probability transition probability becomes $\D^{-a}\A \D^{-b}=\A$ with $a=b=0$. It follows: 
\begin{align*}
\left\|\sum_{i=L+1}^\infty \hspace{-1.5mm} \beta^i  \cdot \A^i \cdot \vec{x} \right\|_2 \hspace{-1.5mm} =\hspace{-1.5mm} \sum_{i=L+1}^\infty \hspace{-1.5mm} \beta^i  \cdot \left\|\A ^i \cdot \vec{x} \right\|_2 \hspace{-0.5mm} \le \hspace{-2mm} \sum_{i=L+1}^\infty \hspace{-1.5mm} \beta^i \cdot \lambda_1^i\cdot \|\vec{x}\|_2 \le \hspace{-2mm} \sum_{i=L+1}^\infty \hspace{-2mm} \left(\beta \cdot \lambda_1\right)^i. 
\end{align*}
In the first equality, we apply the assumption that $\vec{x}$ is non-negative. In the first inequality, we use the fact that $\lambda_1$ is the maximum eigenvalue of matrix $\A$. And in the last inequality, we apply the assumption that $\|\vec{x}\|_1=1$ and the fact that $\|\vec{x}\|_2\le \|\vec{x}\|_1=1$. Note that $\beta\cdot \lambda_1$ is a constant and $\beta\cdot \lambda_1<1$, following $\sum_{i=L+1}^\infty \left(\beta \cdot \lambda_1\right)^i=\frac{\left(\beta \cdot \lambda_1\right)^{L+1}}{1-\beta\cdot \lambda_1}$. Hence, by setting $L=\log_{\beta \lambda_1}\left((1-\beta \lambda_1)\cdot \frac{\delta}{19}\right)=O\left(\log{\frac{1}{\delta}}\right)$, we have $\sum_{i=L+1}^\infty \left(\beta \cdot \lambda_1\right)^i \le \frac{\delta}{19}$. Consequently, in the proximity models of PageRank, PPR, HKPR, transition probability and Katz, by setting $L=O(\log{\frac{1}{\delta}})$, we can derive the bound 
\begin{align*}
    \left\|\sum_{i=L+1}^\infty \hspace{-0.5mm} w_i \cdot \hspace{-0.5mm} \left(\D^{-a}\A \D^{-b} \right)^i \hspace{-1.5mm}\cdot \vec{x}\right\|_2 \le \frac{\delta}{19}, 
\end{align*}
without Assumption~\ref{asm:L}. 

Recall that in the proof of Theorem~\ref{thm:prefix}, we show that according to the assumption on the non-negativity of $\vec{x}$ and the bound given in inequality~\eqref{eqn:boundL}, we only need to approximate the prefix sum $\vec{\pi}_L=\sum_{i=0}^L \hspace{-0.5mm} w_i \cdot \hspace{-0.5mm} \left(\D^{-a}\A \D^{-b} \right)^i \hspace{-1.5mm}\cdot \vec{x}$ such that for any $v\in V$ with $\vec{\pi}_L(v)>\frac{18}{19}\cdot \delta$, we have $\left|\vec{\pi}_L(v)-\hat{\vec{\pi}}(v)\right|\le \frac{1}{20} \vec{\pi}_L(v)$ holds with high probability. Hence, Theorem~\ref{thm:prefix} holds for all the proximity models discussed in this paper without Assumption~\ref{asm:L}, and this lemma follows. 

\end{proof}

\subsection{Proof of Lemma~\ref{lem:unbiasedness}}\label{sec:unbiasedness}
We first prove the unbiasedness of the estimated residue vector $\hat{\r}^{(\ell)}$ for each level $\ell \in [0,L]$. Let $X^{(\ell)}(u,v)$ denote the increment of $\hat{\vec{r}}^{(\ell)}(v)$ in the propagation from node $u$ at level $\ell-1$ to node $v \in N_u$ at level $\ell$. According to Algorithm~\ref{alg:AGP-RQ}, $X^{(\ell)}(u,v)=\frac{Y_{\ell}}{Y_{\ell-1}}\cdot \frac{\hat{\vec{r}}^{(\ell-1)}(u)}{d_v^a\cdot d_u^b}$ if $\frac{Y_{\ell}}{Y_{\ell-1}}\cdot \frac{\hat{\vec{r}}^{(\ell-1)}(u)}{d_v^a\cdot d_u^b}\ge \e$; otherwise, $X^{(\ell)}(u,v)=\e$ with the probability $\frac{Y_{\ell}}{\e \cdot Y_{\ell-1}}\hspace{-0.5mm}\cdot \hspace{-0.5mm} \frac{\hat{\vec{r}}^{(\ell-1)}(u)}{d_v^a\cdot d_u^b}$, or $0$ with the probability $1\hspace{-0.5mm}-\hspace{-0.5mm}\frac{Y_{\ell}}{\e \cdot Y_{\ell-1}}\cdot \frac{\hat{\vec{r}}^{(\ell-1)}(u)}{d_v^a\cdot d_u^b}$. Hence, the conditional expectation of $X^{(\ell)}(u,v)$ based on the obtained vector $\hat{\vec{r}}^{(\ell-1)}$ can be expressed as 
\begin{equation}\nonumber
\begin{aligned}
\hspace{-1mm} \E \left[ X^{(\ell)}(u,v) \mid \hat{\vec{r}}^{(\ell-1)} \right]
&=\left\{
\begin{array}{ll}
\hspace{-2mm} \frac{Y_{\ell}}{Y_{\ell-1}}\cdot \frac{\hat{\vec{r}}^{(\ell-1)}(u)}{d_v^a\cdot d_u^b}, \quad if \frac{Y_{\ell}}{Y_{\ell-1}}\cdot \frac{\hat{\vec{r}}^{(\ell-1)}(u)}{d_v^a\cdot d_u^b}\ge \e\\
\hspace{-1mm} \e\cdot \frac{1}{\e}\cdot \frac{Y_{\ell}}{Y_{\ell-1}}\cdot \frac{\hat{\vec{r}}^{(\ell-1)}(u)}{d_v^a\cdot d_u^b}, \quad otherwise
\end{array} 
\right.\\
&= \frac{Y_{\ell}}{Y_{\ell-1}}\cdot \frac{\hat{\vec{r}}^{(\ell-1)}(u)}{d_v^a\cdot d_u^b}.
\end{aligned}
\end{equation}
Because $\hat{\vec{r}}^{(\ell)}(v)=\sum_{u \in N_v} X^{(\ell)}(u,v)$, it follows: 
\begin{equation}\label{eqn:con-exp}
\begin{aligned}
&\E \hspace{-0.5mm} \left[ \hat{\vec{r}}^{(\ell)}(v)\mid \hat{\vec{r}}^{(\ell-1)} \right]
\hspace{-1mm}=\hspace{-2mm}\sum_{u \in N_v} \hspace{-1mm}\E \hspace{-0.5mm}\left[ X^{(\ell)}(u,v) \mid \hat{\vec{r}}^{(\ell-1)} \right]\hspace{-1mm} =\hspace{-2mm}\sum_{u \in N_v} \hspace{-1.5mm}\frac{Y_{\ell}}{Y_{\ell-1}}\hspace{-0.5mm}\cdot \hspace{-0.5mm} \frac{\hat{\vec{r}}^{(\ell-1)}(u)}{d_v^a\cdot d_u^b}. 
\end{aligned}
\end{equation}
In the first equation. we use the linearity of conditional expectation. Furthermore, by the fact: $\E \hspace{-0.5mm}\left[\hat{\vec{r}}^{(\ell)}(v) \right]\hspace{-0.5mm}=\hspace{-0.5mm}\E \hspace{-0.5mm}\left[\E \hspace{-0.5mm}\left[ \hat{\vec{r}}^{(\ell)}(v)\mid \hat{\vec{r}}^{(\ell-1)}\hspace{-0.5mm} \right]\hspace{-0.5mm}\right]$, we have: 
\begin{equation}\label{eqn:proof-expectation}
\begin{aligned}
\E \left[ \hat{\vec{r}}^{(\ell)}(v) \right]=\E \left[\E \left[ \hat{\vec{r}}^{(\ell)}(v)\mid \hat{\vec{r}}^{(\ell-1)} \right]\right]=\hspace{-1mm} \sum_{u \in N_v} \frac{Y_{\ell}}{Y_{\ell-1}} \cdot \frac{\E \left[\hat{\vec{r}}^{(\ell-1)}(u)\right]}{d_v^a\cdot d_u^b}. 
\end{aligned}
\end{equation} 
Based on Equation~\eqref{eqn:proof-expectation}, we can prove the unbiasedness of $\hat{\vec{r}}^{(\ell)}(v)$ by induction. 
Initially, we set $\hat{\vec{r}}^{(0)}=\vec{x}$. Recall that in Definition~\ref{def:RQ-relation}, the residue vector at level $\ell$ is defined as $\vec{r}^{(\ell)}=Y_\ell \cdot \left(\D^{-a}\A\D^{-b} \right)^\ell \cdot \vec{x}$. Hence, $\vec{r}^{(0)}=Y_0 \cdot \vec{x}=\vec{x}$. In the last equality, we use the property of $Y_0$ that $Y_0=\sum_{\ell=0}^\infty w_0=1$. Thus, $\hat{\vec{r}}^{(0)}=\vec{r}^{(0)}=\vec{x}$ and for $\forall u \in V$, $\E \left[ \hat{\vec{r}}^{(0)}(u) \right]=\vec{r}^{(0)}(u)$ holds in the initial stage. Assuming the residue vector is unbiased at the first $\ell-1$ level that $\E \left[ \hat{\vec{r}}^{(i)} \right]=\vec{r}^{(i)}$ for $\forall i \in [0,\ell-1]$, we want to prove the unbiasedness of $\hat{\r}^{(\ell)}$ at level $\ell$. Plugging the assumption $\E\left[ \hat{\r}^{(\ell-1)}\right]=\r^{(\ell-1)}$ into Equation~\eqref{eqn:proof-expectation}, we can derive: 
\begin{equation}\nonumber
\begin{aligned}
\E \left[ \hat{\vec{r}}^{(\ell)}(v) \right]\hspace{-1mm}=\hspace{-2mm}\sum_{u \in N_v}\hspace{-1.5mm} \frac{Y_{\ell}}{Y_{\ell-1}} \hspace{-0.5mm} \cdot \hspace{-0.5mm} \frac{\E \left[\hat{\vec{r}}^{(\ell-1)}(u)\right]}{d_v^a\cdot d_u^b}\hspace{-1mm}=\hspace{-2mm}\sum_{u \in N_v}\hspace{-1mm} \frac{Y_{\ell}}{Y_{\ell-1}}\hspace{-0.5mm} \cdot \hspace{-0.5mm}\frac{\vec{r}^{(\ell-1)}(u)}{d_v^a\cdot d_u^b}=\vec{r}^{(\ell)}(v). 
\end{aligned}
\end{equation} 
In the last equality, we use the recursive formula of the residue vector that $\vec{r}^{(\ell)}\hspace{-0.5mm} =\hspace{-0.5mm} \frac{Y_{\ell}}{Y_{\ell-1}}\hspace{-0.5mm} \cdot \hspace{-0.5mm}\left(\mathbf{D}^{-a}\mathbf{A}\mathbf{D}^{-b} \right) \cdot \vec{r}^{(\ell-1)}$. Thus, the unbiasedness of the estimated residue vector follows. 
	
Next we show that the estimated reserve vector is also unbiased at each level. 
Recall that for $\forall v \in V$ and $\ell \ge 0$,  $\hat{\vec{q}}^{(\ell)}(v)=\frac{w_{\ell}}{Y_{\ell}}\cdot \hat{\vec{r}}^{(\ell)}(v)$. Hence, the expectation of $\hat{\vec{q}}^{(\ell)}(v)$ satisfies: 
\begin{equation}\nonumber
\begin{aligned}
\E \left[ \hat{\vec{q}}^{(\ell)}(v) \right]=\E \left[\frac{w_{\ell}}{Y_{\ell}}\cdot \hat{\vec{r}}^{(\ell)}(v) \right]=\frac{w_{\ell}}{Y_{\ell}}\cdot \vec{r}^{(\ell)}(v)=\vec{q}^{(\ell)}(v). 
\end{aligned}
\end{equation} 
Thus, the estimated reserve vector $\hat{\q}^{\ell}$ at each level $\ell \in [0,L]$ is unbiased and Lemma~\ref{lem:unbiasedness} follows.

\subsection{Proof of Lemma~\ref{lem:variance}}

For any $v \in V$, we first prove that $\Var \left[ \vec{\epi}(v) \right]$ can expressed as: 
\begin{equation}\label{eqn:eq1}
\Var \left[ \vec{\epi}(v) \right]= \sum_{\ell=1}^{L} \E \left[ \Var \left[\vec{z}^{(\ell)}(v)\mid \hat{\vec{r}}^{(0)},...,\hat{\vec{r}}^{(\ell-1)}\right]\right],
\end{equation}
where 
\begin{equation}\label{eqn:def-z}
\vec{z}^{(\ell)}(v)=\sum_{i=0}^{\ell-1} \frac{w_i}{Y_i}\hat{\vec{r}}^{(i)}(v)+ \sum_{i=\ell}^{L}\sum_{u \in V} \frac{w_i}{Y_{\ell}} \cdot \hat{\vec{r}}^{(\ell)}(u)\cdot p_{i-\ell}(u,v).     
\end{equation}
In Equation~\eqref{eqn:def-z}, we use $p_{i-\ell}(u,v)$ to denote the $(i-\ell)$-th (normalized) transition probability from node $u$ to node $v$ that $p_{i-\ell}(u,v)=\vec{e}_v^\top \cdot \left(\D^{-a} \A \D^{-b}\right)^{i-\ell}\cdot \vec{e}_u$. 
Based on Equation~\eqref{eqn:eq1}, we further show:  
\begin{equation}\label{eqn:eq2}
\E \left[ \Var \left[\vec{z}^{(\ell)}(v)\mid \hat{\vec{r}}^{(0)},...,\hat{\vec{r}}^{(\ell-1)}\right]\right] \hspace{-1mm} \le  (L-\ell+1)\cdot \e \vec{\pi}(v),     
\end{equation}
following $\Var\left[\epi(v)\right]\le \sum_{\ell=1}^L \e \vec{\pi}(v)=\frac{L(L+1)}{2} \cdot \e \vpi(v)$.  

\header{\bf Proof of Equation~\eqref{eqn:eq1}. } 
For $\forall v\in V$, we can rewrite $\z^{(L)}(v)$ as
\begin{equation}\nonumber
    \vec{z}^{(L)}(v)=\sum_{i=0}^{L-1} \frac{w_i}{Y_i}\hat{\vec{r}}^{(i)}(v)+ \sum_{u \in V} \frac{w_L}{Y_{L}} \cdot \hat{\vec{r}}^{(L)}(u)\cdot p_{0}(u,v). 
\end{equation}
Note that the $0$-th transition probability satisfies: $p_0(v,v)=1$ and $p_0(u,v)=0$ for each $u \neq v$, following: 
\begin{equation}\nonumber
\z^{(L)}(v)=\sum_{i=0}^L \frac{w_i}{Y_i}\er^{(i)}(v)=\sum_{i=0}^L \eq^{(i)}(v)=\epi(v),       
\end{equation}
where we use the fact:  $\frac{w_i}{Y_i}\er^{(i)}(v)\hspace{-0.5mm}=\hspace{-0.5mm}\eq^{(i)}(v)$ and $\sum_{i=0}^L \eq^{(i)}(v)\hspace{-0.8mm}=\hspace{-0.8mm}\epi(v)$. 
Thus, $\Var[\z^{(L)}(v)]\hspace{-0.8mm}=\hspace{-0.8mm}\Var[\epi(v)]$. The goal to prove Equation~\eqref{eqn:eq1} is equivalent to show: 
\begin{equation}\label{eqn:sameEq1}
\Var[\z^{(L)}(v)]= \sum_{\ell=1}^{L} \E \left[ \Var \left[\vec{z}^{(\ell)}(v)\mid \hat{\vec{r}}^{(0)},...,\hat{\vec{r}}^{(\ell-1)}\right]\right]. 
\end{equation}
In the following, we prove Equation~\eqref{eqn:sameEq1} holds for each node $v\in V$. 
By the total variance law, $\Var[\z^{(L)}(v)]$ can be expressed as
\begin{equation}\label{eqn:var-1}
\begin{aligned}
\Var \left[ \vec{z}^{(L)}(v)\right]
&= \E \left[ \Var \left[\vec{z}^{(L)}(v) \mid \hat{\vec{r}}^{(0)},\hat{\vec{r}}^{(1)},...,\hat{\vec{r}}^{(L-1)}\right]\right]\\
&+\Var \left[ \E \left[\vec{z}^{(L)}(v) \mid \hat{\vec{r}}^{(0)},\hat{\vec{r}}^{(1)},...,\hat{\vec{r}}^{(L-1)}\right] \right]. 
\end{aligned}
\end{equation}
We note that the first term $\E \hspace{-0.5mm}\left[\hspace{-0.5mm} \Var \hspace{-0.5mm}\left[\hspace{-0.5mm}\vec{z}^{(L)}(v) \hspace{-1mm}\mid \hspace{-1mm} \hat{\vec{r}}^{(0)}\hspace{-1.5mm},...,\hat{\vec{r}}^{(L-1)}\hspace{-1mm}\right]\hspace{-0.5mm}\right]$ belongs to the final summation in Equation~\eqref{eqn:sameEq1}. And the second term $\Var \left[ \E \left[\vec{z}^{(L)}(v) \mid \hat{\vec{r}}^{(0)},...,\hat{\vec{r}}^{(L-1)}\right] \right]$ can be further decomposed as a summation of multiple terms in the form of $\E[\Var[.]]$. Specifically, for $\forall \ell \in \{0,1,...,L\}$, 
\begin{equation}\label{eqn:var-total-1}
\begin{aligned}
&\E \left[\vec{z}^{(\ell)}(v) \mid \hat{\vec{r}}^{(0)},...,\hat{\vec{r}}^{(\ell-1)}\right]\\
&\hspace{-1mm}=\hspace{-1mm}\E \left[\left( \sum_{i=0}^{\ell-1} \hspace{-0.5mm}\frac{w_i}{Y_i}\hat{\vec{r}}^{(i)}(v)\hspace{-0.5mm}+\hspace{-0.5mm} \sum_{i=\ell}^{L}\sum_{u \in V}\hspace{-0.5mm} \frac{w_i}{Y_{\ell}}\hspace{-0.5mm} \cdot \hspace{-0.5mm} \hat{\vec{r}}^{(\ell)}(u)\hspace{-0.5mm}\cdot \hspace{-0.5mm} p_{i-\ell}(u,v)\right) \mid \hat{\vec{r}}^{(0)}\hspace{-1mm},...,\hat{\vec{r}}^{(\ell-1)}\right]\\
&\hspace{-1mm}=\hspace{-1.5mm}\sum_{i=0}^{\ell-1} \frac{w_i}{Y_i}\hat{\vec{r}}^{(i)}(v)+ \hspace{-1mm}\sum_{i=\ell}^{L}\sum_{u \in V} \frac{w_i}{Y_{\ell}} \cdot p_{i-\ell}(u,v) \cdot \E \left[\hat{\vec{r}}^{(\ell)}(u) \mid \hat{\vec{r}}^{(0)},...,\hat{\vec{r}}^{(\ell-1)}\right]. 
\end{aligned}
\end{equation}
In the first equality, we plug into the definition formula of $\z^{\ell}(v)$ given in Equation~\eqref{eqn:def-z}. And in the second equality, we use the fact $\E \left[\sum_{i=0}^{\ell-1} \frac{w_i}{Y_i}\hat{\vec{r}}^{(i)}(v) \mid \hat{\vec{r}}^{(0)},...,\hat{\vec{r}}^{(\ell-1)}\right]= \sum_{i=0}^{\ell-1} \frac{w_i}{Y_i}\hat{\vec{r}}^{(i)}(v)$ and the linearity of conditional expectation. Recall that in the proof of Lemma~\ref{lem:unbiasedness}, we have $\E \left[\er^{(\ell)}(u) \mid \er^{(\ell-1)}\right]=\sum_{w \in N_u} \frac{Y_{\ell}}{Y_{\ell-1}} \cdot \frac{\hat{\vec{r}}^{(\ell-1)}(w)}{d_u^a\cdot d_w^b}$ given in Equation~\eqref{eqn:con-exp}. Hence, we can derive: 
\vspace{-1mm}
\begin{equation}\nonumber
\vspace{-1mm}
\begin{aligned}
&\E \left[\er^{(\ell)}(u) \mid \er^{(0)},...,\er^{(\ell-1)}\right]=\E \left[\er^{(\ell)}(u) \mid \er^{(\ell-1)}\right]\\
&=\sum_{w \in N_u} \frac{Y_{\ell}}{Y_{\ell-1}} \cdot \frac{\hat{\vec{r}}^{(\ell-1)}(w)}{d_u^a\cdot d_w^b}
=\sum_{w \in N_u}\frac{Y_{\ell}}{Y_{\ell-1}} \cdot \hat{\vec{r}}^{(\ell-1)}(w)\cdot p_1(w,u),      
\end{aligned}
\end{equation}
In the last equality, we use the definition of the $1$-th transition probability: $p_1(w,u)\hspace{-0.5mm}=\hspace{-1mm}\frac{1}{d_u^a\cdot d_w^b}$. 
Plugging into Equation~\eqref{eqn:var-total-1}, we have: 
\begin{equation}\label{eqn:r-l1}
\begin{aligned}
&\E \left[\vec{z}^{(\ell)}(v) \mid \hat{\vec{r}}^{(0)},...,\hat{\vec{r}}^{(\ell-1)}\right]\\
&=\sum_{i=0}^{\ell-1} \frac{w_i}{Y_i}\hat{\vec{r}}^{(i)}(v)+\sum_{i=\ell}^{L}\sum_{w \in V} \frac{w_i}{Y_{\ell-1}}\hat{\vec{r}}^{(\ell-1)}(w)\cdot p_{i-\ell+1}(w,v), 
\end{aligned}
\end{equation}
where we also use the property of the transition probability that $\sum_{u\in N_w} p_{i-\ell}(u,v)\cdot p_{1}(w,u)=p_{i-\ell+1}(w,v)$. More precisely, 
\begin{equation}\label{eqn:r-l1}
\begin{aligned}
&\sum_{i=\ell}^{L}\sum_{u \in V}\sum_{w \in N_u} \frac{w_i}{Y_{\ell}} \cdot \frac{Y_{\ell}}{Y_{\ell-1}} \cdot \hat{\vec{r}}^{(\ell-1)}(w)\cdot p_{i-\ell}(u,v) \cdot p_1(w,u)\\
&=\sum_{i=\ell}^{L}\sum_{w \in V} \frac{w_i}{Y_{\ell-1}}\hat{\vec{r}}^{(\ell-1)}(w)\cdot p_{i-\ell+1}(w,v). 
\end{aligned}
\end{equation}
Furthermore, we can derive: 
\begin{equation}\label{eqn:var-total-2}
\vspace{-1mm}
\begin{aligned}
&\E \left[\vec{z}^{(\ell)}(v) \mid \hat{\vec{r}}^{(0)},...,\hat{\vec{r}}^{(\ell-1)}\right]\\
&=\sum_{i=0}^{\ell-1} \frac{w_i}{Y_i}\hat{\vec{r}}^{(i)}(v)+\sum_{i=\ell}^{L}\sum_{w \in V} \frac{w_i}{Y_{\ell-1}}\hat{\vec{r}}^{(\ell-1)}(w)\cdot p_{i-\ell+1}(w,v)\\
&=\hspace{-1mm}\sum_{i=0}^{\ell-2} \frac{w_i}{Y_i}\hat{\vec{r}}^{(i)}(v)+ \hspace{-1mm}\sum_{i=\ell-1}^{L}\sum_{u \in V} \frac{w_i}{Y_{\ell}} \hat{\vec{r}}^{(\ell-1)}(u) \cdot p_{i-\ell+1}(u,v)
=\vec{z}^{(\ell-1)}(v), 
\end{aligned}
\end{equation}
In the last equality we use the fact: $\frac{w_i}{Y_i}\hat{\vec{r}}^{(\ell-1)}(v)=\sum_{u\in V} \frac{w_i}{Y_i}\hat{\vec{r}}^{(i)}(v) \cdot p_0(u,v)$ because $p_0(v,v)=1$, and  $p_0(u,v)=0$ if $u \neq v$. 
Plugging Equation\eqref{eqn:var-total-2} into Equation~\eqref{eqn:var-1}, we have
\begin{equation}\nonumber
	\begin{aligned}
		\Var \left[ \vec{z}^{(L)}(v)\right]
		= \E \left[ \Var \left[\vec{z}^{(L)}(v) \mid \hat{\vec{r}}^{(0)},...,\hat{\vec{r}}^{(L-1)}\right]\right]
		\hspace{-0.5mm}+\hspace{-0.5mm}\Var \left[ \vec{z}^{(L-1)}(v) \right]. 
	\end{aligned}
\end{equation}
Iteratively applying the above equation $L$ times, we can  derive: 
\vspace{-2mm}
\begin{equation}\nonumber
\vspace{-2mm}
\Var \left[ \vec{z}^{(L)}(v)\right]
\hspace{-1mm}=\hspace{-1mm}\sum_{\ell=1}^{L}\E \left[\Var \left[\vec{z}^{(\ell)}(v)\mid \hat{\vec{r}}^{(0)},...,\hat{\vec{r}}^{(\ell-1)}\right]\right]+\Var [\z^{(0)}(v)]. 
\end{equation}
Note that $\Var [\z^{(0)}(v)]\hspace{-1mm}=\hspace{-1mm}\Var\left[ \sum_{i=0}^{L}\sum_{u \in V} \frac{w_i}{Y_{0}} \cdot \hat{\vec{r}}^{(0)}(u)\cdot p_{i}(u,v) \right]\hspace{-1mm}=\hspace{-1mm}0$ 
because we initialize $\hat{\vec{r}}^{(0)}\hspace{-0.5mm}=\hspace{-0.5mm}\vec{r}^{(0)}\hspace{-0.5mm}=\hspace{-0.5mm}\vec{x}$ deterministically. Consequently, 
\vspace{-3mm}
\begin{equation}\nonumber
\Var \left[ \vec{z}^{(L)}(v)\right]=\sum_{\ell=1}^{L}\E \left[\Var \left[\vec{z}^{(\ell)}(v)\mid \hat{\vec{r}}^{(0)},...,\hat{\vec{r}}^{(\ell-1)}\right]\right],     
\end{equation}
which follows Equation~\eqref{eqn:sameEq1}, and  Equation~\eqref{eqn:eq1} equivalently. 

\vspace{+1mm}
\header{\bf Proof of Equation~\eqref{eqn:eq2}. } 
In this part, we prove: 
\vspace{-1mm}
\begin{equation}\label{eqn:final-goal}
\vspace{-1mm}
\E \left[\Var \left[\vec{z}^{(\ell)}(v)\mid \hat{\vec{r}}^{(0)},...,\hat{\vec{r}}^{(\ell-1)}\right]\right]\le \e \vec{\pi}(v), 
\end{equation}
holds for $\ell \hspace{-0.8mm}\in \hspace{-0.8mm} \{1,...,L\}$ and $\forall v \hspace{-0.5mm}\in \hspace{-0.5mm}V$. Recall that $\vec{z}^{(\ell)}(v)$ is defined as $ \vec{z}^{(\ell)}(v)\hspace{-0.8mm}=\hspace{-0.8mm}\sum_{i=0}^{\ell-1}\hspace{-0.5mm} \frac{w_i}{Y_i}\hat{\vec{r}}^{(i)}(v)+ \sum_{i=\ell}^{L}\sum_{u \in V} \frac{w_i}{Y_{\ell}} \cdot \hat{\vec{r}}^{(\ell)}(u)\cdot p_{i-\ell}(u,v)$. Thus, we have
\begin{equation}\label{eqn:goal1}
\begin{aligned}
\vspace{-1mm}
&\E \left[\Var \left[\vec{z}^{(\ell)}(v)\mid \hat{\vec{r}}^{(0)},...,\hat{\vec{r}}^{(\ell-1)}\right]\right]=\\
&\E \left[\Var\hspace{-0.5mm} \left[\left(\sum_{i=0}^{\ell-1} \hspace{-1mm}\frac{w_i}{Y_i}\hat{\vec{r}}^{(i)}(v)\hspace{-0.5mm}+\hspace{-1mm}\sum_{i=\ell}^{L}\hspace{-0.5mm}\sum_{u \in V}\hspace{-0.5mm} \frac{w_i}{Y_{\ell}}\hspace{-0.5mm} \cdot \hspace{-0.5mm} \hat{\vec{r}}^{(\ell)}(u)\hspace{-0.5mm}\cdot \hspace{-0.5mm} p_{i-\ell}(u,v)\right)\hspace{-0.5mm} \mid \hspace{-0.5mm} \hat{\vec{r}}^{(0)}\hspace{-1mm},...,\hat{\vec{r}}^{(\ell-1)}\right]\right].    
\end{aligned}
\end{equation}
Recall that in Algorithm~\ref{alg:AGP-RQ}, we introduce subset sampling to guarantee the independence of each   propagation from level $\ell-1$ to level $\ell$. Hence, after the propagation at the first $\ell-1$ level that the estimated residue vector $ \hat{\vec{r}}^{(0)}\hspace{-1mm},...,\hat{\vec{r}}^{(\ell-1)}$ are determined, $X^{(\ell)}(w,u)$ is independent of each $w,u \in V$. Here we use $X^{(\ell)}(w,u)$ to denote the increment of $\hat{\vec{r}}^{(\ell)}(u)$ in the propagation from node $w$ at level $\ell-1$ to $u \in N(w)$ at level $\ell$. Furthermore, with the obtained $\{\hat{\vec{r}}^{(0)}\hspace{-1mm},...,\hat{\vec{r}}^{(\ell-1)}\}$, $\hat{\vec{r}}^{(\ell)}(u)$ is independent of $\forall u \in V$ because $\hat{\vec{r}}^{(\ell)}(u)=\sum_{w \in N(u)} X^{(\ell)}(w,u)$. Thus, Equation~\eqref{eqn:goal1} can be rewritten as: 
\vspace{-1mm}
\begin{equation}\nonumber
\vspace{-1mm}
\begin{aligned}
&\E\left[\Var \left[\sum_{i=0}^{\ell-1} \frac{w_i}{Y_i}\hat{\vec{r}}^{(i)}(v)\mid  \hat{\vec{r}}^{(0)},...,\hat{\vec{r}}^{(\ell-1)}\right]\right]\\
&+\E \left[\Var \left[\sum_{i=\ell}^{L}\sum_{u \in V} \frac{w_i}{Y_{\ell}}\cdot  \hat{\vec{r}}^{(\ell)}(u)\cdot  p_{i-\ell}(u,v)\mid  \hat{\vec{r}}^{(0)},...,\hat{\vec{r}}^{(\ell-1)}\right]\right], 
\end{aligned}
\end{equation}
Note that $\E\left[\Var \left[\sum_{i=0}^{\ell-1} \frac{w_i}{Y_i}\hat{\vec{r}}^{(i)}(v)\mid  \hat{\vec{r}}^{(0)},...,\hat{\vec{r}}^{(\ell-1)}\right]\right]=0$. Thus, we can derive: 
\vspace{-1mm}
\begin{equation}\label{eqn:goal2}
\begin{aligned}
\vspace{-1mm}
&\E \left[\Var \left[\vec{z}^{(\ell)}(v)\mid \hat{\vec{r}}^{(0)},...,\hat{\vec{r}}^{(\ell-1)}\right]\right]\\
&=\E \left[\Var \left[\sum_{i=\ell}^{L}\sum_{u \in V} \frac{w_i}{Y_{\ell}}\cdot  \hat{\vec{r}}^{(\ell)}(u)\cdot  p_{i-\ell}(u,v)\mid  \hat{\vec{r}}^{(0)},...,\hat{\vec{r}}^{(\ell-1)}\right]\right], 
\end{aligned}
\end{equation}
Furthermore, we utilize the fact $\hat{\vec{r}}^{(\ell)}(u)=\sum_{w \in N(u)} X^{(\ell)}(w,u)$ to rewrite Equation~\eqref{eqn:goal2} as: 
\begin{equation}\label{eqn:goal3}
\vspace{-1mm}
\begin{aligned}
&\E \left[ \Var \left[\sum_{i=\ell}^{L}\sum_{u \in V} \sum_{w \in N_u}\hspace{-2mm}\frac{w_i}{Y_{\ell}}\cdot p_{i-\ell}(u,v) \cdot X^{(\ell)}(w,u)\mid  \hat{\vec{r}}^{(0)},...,\hat{\vec{r}}^{(\ell-1)}\right]\right]\\
&\hspace{-2mm}=\E \left[\sum_{u \in V}\hspace{-1mm} \sum_{w \in N(u)}\hspace{-1mm}\left(\sum_{i=\ell}^{L}\frac{w_i}{Y_{\ell}}\cdot p_{i-\ell}(u,v)\right)^2  \hspace{-2mm} \cdot\Var \left[X^{(\ell)}(w,u)\mid  \hat{\vec{r}}^{(0)}\hspace{-1mm},...,\hat{\vec{r}}^{(\ell-1)}\right]\right]. 
\end{aligned}
\end{equation}
According to Algorithm~\ref{alg:AGP-RQ}, if $\frac{Y_{\ell}}{Y_{\ell-1}}\cdot \frac{\hat{\vec{r}}^{(\ell-1)}(w)}{d_u^a\cdot d_w^b}< \e$, $X^{(\ell)}(w,u)$ is increased by $\e$ with the probability $\frac{Y_{\ell}}{\e \cdot Y_{\ell-1}}\cdot \frac{\hat{\vec{r}}^{(\ell-1)}(w)}{d_u^a\cdot d_w^b}$, or $0$ otherwise. Thus, the variance of $X^{(\ell)}(w,u)$ conditioned on the obtained $\hat{\vec{r}}^{(\ell-1)}$ can be bounded as: 
\begin{equation}\label{eqn:tmp1}
\begin{aligned}
&\Var \left[X^{(\ell)}(w,u) \mid \hat{\vec{r}}^{(\ell-1)} \right] 
\le \E \left[ \left(X^{(\ell)}(w,u)  \right)^2 \mid \hat{\vec{r}}^{(\ell-1)} \right]\\
&\hspace{-1mm}=\e^2 \cdot \frac{1}{\e} \cdot \frac{Y_{\ell}}{Y_{\ell-1}} \cdot \frac{\hat{\vec{r}}^{(\ell-1)}(w)}{d_u^a\cdot d_w^b}
=\e \cdot \frac{Y_{\ell}}{Y_{\ell-1}} \cdot \hat{\vec{r}}^{(\ell-1)}(w) \cdot p_1(w,u), 
\end{aligned}
\end{equation}
where $p_1(w,u)=\frac{1}{d_u^a\cdot d_w^b}$ denotes the 1-hop transition probability. By plugging into Equation~\eqref{eqn:goal3}, we have: 
\begin{equation}\label{eqn:goal4}
\begin{aligned}
&\E \left[\Var \left[\vec{z}^{(\ell)}(v)\mid \hat{\vec{r}}^{(0)},...,\hat{\vec{r}}^{(\ell-1)}\right]\right]\\
&=\E \left[\sum_{u \in V} \sum_{w \in N_u}\hspace{-1mm}\left(\sum_{i=\ell}^{L}\frac{w_i}{Y_{\ell}}\hspace{-1mm} \cdot p_{i-\ell}(u,v) \right)^2\hspace{-2mm}\cdot \e \cdot \frac{Y_{\ell}}{Y_{\ell-1}} \cdot \hat{\vec{r}}^{(\ell-1)}(w) \cdot p_1(w,u)\right]\\
\end{aligned}
\end{equation}
Using the fact: $\sum_{i=\ell}^L \frac{w_i}{Y_{\ell}}\hspace{-1mm} \cdot p_{i-\ell}(u,v) \le (L-\ell+1) $ and 
\begin{equation}\nonumber
\begin{aligned}
&\sum_{u\in V}\sum_{w\in N_u}  p_1(w,u)\cdot p_{i-\ell}(u,v)\\
&=\sum_{w\in V}\sum_{u\in N_w} p_1(w,u)\cdot p_{i-\ell}(u,v)=\sum_{w\in V}p_{i-\ell+1}(w,v),       
\end{aligned}
\end{equation}
we can further derive: 
\begin{equation}\label{eqn:goal4}
\begin{aligned}
&\E \left[\Var \left[\vec{z}^{(\ell)}(v)\mid \hat{\vec{r}}^{(0)},...,\hat{\vec{r}}^{(\ell-1)}\right]\right]\\
&\le \E \left[ (L-\ell+1)\cdot \sum_{i=\ell}^{L}\sum_{w \in V} \frac{\e\cdot w_i}{Y_{\ell-1}}\cdot \hat{\vec{r}}^{(\ell-1)}(w)\cdot p_{i-\ell+1}(w,v)\right],   
\end{aligned}
\end{equation}
It follows: 
\begin{equation}\label{eqn:goal5}
\begin{aligned}
&\E \left[\Var \left[\vec{z}^{(\ell)}(v)\mid \hat{\vec{r}}^{(0)},...,\hat{\vec{r}}^{(\ell-1)}\right]\right]\\
&\le (L-\ell+1)\cdot \sum_{i=\ell}^{L}\sum_{w \in V}\hspace{-1mm}\frac{\e\cdot w_i}{Y_{\ell-1}}\cdot \r^{(\ell-1)}(w)\cdot p_{i-\ell+1}(w,v). 
\end{aligned}
\end{equation}
by applying the linearity of expectation and the unbiasedness of $\hat{\vec{r}}^{(\ell-1)}(w)$ proved in Lemma~\ref{lem:unbiasedness}. 
Recall that in Definition~\ref{def:RQ-relation}, we define $\r^{(i)}=Y_i \left(\D^{-a} \A \D^{-b} \right)^i\cdot \vec{x}$. Hence, we can derive: 
\begin{align}\nonumber
\sum_{w\in V}\frac{1}{Y_{\ell-1}}\r^{(\ell-1)}(w)\cdot p_{i-\ell+1}(w,v)=\frac{1}{Y_i}\r^{(i)}(v),     
\end{align}
where we also use the definition of the $(i-\ell+1)$-th transition probability $p_{i-\ell+1}(w,v)=\vec{e}_v^\top \cdot \left(\D^{-a} \A \D^{-b} \right)^{i-\ell+1}\cdot \vec{e}_w$. 
Consequently, 
\begin{align}\nonumber
   \E \left[\Var \left[\vec{z}^{(\ell)}(v)\mid \hat{\vec{r}}^{(0)},...,\hat{\vec{r}}^{(\ell-1)}\right]\right] \le (L-\ell+1) \sum_{i=\ell}^L \frac{\e \cdot w_i}{Y_i}\r^{(i)}(v).  
\end{align}
Because $\q^{(i)}=\frac{w_i}{Y_i}\cdot r^{(i)}$ and $\sum_{i=\ell}^L \q^{(i)} \le \vec{\pi}$, we have: 
\begin{equation}\nonumber
    \E \left[\Var \left[\vec{z}^{(\ell)}(v)\mid \hat{\vec{r}}^{(0)},...,\hat{\vec{r}}^{(\ell-1)}\right]\right]\le \e(L-\ell+1) \cdot  \vec{\pi}(v). 
\end{equation}
Hence, Equation~\eqref{eqn:goal1} holds for $\forall \ell \in [0,L]$ and Lemma~\ref{lem:variance} follows. 

\subsection{Proof of Theorem~\ref{thm:RP-error}}
We first show the expected cost of Algorithm~\ref{alg:AGP-RQ} can be bounded as: 
\begin{equation}\nonumber
\vspace{-1mm}
\E \left[ C_{total} \right] \le \frac{1}{\e}\cdot \sum_{i=1}^{L} \left\| Y_i \cdot \left(\mathbf{D}^{-a}\mathbf{A}\mathbf{D}^{-b} \right)^i \cdot \vec{x} \right\|_1. 
\end{equation}
Then, by setting $\e=O\left(\frac{\delta}{L^2}\right)$, the theorem follows. 
 
For $\forall i \in \{1,...,L\}$ and $\forall u,v \in V$, let $C^{(i)}(u,v)$ denote the cost of the propagation from node $u$ at level $i-1$ to $v \in N(u)$ at level $i$. According to Algorithm~\ref{alg:AGP-RQ}, $C^{(i)}(u,v)=1$ deterministically if $\frac{Y_{i}}{Y_{i-1}} \cdot \frac{\hat{\r}^{(i-1)}(u)}{d_v^a\cdot d_u^b}\ge \e$. Otherwise, $C^{(i)}(u,v)=1$ with the probability $\frac{1}{\e}\cdot \frac{Y_{i}}{Y_{i-1}} \cdot \frac{\hat{\r}^{(i-1)}(u)}{d_v^a\cdot d_u^b}$, following 
\begin{equation}\nonumber
\begin{aligned}
\hspace{-1mm} \E \left[ C^{(i)}(u,v)\mid \hat{\vec{r}}^{(i-1)} \right]
&=\hspace{-1mm} \left\{
\begin{array}{ll}
1, \quad if \quad \frac{Y_{i}}{Y_{i-1}}\cdot \frac{\hat{\vec{r}}^{(i-1)}(u)}{d_v^a\cdot d_u^b}\ge  \e\\
1 \cdot \frac{1}{\e}\cdot \frac{Y_{i}}{Y_{i-1}} \cdot \frac{\hat{\vec{r}}^{(i-1)}(u)}{d_v^a\cdot d_u^b}, \quad otherwise
\end{array} 
\right.\\
&\le \frac{1}{\e}\cdot \frac{Y_{i}}{Y_{i-1}} \cdot \frac{\hat{\vec{r}}^{(i-1)}(u)}{d_v^a\cdot d_u^b}.
\end{aligned}
\end{equation}
Because $\E \left[ C^{(i)}(u,v) \right] \le \E \left[\E \left[ C^{(i)}(u,v)\mid \hat{\vec{r}}^{(i-1)} \right]\right]$, we have
\begin{equation}\nonumber
	\begin{aligned}
	\E \left[ C^{(i)}(u,v) \right]=\frac{1}{\e}\cdot \frac{Y_{i}}{Y_{i-1}}  \cdot \frac{\E \left[ \hat{\vec{r}}^{(i-1)}(u)\right]}{d_v^a\cdot d_u^b}
	=\frac{1}{\e}\cdot \frac{Y_{i}}{Y_{i-1}} \cdot \frac{\vec{r}^{(i-1)}(u)}{d_v^a\cdot d_u^b}, 
	\end{aligned}
\end{equation} 
where we use the unbiasedness of $\hat{\vec{r}}^{(i)}(u)$ shown in Lemma~\ref{lem:unbiasedness}. Let $C_{total}$ denotes the total time cost of Algorithm~\ref{alg:AGP-RQ} that $C_{total}=\sum_{i=1}^{L} \sum_{v\in V} \sum_{u \in N(v)} C^{(i)}(u,v)$. It follows: 
\begin{equation}\nonumber
	\begin{aligned}
	&\E \left[ C_{total} \right]=\sum_{i=1}^{L} \sum_{v\in V} \sum_{u \in N(v)} \E \left[C^{(i)}(u,v)\right]\\
	&\le\sum_{i=1}^{L} \sum_{v\in V} \sum_{u \in N(v)}\frac{1}{\e}\cdot \frac{Y_{i}}{Y_{i-1}} \cdot \frac{\vec{r}^{(i-1)}(u)}{d_v^a\cdot d_u^b}
	=\sum_{i=1}^{L} \sum_{v\in V} \frac{1}{\e}\cdot \vec{r}^{(i)}(v).
	\end{aligned}
\end{equation}
By Definition~\ref{def:RQ-relation}, we have $\vec{r}^{(i)}=Y_i \cdot \left(\mathbf{D}^{-a}\mathbf{A}\mathbf{D}^{-b} \right)^i \cdot \vec{x}$, following  
\begin{equation}\label{eqn:cost1}
\E \left[ C_{total} \right] \le \frac{1}{\e}\cdot \sum_{i=1}^{L} \left\| Y_i \cdot \left(\mathbf{D}^{-a} \mathbf{A} \mathbf{D}^{-b} \right)^i \cdot \vec{x} \right\|_1.
\end{equation}
Recall that in Lemma~\ref{lem:variance}, we prove that the variance $\Var\left[\vec{\epi}(v) \right]$ can be bounded as: $\Var\left[\vec{\epi}(v) \right]\le \frac{L(L+1)}{2}\cdot \e \vec{\pi}(v)$. According to the Chebyshev's Inequality shown in Section~\ref{sec:chebyshev}, we have: 
\begin{equation}\label{eqn:chebypr}
	\begin{aligned}
		\Pr \{ \left|\vec{\pi}(v)-\vec{\epi}(v)\right| \ge \frac{1}{20}\cdot \vec{\pi}(v)\} \le \frac{L(L+1)\cdot \e \vec{\pi}(v)}{\frac{1}{200} \cdot \vec{\pi}^2(v)}=\frac{200L(L+1)\cdot \e}{\vec{\pi}(v)}. 
	\end{aligned}
\end{equation}
For any node $v$ with $\vec{\pi}(v)> \frac{18}{19}\cdot \delta$, when we set $\e=\frac{0.01\cdot \delta}{200L(L+1)}=O \left( \frac{\delta}{L^2}\right)$, Equation~\eqref{eqn:chebypr} can be further expressed as: 
\begin{equation}\nonumber
	\begin{aligned}
		\Pr \{ \left|\vec{\pi}(v)-\vec{\epi}(v)\right| \ge \frac{1}{20}\cdot \vec{\pi}(v)\} \le \frac{0.01\cdot \delta}{\vec{\pi}(v)} < 0.01.  
	\end{aligned}
\end{equation}
Hence, for any node $v$ with $\pi(v)>\frac{18}{19}\cdot \delta$, $\Pr \{ \left|\vec{\pi}(v)-\vec{\epi}(v)\right| \ge \frac{1}{20}\cdot \vec{\pi}(v)\} $ holds with a constant probability ($99\%$), and the relative error in Definition~\ref{def:pro-relative} is also achieved according to Theorem~\ref{thm:prefix}. Combining with Equation~\eqref{eqn:cost1}, the expected cost of Algorithm~\ref{alg:AGP-RQ} satisfies 
\begin{equation}\nonumber
	\begin{aligned}
		\E \left[ C_{total} \right] 
		&\le \frac{1}{\e}\cdot \sum_{i=1}^{L} \left\| Y_i \cdot \left(\mathbf{D}^{-a}\mathbf{A} \mathbf{D}^{-b} \right)^i \cdot \vec{x} \right\|_1\\
		&=O\left(\frac{L^2}{\delta}\cdot \sum_{i=1}^{L} \left\| Y_i \cdot \left(\mathbf{D}^{-a}\mathbf{A} \mathbf{D}^{-b} \right)^i \cdot \vec{x} \right\|_1\right), 
	\end{aligned}
\end{equation}
which follows the theorem.

\subsection{Further Explanations on Theorem~\ref{thm:RP-error}}
According to Theorem~\ref{thm:RP-error}, the expected time cost of Algorithm~\ref{alg:AGP-RQ} is bounded as: 
\begin{equation*}
E[Cost]=\tilde{O}\left(\frac{1}{\delta}\cdot \sum_{i=1}^{L} \left\| Y_i \cdot \left(\mathbf{D}^{-a}\mathbf{A}\mathbf{D}^{-b} \right)^i \cdot \vec{x} \right\|_1\right),
\end{equation*}
where $\tilde{O}$ denotes the Big-Oh notation ignoring the log factors. 
The following lemma shows when 
\begin{align*}
\sum_{i=1}^{L} \left\| Y_i \cdot \left(\mathbf{D}^{-a}\mathbf{A}\mathbf{D}^{-b} \right)^i \cdot \vec{x} \right\|_1\hspace{-1mm}=\tilde{O}\left(\sum_{i=0}^{L} \left\| w_i \cdot \left(\mathbf{D}^{-a}\mathbf{A}\mathbf{D}^{-b} \right)^i \cdot \vec{x} \right\|_1\right),
\end{align*}
the expect time cost of Algorithm~\ref{alg:AGP-RQ} is optimal up to log factors. 

\begin{lemma}\label{lem:nearoptimal}
When we ignore the log factors, the expected time cost of Algorithm~\ref{alg:AGP-RQ} is asymptotically the same as the lower bound of the output size of the graph propagation process if 
\begin{align}\nonumber 
\sum_{i=1}^{L} \left\| Y_i \cdot \left(\mathbf{D}^{-a}\mathbf{A}\mathbf{D}^{-b} \right)^i \cdot \vec{x} \right\|_1\hspace{-1mm}=\tilde{O}\left(\sum_{i=1}^{L} \left\| w_i \cdot \left(\mathbf{D}^{-a}\mathbf{A}\mathbf{D}^{-b} \right)^i \cdot \vec{x} \right\|_1\right), 
\end{align}
where $L=O\left(\log{\frac{1}{\delta}}\right)$. 
\end{lemma}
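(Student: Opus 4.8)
The plan is to prove the lemma in two halves: (a) under the stated hypothesis, Algorithm~\ref{alg:AGP-RQ} runs in expected time $\tilde{O}\left(\frac{1}{\delta}\sum_{i=1}^{L}\|w_i(\mathbf{D}^{-a}\mathbf{A}\mathbf{D}^{-b})^i\vec{x}\|_1\right)$; and (b) this same quantity is, up to logarithmic factors, the worst-case size of the output that any correct algorithm must produce, hence a lower bound on its running time. Half (a) is essentially a substitution into Theorem~\ref{thm:RP-error}; half (b) is a pigeonhole upper bound on the number of heavy nodes together with a matching family of hard instances for each of the proximity models (transition probability, PageRank/PPR, HKPR) for which the claim is asserted.

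For half (a): Theorem~\ref{thm:RP-error} gives $E[\mathrm{Cost}]=O\!\left(\frac{L^2}{\delta}\sum_{i=1}^{L}\|Y_i(\mathbf{D}^{-a}\mathbf{A}\mathbf{D}^{-b})^i\vec{x}\|_1\right)$. Plugging in the hypothesis $\sum_{i=1}^{L}\|Y_i(\mathbf{D}^{-a}\mathbf{A}\mathbf{D}^{-b})^i\vec{x}\|_1=\tilde{O}\!\left(\sum_{i=1}^{L}\|w_i(\mathbf{D}^{-a}\mathbf{A}\mathbf{D}^{-b})^i\vec{x}\|_1\right)$ and absorbing $L=O(\log\frac1\delta)$ into $\tilde{O}$ gives $E[\mathrm{Cost}]=\tilde{O}\!\left(\frac1\delta\sum_{i=1}^{L}\|w_i(\mathbf{D}^{-a}\mathbf{A}\mathbf{D}^{-b})^i\vec{x}\|_1\right)$. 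I will also record the link to $\|\vec{\pi}\|_1$: since $\vec{x}\ge 0$ and $\mathbf{D}^{-a}\mathbf{A}\mathbf{D}^{-b}$ is entrywise non-negative, each term $w_i(\mathbf{D}^{-a}\mathbf{A}\mathbf{D}^{-b})^i\vec{x}$ is non-negative, so $\|\vec{\pi}\|_1=\sum_{i=0}^{\infty}\|w_i(\mathbf{D}^{-a}\mathbf{A}\mathbf{D}^{-b})^i\vec{x}\|_1\ge\sum_{i=1}^{L}\|w_i(\mathbf{D}^{-a}\mathbf{A}\mathbf{D}^{-b})^i\vec{x}\|_1$; conversely, in the listed models the propagation matrix $\mathbf{A}\mathbf{D}^{-1}$ is column-stochastic, so $\|(\mathbf{A}\mathbf{D}^{-1})^i\vec{x}\|_1=1$, the $i=0$ term is $w_0=O(1)$, and the tail $\sum_{i>L}\|w_i(\mathbf{A}\mathbf{D}^{-1})^i\vec{x}\|_1=\sum_{i>L}w_i=O(\delta)$ for $L=O(\log\frac1\delta)$, whence $\sum_{i=1}^{L}\|w_i(\mathbf{A}\mathbf{D}^{-1})^i\vec{x}\|_1=\Theta(\|\vec{\pi}\|_1)$. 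Thus $E[\mathrm{Cost}]=\tilde{O}(\|\vec{\pi}\|_1/\delta)$.

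For half (b): let $N_\delta=|\{v\in V:\vec{\pi}(v)>\delta\}|$. Any algorithm satisfying Definition~\ref{def:pro-relative} must output $\hat{\vec{\pi}}(v)$ with $|\vec{\pi}(v)-\hat{\vec{\pi}}(v)|\le\frac1{10}\vec{\pi}(v)$, hence $\hat{\vec{\pi}}(v)\neq 0$, for every one of these $N_\delta$ nodes, so it must write down at least $N_\delta$ coordinates and therefore runs in time $\Omega(N_\delta)$. By pigeonhole and non-negativity, $N_\delta\cdot\delta\le\|\vec{\pi}\|_1$, so $N_\delta\le\|\vec{\pi}\|_1/\delta=\tilde{O}\!\left(\frac1\delta\sum_{i=1}^{L}\|w_i(\mathbf{D}^{-a}\mathbf{A}\mathbf{D}^{-b})^i\vec{x}\|_1\right)$, which shows Algorithm~\ref{alg:AGP-RQ} cannot be beaten by more than log factors; and to see this bound is not loose I will exhibit, for each proximity model, instances on which the output is genuinely this large — e.g.\ for the $L$-hop transition probability a graph whose source fans out after $L$ steps onto $\Theta(1/\delta)$ vertices each carrying mass $\Theta(\delta)$, and analogously a high-degree source for PPR and a spreading construction for HKPR — so that $N_\delta=\Theta(\|\vec{\pi}\|_1/\delta)=\Theta\!\left(\frac1\delta\sum_{i=1}^{L}\|w_i(\mathbf{D}^{-a}\mathbf{A}\mathbf{D}^{-b})^i\vec{x}\|_1\right)$. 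Combining (a) and (b) proves the lemma.

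The main obstacle is the lower-bound half: $\frac1\delta\sum_{i=1}^{L}\|w_i(\mathbf{D}^{-a}\mathbf{A}\mathbf{D}^{-b})^i\vec{x}\|_1$ is only a \emph{worst-case} output size, not an instance-wise one (on an instance whose mass is concentrated on a few nodes the algorithm may still spend far more than $N_\delta$ time), so one has to commit to the worst-case-over-instances reading and produce genuinely tight constructions for each covered proximity model rather than a single uniform argument; a secondary technical point is upgrading the $\ell_2$ truncation estimates already established (Theorem~\ref{thm:prefix} and the lemma following it) to the $\ell_1$ bounds $\sum_{i>L}\|w_i(\mathbf{D}^{-a}\mathbf{A}\mathbf{D}^{-b})^i\vec{x}\|_1=O(\delta)$ and controlling the $i=0$ term, which is what makes $\sum_{i=1}^{L}\|w_i(\mathbf{D}^{-a}\mathbf{A}\mathbf{D}^{-b})^i\vec{x}\|_1=\Theta(\|\vec{\pi}\|_1)$ hold in exactly the listed models.
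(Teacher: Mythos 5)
Your proposal is correct and takes essentially the same route as the paper: the cost side is exactly the substitution of the hypothesis into Theorem~\ref{thm:RP-error} with $L=O(\log\frac{1}{\delta})$ absorbed into $\tilde{O}$, and the output-size side is the same pigeonhole count of nodes with $\vec{\pi}(v)$ on the order of $\delta$, interpreted in the worst case over instances. The only difference is explicitness: the paper simply asserts that ``in some bad cases'' the number of such nodes can reach $\frac{1}{\delta}\sum_{i}\bigl\| w_i\left(\mathbf{D}^{-a}\mathbf{A}\mathbf{D}^{-b}\right)^i\vec{x}\bigr\|_1$ (and drops the $i=0$ term using $\|w_0\vec{x}\|_1\le 1$), whereas you additionally propose concrete fan-out hard instances and an $\ell_1$ tail estimate, refinements the paper's argument leaves implicit.
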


\begin{proof}
Let $C^*$ denote the output size of the propagation. We first prove that in some ``bad'' cases, the lower bound of $C^*$ is $\Omega \left(\frac{1}{\delta}\cdot \sum_{i=0}^{L} \left\| w_i \cdot \left(\mathbf{D}^{-a}\mathbf{A}\mathbf{D}^{-b} \right)^i \cdot \vec{x} \right\|_1\right)$. According to Theorem~\ref{thm:prefix}, to achieve the relative error in Definition~\ref{def:pro-relative}, we only need to compute the prefix sum $\sum_{i=0}^L w_i \left(\D^{-a} \A \D^{-b}\right)^i \cdot \vec{x}$ with constant relative error and $O(\delta)$ error threshold, where $L=O\left(\log \frac{1}{\delta} \right)$. Hence, by the Pigeonhole principle, the number of node $u$ with $\vec{\pi}(u)=O(\delta)$ can reach $\frac{1}{\delta}\cdot \hspace{-0.5mm}\left\| \sum_{i=0}^{\infty} \hspace{-0.5mm} w_i \hspace{-0.5mm}\cdot \hspace{-0.5mm} \left(\mathbf{D}^{-a}\mathbf{A}\mathbf{D}^{-b}\right)\right\|_1=\frac{1}{\delta}\cdot \hspace{-0.5mm}\sum_{i=0}^{\infty} \hspace{-0.5mm} w_i \hspace{-0.5mm}\cdot \hspace{-0.5mm} \left\| \left(\mathbf{D}^{-a}\mathbf{A}\mathbf{D}^{-b}\right)\right\|_1$, where apply the assumption on the non-negativity of $\vec{x}$. It follows the lower bound of $C^*$ as $\Omega\left(\frac{1}{\delta}\cdot \sum_{i=0}^{L} \left\| w_i \cdot \left(\mathbf{D}^{-a}\mathbf{A}\mathbf{D}^{-b} \right)^i \hspace{-2mm}  \cdot \vec{x} \right\|_1\right)$. Applying the assumptions that $\|\vec{x}\|=1$ and $\sum_{i=0}^\infty w_i=1$ given in Secition~\ref{asm:L}, we have $\left\|w_0 \cdot \vec{x}\right\|_1\le \left\|\vec{x}\right\|_1=1$, and the lower bound of $C^*$ becomes $\Omega\left(\frac{1}{\delta}\cdot \sum_{i=1}^{L} \left\| w_i \cdot \left(\mathbf{D}^{-a}\mathbf{A}\mathbf{D}^{-b} \right)^i \hspace{-2mm}  \cdot \vec{x} \right\|_1\right)$. When $\sum_{i=1}^{L} \hspace{-0.5mm} \left\| Y_i \cdot \left(\mathbf{D}^{-a}\mathbf{A}\mathbf{D}^{-b} \right)^i \cdot \vec{x} \right\|_1\hspace{-2mm}=\hspace{-0.5mm}\tilde{O}\left(\sum_{i=1}^{L} \left\| w_i \cdot \left(\mathbf{D}^{-a}\mathbf{A}\mathbf{D}^{-b} \right)^i \hspace{-2mm} \cdot \vec{x} \right\|_1\right)$, the expected time cost of Algorithm~\ref{alg:AGP-RQ} is asymptotically the same as the lower bound of the output size $C^*$ ignoring the log factors, which follows the lemma. 
\end{proof}

In particular, for all proximity models discussed in this paper, the optimal condition in Lemma~\ref{lem:nearoptimal}:  
\begin{align}\nonumber 
\sum_{i=1}^{L} \left\| Y_i \cdot \left(\mathbf{D}^{-a}\mathbf{A}\mathbf{D}^{-b} \right)^i \cdot \vec{x} \right\|_1\hspace{-1mm}=\tilde{O}\left(\sum_{i=1}^{L} \left\| w_i \cdot \left(\mathbf{D}^{-a}\mathbf{A}\mathbf{D}^{-b} \right)^i \cdot \vec{x} \right\|_1\right). 
\end{align}
is satisfied. Specifically, for PageRank and PPR, we set $w_i=\alpha \cdot (1-\alpha)^i$ and $Y_i=(1-\alpha)^i$, where $\alpha$ is a constant in $(0,1)$. 
Hence,  
\begin{align*}
    \frac{1}{\delta}\cdot \hspace{-1mm} \sum_{i=1}^{L} \left\| Y_i \cdot \left(\mathbf{D}^{-a}\mathbf{A}\mathbf{D}^{-b} \right)^i \cdot \vec{x} \right\|_1 \hspace{-2mm}=\hspace{-0.5mm}O\left(\frac{1}{\delta}\cdot \hspace{-0.5mm} \sum_{i=1}^{L} \left\| w_i \cdot \left(\mathbf{D}^{-a}\mathbf{A}\mathbf{D}^{-b} \right)^i \cdot \vec{x} \right\|_1\right). 
\end{align*}
For HKPR, $w_i=e^{-t}\cdot \frac{t^i}{i!}$ and $Y_i=\sum_{\ell=i}^\infty e^{-t}\cdot \frac{t^i}{i!}=e^{-t}\cdot \frac{(et)^i}{i^i}$. According to the Stirling's formula~\cite{cam1935stirling} that $\left(\frac{e}{i}\right)^i\le \frac{e\sqrt{i}}{i!}\le \frac{e\sqrt{L}}{i!}$, we can derive: $Y_i=O\left(\left( e \sqrt{\log \frac{1}{\delta}}\right)\cdot w_i\right)$ as $L=O(\log \frac{1}{\delta})$, following: 
\begin{align*}
    \frac{1}{\delta}\cdot \hspace{-1mm} \sum_{i=1}^{L} \left\| Y_i \cdot \left(\mathbf{D}^{-a}\mathbf{A}\mathbf{D}^{-b} \right)^i \cdot \vec{x} \right\|_1 \hspace{-2mm}=\hspace{-0.5mm}\tilde{O}\left(\frac{1}{\delta}\cdot \hspace{-0.5mm} \sum_{i=1}^{L} \left\| w_i \cdot \left(\mathbf{D}^{-a}\mathbf{A}\mathbf{D}^{-b} \right)^i \cdot \vec{x} \right\|_1\right). 
\end{align*}
For transition probability, $w_L=1$ and $w_i=0$ if $i \neq L$. Thus, $Y_i=1$ for $\forall i\le L$. Hence, we have: 
\begin{align*}
    &\frac{1}{\delta}\cdot \sum_{i=1}^{L} \left\| Y_i \cdot \left(\mathbf{D}^{-a}\mathbf{A}\mathbf{D}^{-b} \right)^i \cdot \vec{x} \right\|_1 \hspace{-2mm}= \frac{L}{\delta}\cdot \sum_{i=1}^{L} \left\| w_i \cdot \left(\mathbf{D}^{-a}\mathbf{A}\mathbf{D}^{-b} \right)^i \cdot \vec{x} \right\|_1\\
    &=\hspace{-0.5mm}\tilde{O}\left(\frac{1}{\delta}\cdot \hspace{-0.5mm} \sum_{i=1}^{L} \left\| w_i \cdot \left(\mathbf{D}^{-a}\mathbf{A}\mathbf{D}^{-b} \right)^i \cdot \vec{x} \right\|_1\right). 
\end{align*}
In the last equality, we apply the fact that $L=O\left(\log \frac{1}{\delta}\right)$. For Katz, $w_i=\beta^i$ and $Y_i=\frac{\beta^i}{1-\beta}$, where $\beta$ is a constant and is set to be smaller than the reciprocal of the largest eigenvalue of the adjacent matrix $\A$. Similarly, we can derive: 
\begin{align*}
    \frac{1}{\delta}\cdot \hspace{-1mm} \sum_{i=1}^{L} \left\| Y_i \cdot \left(\mathbf{D}^{-a}\mathbf{A}\mathbf{D}^{-b} \right)^i \cdot \vec{x} \right\|_1 \hspace{-2mm}=\hspace{-0.5mm}O\left(\frac{1}{\delta}\cdot \hspace{-0.5mm} \sum_{i=1}^{L} \left\| w_i \cdot \left(\mathbf{D}^{-a}\mathbf{A}\mathbf{D}^{-b} \right)^i \cdot \vec{x} \right\|_1\right). 
\end{align*}
Consequently, in the proximity models of PageRank, PPR, HKPR, transition probability and Katz, by ignoring log factors, the expected time cost of Algorithm~\ref{alg:AGP-RQ} is asymptotically the same as the lower bound of the output size $C^*$, and thus is near optimal up to log factors.

\end{document}